\newcommand{\rmnum}[1]{\romannumeral #1}
\newcommand{\Rmnum}[1]{\expandafter\@slowromancap\romannumeral #1@}
\spnewtheorem{claim}{Claim}{\bfseries}{\rmfamily}
\spnewtheorem{remark}{Remark}{\bfseries}{\rmfamily}
\spnewtheorem{property}{Property}{\bfseries}{\rmfamily}
\begin{document}

\title{A Sub-linear Time Framework for Geometric Optimization with Outliers in High Dimensions}
\author{Hu Ding}
\institute{
 School of Computer Science and Engineering, University of Science and Technology of China \\
 He Fei, China\\
  \email{huding@ustc.edu.cn}\\
}
%
\maketitle

\thispagestyle{empty}


\begin{abstract}
Many real-world problems can be formulated as geometric optimization problems in high dimensions, especially in the fields of machine learning and data mining. Moreover, we often need to take into account of outliers when optimizing the objective functions. However, the presence of outliers could make the problems to be much more challenging than their vanilla versions. In this paper, we study the fundamental minimum enclosing ball (MEB) with outliers problem first; partly inspired by the core-set method from B\u{a}doiu and Clarkson, we propose a sub-linear time bi-criteria approximation algorithm based on two novel techniques, the Uniform-Adaptive Sampling method and Sandwich Lemma. To the best of our knowledge, our result is the first sub-linear time algorithm, which has the sample size ({\em i.e.,} the number of sampled points) independent of both the number of input points $n$ and dimensionality $d$, for MEB with outliers in high dimensions. Furthermore, we observe that these two techniques can be generalized to deal with a broader range of geometric optimization problems with outliers in high dimensions, including flat fitting, $k$-center clustering, and SVM with outliers, and therefore achieve the sub-linear time algorithms for these problems respectively. 
\end{abstract}

%
%
%
  
%

\newpage

\pagestyle{plain}
\pagenumbering{arabic}
\setcounter{page}{1}

\section{Introduction}
\label{sec-intro}
Geometric optimization is a fundamental topic that has been extensively studied in the community of computational geometry~\cite{DBLP:series/lncs/AgarwalEF19}. The {\em minimum enclosing ball (MEB)} problem is one of the most popular geometric optimization problems who has attracted a lot of attentions in past years, where the goal is to compute the smallest ball covering a given set of points in the Euclidean space~\cite{badoiu2003smaller,DBLP:journals/jea/KumarMY03,DBLP:conf/esa/FischerGK03}. Though its formulation is very simple, MEB has a number of applications in real world, such as classification~\cite{DBLP:journals/jmlr/TsangKC05,C10,DBLP:journals/jacm/ClarksonHW12}, preserving privacy~\cite{DBLP:conf/pods/NissimSV16,DBLP:conf/ipsn/FeldmanXZR17}, and quantum cryptography~\cite{gyongyosi2014geometrical}. A more general geometric optimization problem is called {\em flat fitting} that is to compute the smallest slab (centered at a low-dimensional flat) to cover the input data~\cite{DBLP:journals/dcg/Har-PeledV04,panigrahy2004minimum,DBLP:journals/algorithmica/YuAPV08}. Another closely related important topic is the {\em $k$-center clustering} problem, where the goal is to find $k>1$ balls to cover the given input data and minimize the maximum radius of the balls~\cite{gonzalez1985clustering}; the problem has been widely applied to many areas, such as facility location~\cite{charikar2001algorithms} and data analysis~\cite{tan2006introduction}. Moreover, some geometric optimization problems are trying to maximize their size functions. As an example, the well known classification technique {\em support vector machine (SVM)}~\cite{journals/tist/ChangL11} is to maximum the margin separating two differently labeled point sets in the space. 
 
Real-world datasets are often noisy and contain outliers. Moreover, outliers could seriously affect the final optimization results. For example, it is easy to see that even one outlier could make the MEB arbitrarily large. In particular, as the rapid development of machine learning, the field of {\em  adversarial machine learning} concerning about the potential vulnerabilities of the algorithms has attracted a great amount of attentions~\cite{huang2011adversarial,kurakin2016adversarial,biggio2018wild,DBLP:journals/cacm/GoodfellowMP18}. A small set of outliers could be added by some adversarial attacker to make the decision boundary severely deviate and cause unexpected mis-classification~\cite{DBLP:conf/icml/BiggioNL12,DBLP:conf/sp/JagielskiOBLNL18}. Furthermore, the presence of outliers often results in a quite challenging combinatorial optimization problem; as an example, if $m$ of the input $n$ data items are outliers ($m<n$), we have to consider an exponentially large number ${n\choose m}$ of  different possible cases when optimizing the objective function. Therefore, the design of efficient and robust optimization algorithms is urgently needed to meet these challenges.

 \vspace{-0.1in}
 
\subsection{Our Contributions} 

In big data era, the data size could be so large that we cannot even afford to read the whole dataset once. 
In this paper, we consider to develop sub-linear time algorithms for several geometric optimization problems involving outliers. We study the aforementioned MEB with outliers problem first. Informally speaking, given a set of $n$ points in $\mathbb{R}^d$ and a small parameter $\gamma\in(0,1)$, the problem is to find the smallest ball covering at least $(1-\gamma)n$ points from the input. 
We are aware of several existing sub-linear time bi-criteria approximation algorithms based on uniform sampling for MEB and $k$-center clustering with outliers~\cite{alon2003testing,DBLP:conf/focs/HuangJLW18, DBLP:journals/corr/abs-1901-08219}, where the ``bi-criteria'' means that the ball (or the union of the $k$ balls) is allowed to exclude a little more points than the pre-specified number of outliers. Their ideas are based on the theory of VC dimension~\cite{vapnik2015uniform}. But the sample size usually depends on the dimensionality $d$, which is roughly $O\big(\frac{1}{\delta^2\gamma}kd\cdot polylog(\frac{kd}{\delta\gamma})\big)$, if allowing to discard $(1+\delta)\gamma n$ outliers with $\delta\in (0,1)$ ($k=1$ in the complexity for the MEB with outliers problem). 
A detailed overview on previous works is shown in Section~\ref{sec-related}.

Since many optimization problems in practice need to consider high-dimensional datasets, especially in the fields of machine learning and data mining, the above sample size from~\cite{alon2003testing,DBLP:conf/focs/HuangJLW18, DBLP:journals/corr/abs-1901-08219} could be very large. 
Partly inspired by the core-set method from B\u{a}doiu and Clarkson~\cite{badoiu2003smaller} for computing MEB in high dimensions, \textbf{we are wondering that whether it is possible to remove the dependency on $d$ in the sample size for MEB with outliers and other related high dimensional geometric optimization problems}. Given a parameter $\epsilon\in (0,1)$, the method of~\cite{badoiu2003smaller} is a simple greedy algorithm that selects $\frac{2}{\epsilon}$ points (as the core-set) for constructing a $(1+\epsilon)$-approximate MEB, where the resulting radius is at most $1+\epsilon$ times the optimal one. A highlight of their method is that the core-set size $\frac{2}{\epsilon}$ is independent of $d$. However, there are several substantial challenges when applying their method to design sub-linear time algorithm for MEB with outliers. First, we need to implement the ``greedy selection'' step by a random sampling manner, but it is challenging to guarantee the resulting quality especially when the data is mixed with outliers. Second, the random sampling approach often yields a set of candidates for the ball center ({\em e.g.,} we may need to repeatedly run the algorithm multiple times for boosting the success probability, and each time generates a candidate solution), and thus it is necessary to design an efficient strategy to determine which candidate is the best one in sub-linear time. 
 
 To tackle these challenges, we propose two key techniques, the novel ``\textbf{Uniform-Adaptive Sampling}'' method and ``\textbf{Sandwich Lemma}''. Roughly speaking, the Uniform-Adaptive Sampling method can help us to bound the error induced in each ``randomized greedy selection'' step; the Sandwich Lemma enables us to estimate the objective value of each candidate and select the best one in sub-linear time.  
 To the best of our knowledge, our result is the first sub-linear time approximation algorithm for the MEB with outliers problem with sample size being independent of the number of points $n$ and the dimensionality $d$,  which significantly improves the time complexities of existing algorithms.  

Moreover, we observe that our proposed techniques can be used to solve a broader range of geometric optimization problems. We define a general optimization problem called \textbf{ minimum enclosing ``x'' (MEX) with Outliers}, where the ``x'' could be a specified kind of shape ({\em e.g.,} the shape is a ball for MEB with outliers). We prove that it is able to generalize the Uniform-Adaptive Sampling method and  Sandwich Lemma to adapt the shape ``x'', as long as it satisfies several properties. 
 In particular we focus on the MEX with outlier problems including flat fitting, $k$-center clustering, and SVM with outliers; a common characteristic of these problems is that each of them has an iterative algorithm based on greedy selection for its vanilla version (without outliers) that is similar to the MEB algorithm of~\cite{badoiu2003smaller}. Though these problems have been widely studied before, the research in terms of their sub-linear time algorithms is till quite limited. 

\vspace{-0.1in}
\subsection{Related Work}
\label{sec-related}
\vspace{-0.05in}
\textbf{Sub-linear time algorithms.} It has a long history of the research on sub-linear time algorithms design in theory~\cite{rubinfeld2006sublinear,czumaj2006sublinear}. For example, a number of sub-linear time clustering algorithms have been studied in~\cite{DBLP:conf/stoc/Indyk99,meyerson2004k,mishra2001sublinear,czumaj2004sublinear}. Another important application of sub-linear time algorithms is property testing on graphs or probability distributions~\cite{DBLP:journals/jacm/GoldreichGR98}. 

As mentioned before, the uniform sampling idea can be used to design sub-linear time algorithms for the problems of MEB and $k$-center clustering with outliers~\cite{alon2003testing,DBLP:conf/focs/HuangJLW18, DBLP:journals/corr/abs-1901-08219}, but the sample size depends on the dimensionality $d$ that could be very large in practice. Note that Alon {\em et al.}~\cite{alon2003testing} presented another sub-linear time algorithm, which has the sample size independent of $d$,  to test whether an input point set can be covered by a ball with a given radius; however, it is difficult to apply their method to solve the MEB with outliers problem as the algorithm relies on some nice properties of minimum enclosing ball, but these properties are not easy to be utilized when inliers and outliers are mixed. In~\cite{DBLP:journals/corr/abs-1904-03796-2}, we proposed a notion of stability for MEB and developed the sub-linear time MEB algorithms for stable instance. Clarkson {\em et al.}~\cite{DBLP:journals/jacm/ClarksonHW12} developed an elegant perceptron framework for solving several optimization problems arising in machine learning, such as MEB.  For a set of $n$ points in $\mathbb{R}^d$, 
their framework can solve the MEB problem in $\tilde{O}(\frac{n}{\epsilon^2}+\frac{d}{\epsilon})$~\footnote{The asymptotic notation $\tilde{O}(f)=O\big(f\cdot polylog(\frac{nd}{\epsilon})\big)$.} time. Based on a stochastic primal-dual approach, Hazan {\em et al.}~\cite{DBLP:conf/nips/HazanKS11} provided an algorithm for solving the SVM problem in sub-linear time.

%

\textbf{MEB and $k$-center clustering with outliers.}  {\em Core-set} is a popular technique to reduce the time complexities for many optimization problems~\cite{agarwal2005geometric,DBLP:journals/corr/Phillips16}. The core-set idea  has also been used to compute approximate MEB  in high dimensional space~\cite{C10,BHI,DBLP:journals/jea/KumarMY03,panigrahy2004minimum,DBLP:conf/isaac/KerberS13}. B\u{a}doiu and Clarkson \cite{badoiu2003smaller} showed that it is possible to find a core-set of size $\lceil2/\epsilon\rceil$ that yields a $(1+\epsilon)$-approximate MEB. 
There are also several exact and approximation algorithms for MEB that do not rely on core-sets~\cite{DBLP:conf/esa/FischerGK03,DBLP:conf/soda/SahaVZ11,DBLP:conf/icalp/ZhuLY16}. Streaming algorithms for computing MEB were also studied before~\cite{DBLP:journals/algorithmica/AgarwalS15,DBLP:journals/comgeo/ChanP14}.

B\u{a}doiu {\em et al.}~\cite{BHI} extended their core-set idea to the problems of MEB and $k$-center clustering with outliers, and achieved linear time bi-criteria approximation algorithms (if $k$ is assumed to be a constant). 
Several algorithms for the low dimensional MEB with outliers problem have also been developed~\cite{aggarwal1991finding,efrat1994computing,har2005fast,matouvsek1995enclosing}.
A $3$-approximation algorithm for $k$-center clustering with outliers in arbitrary metrics was proposed by Charikar {\em et al.}~\cite{charikar2001algorithms}; Chakrabarty {\em et al.}~\cite{DBLP:conf/icalp/ChakrabartyGK16} proposed a $2$-approximation algorithm for $k$-center clustering with outliers. These algorithms often have high time complexities ({\em e.g.,} $\Omega(n^2d)$). Recently, Ding {\em et al.}~\cite{DBLP:journals/corr/abs-1901-08219} provided a linear time greedy algorithm for $k$-center clustering with outliers based on the idea of the Gonzalez's algorithm~\cite{gonzalez1985clustering}. 
 Furthermore, there exist a number of works on streaming and distributed algorithms, such as~\cite{charikar2003better,mccutchen2008streaming,zarrabistreaming,malkomes2015fast,guha2017distributed,DBLP:journals/corr/abs-1802-09205,li2018distributed}.

\textbf{Flat fitting with outliers.} Given an integer $j\geq 0$ and a set of points in $\mathbb{R}^d$, the flat fitting problem is to find a $j$-dimensional flat having the smallest maximum distance to the input points~\cite{DBLP:conf/focs/Har-PeledV01}; obviously, the MEB problem is a special case with $j=0$.  In high dimensions, Har-Peled and Varadarajan~\cite{DBLP:journals/dcg/Har-PeledV04} provided a linear time algorithm if $j$ is assumed to be fixed; their running time was further reduced by Panigrahy~\cite{panigrahy2004minimum} based on a core-set approach. There also exist several methods considering flat fitting with outliers but only for low-dimensional case~\cite{har2004shape,agarwal2008robust}.

\textbf{SVM with outliers.}  Given two point sets $P_1$ and $P_2$ in $\mathbb{R}^d$, the problem of {\em Support Vector Machine (SVM)} is to find the largest margin to separate $P_1$ and $P_2$ (if they are separable)~\cite{journals/tist/ChangL11}. 
SVM can be formulated as a quadratic programming problem, and a number of efficient techniques have been developed in the past, such as the soft margin SVM~\cite{mach:Cortes+Vapnik:1995,platt99}, $\nu$-SVM~\cite{bb57389,conf/nips/CrispB99}, and Core-SVM~\cite{tkc-cvmfstv-05}. There also exist a number of works on designing robust algorithms of SVM with outliers~\cite{conf/aaai/XuCS06,icml2014c2_suzumura14,ding2015random}.

\section{Definitions and Preliminaries}
\label{sec-pre}

In this paper, we let $|A|$ denote the number of points of a given point set $A$ in $\mathbb{R}^d$, and $||x-y||$ denote the Euclidean distance between two points $x$ and $y$ in $\mathbb{R}^d$. We use $\mathbb{B}(c, r)$ to denote the ball centered at a point $c$ with radius $r>0$. Below, we give several definitions used throughout this paper. 

\begin{definition}[Minimum Enclosing Ball (MEB)]
\label{def-meb}
Given a set $P$ of $n$ points in $\mathbb{R}^d$, the MEB problem is to find a ball with minimum radius to cover all the points in $P$. The resulting ball and its radius are denoted by $MEB(P)$ and $Rad(P)$, respectively.
\end{definition}

A ball $\mathbb{B}(c, r)$ is called a $\lambda$-approximation of $MEB(P)$ for some $\lambda\geq 1$, if the ball covers all points in $P$ and has radius $r\leq \lambda Rad(P)$.

\begin{definition}[MEB with Outliers]
\label{def-outlier}
Given a set $P$ of $n$ points in $\mathbb{R}^d$ and a small parameter $\gamma\in (0,1)$, the MEB with outliers problem is to find the smallest ball that covers $(1-\gamma)n$ points. Namely, the task is to find a subset of $P$ with size $(1-\gamma)n$ such that the resulting MEB is the smallest among all possible choices of the subset. The obtained ball is denoted by $MEB(P, \gamma)$.
\end{definition}

For convenience, we  use $P_{\textnormal{opt}}$ to denote the optimal subset of $P$ with respect to $MEB(P, \gamma)$. That is,
$P_{\textnormal{opt}}=\arg_{Q}\min\Big\{Rad(Q)\mid Q\subset P, \left|Q\right|= (1-\gamma)n\Big\}$. 
From Definition~\ref{def-outlier}, we can see that the main issue is to determine the subset of $P$. Actually, solving such combinatorial problems involving outliers are often 
challenging. 
In Section~\ref{sec-lower}, we will present an example to show that it is impossible to achieve an approximation factor less than $2$ for MEB with outliers, if the time complexity is required to be independent of $n$. Therefore, we consider finding the bi-criteria approximation. Actually, it is also a common way for solving other optimization problems with outliers. 
For example,  Mount {\em et al.}~\cite{DBLP:journals/algorithmica/MountNPSW14} and Meyerson {\em et al.}~\cite{DBLP:journals/ml/MeyersonOP04} studied the bi-criteria approximation algorithms respectively for the problems of linear regression and $k$-median clustering with outliers before.  
%
%
%
%
%
%
%
%
%

\begin{definition}[Bi-criteria Approximation]
\label{def-app}
Given an instance $(P, \gamma)$ for MEB with outliers and two small parameters $0<\epsilon, \delta<1$, a $(1+\epsilon, 1+\delta)$-approximation of $(P, \gamma)$ is a ball that covers at least $\big(1-(1+\delta)\gamma\big)n$ points and has radius at most $(1+\epsilon)Rad(P_{opt})$.
\end{definition}
When both $\epsilon$ and $\delta$ are small, the bi-criteria approximation is very close to the optimal solution with only slight changes on the number of covered points and the radius.

We also extend Definition~\ref{def-outlier} to the problem called \textbf{ minimum enclosing ``x'' (MEX) with Outliers}, where the ``x'' could be any specified shape. To keep the structure of our paper more compact, we state the formal definition of MEX with outliers and the corresponding results in Section~\ref{sec-ext}.

\subsection{A More Careful Analysis for Core-set Construction in~\cite{badoiu2003smaller}}
\label{sec-newanalysis}

Before presenting our main results, we first revisit the core-set construction algorithm for MEB of B\u{a}doiu and Clarkson~\cite{badoiu2003smaller}, since their method will be  used in our algorithms for MEB with outliers. 
 
Let $0<\epsilon<1$. The algorithm in~\cite{badoiu2003smaller} yields an MEB core-set of size $2/\epsilon$ (for convenience, we always assume that $2/\epsilon$ is an integer). However, there is a small issue in their paper. The analysis assumes that the exact MEB of the core-set is computed in each iteration, but in fact 
one may only compute an approximate MEB. Thus, an immediate question is whether the quality is still guaranteed with such a change. Kumar {\em et al.}~\cite{DBLP:journals/jea/KumarMY03} fixed this issue, and showed that computing a $(1+O(\epsilon^2))$-approximate MEB for the core-set in each iteration still guarantees a core-set with  size  $O(1/\epsilon)$, where the hidden constant is larger than $80$. Clarkson~\cite{C10} systematically studied the {\em Frank-Wolfe} algorithm~\cite{frank1956algorithm}, and showed that the greedy core-set construction algorithm of MEB, as a special case of  the  Frank-Wolfe algorithm, yields a core-set with size slightly larger than $ 4/\epsilon$. Note that there exist several other methods yielding even lower core-set size~\cite{coresets1,DBLP:conf/isaac/KerberS13}, but their construction algorithms are more complicated and thus not applicable to our problems. 
Increasing the core-set size from $2/\epsilon$ to $\alpha/\epsilon$ (for some $\alpha>2$) is neglectable in asymptotic analysis. But in Section~\ref{sec-outlier-general}, we will show that it could cause serious issue if outliers exist. Hence, a core-set of size 
 $2/\epsilon$ is still desirable. \textbf{For this purpose, we provide a new analysis which is also interesting in its own right.}

For the sake of completeness, we first briefly introduce the idea of the core-set construction algorithm in~\cite{badoiu2003smaller}.
Given a point set $P\subset\mathbb{R}^d$, the algorithm is a simple iterative procedure. Initially, it selects an arbitrary point from $P$ and places it into an initially empty set $T$. 
In each of the following $2/\epsilon$ iterations, the algorithm updates the center of $MEB(T)$ and adds to $T$ the farthest point from the current center of $MEB(T)$. 
Finally, the center of $MEB(T)$ induces a $(1+\epsilon)$-approximation for $MEB(P)$. The selected set of $2/\epsilon$ points ({\em i.e.}, $T$) is called the core-set of MEB. To ensure the expected improvement in each iteration, ~\cite{badoiu2003smaller} showed that the following two inequalities hold if the algorithm always selects the farthest point to the current center of $MEB(T)$:
\begin{eqnarray}
r_{i+1}  \geq  (1+\epsilon)Rad(P)-L_i; \hspace{0.2in} r_{i+1} \geq  \sqrt{r^2_i+L^2_i},\label{for-bc2}
 \end{eqnarray}
where $r_i$ and $r_{i+1}$ are the radii of $MEB(T)$ in the $i$-th and $(i+1)$-th iterations, respectively, and $L_i$ is the shifting distance of the center of $MEB(T)$ from the $i$-th to $(i+1)$-th iteration.



\begin{wrapfigure}{r}{0.27\textwidth}
  \vspace{-28pt}
\begin{center}
    \includegraphics[width=0.2\textwidth]{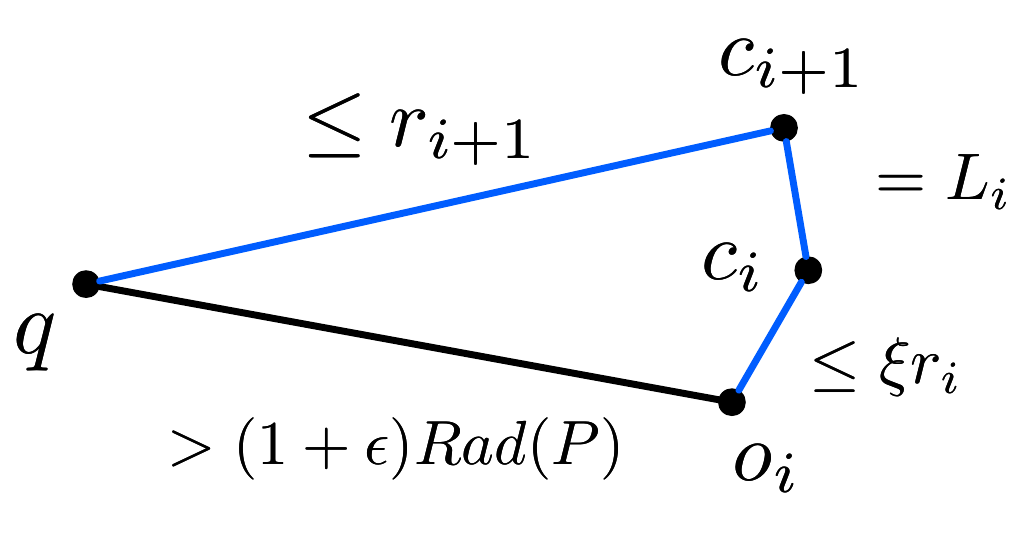}  
    \end{center}
  \vspace{-22pt}
  \caption{An illustration of (\ref{for-bc-1}).}     
   \label{fig-bc}
     \vspace{-25pt}
\end{wrapfigure}

As mentioned earlier, we often compute only an approximate $MEB(T)$ in each iteration. In the $i$-th iteration, we let $c_i$ and $o_i$ denote the centers of the exact and the approximate $MEB(T)$, 
respectively. Suppose that $||c_i-o_i||\leq \xi r_i$, where $\xi\in (0,\frac{\epsilon}{1+\epsilon})$ (we will see why  this bound is needed later). Using another algorithm proposed in~\cite{badoiu2003smaller}, one can obtain the point $o_i$ in $O(\frac{1}{\xi^2}|T|d)$ time. Note that we only compute $o_i$ rather than $c_i$ in each iteration. Hence we can only select the farthest point (say $q$) to $o_i$. If $||q-o_i||\leq (1+\epsilon)Rad(P)$,  we are done and a $(1+\epsilon)$-approximation of MEB is already obtained. Otherwise, we have
\begin{eqnarray}
(1+\epsilon)Rad(P)< ||q-o_i||\leq ||q-c_{i+1}||+||c_{i+1}-c_i||+||c_i-o_i||\leq r_{i+1}+L_i+\xi r_i \label{for-bc-1}
\end{eqnarray}
by  the triangle inequality (see Figure \ref{fig-bc}). In other words, we should replace the first inequality of (\ref{for-bc2}) by $r_{i+1} > (1+\epsilon)Rad(P)-L_i-\xi r_i$. Also, the second inequality of (\ref{for-bc2}) still holds since it depends only on the property of the exact MEB (see Lemma 2.1 in~\cite{badoiu2003smaller}). Thus,  we have 
\begin{eqnarray}
r_{i+1}\geq \max\Big\{\sqrt{r^2_i+L^2_i}, (1+\epsilon)Rad(P)-L_i-\xi r_i\Big\}.\label{for-bc4}
\end{eqnarray}

This leads to the following theorem whose proof can be found 
 in Section~\ref{sec-proof-newbc}.

\begin{theorem}
\label{the-newbc}
In the core-set construction algorithm of~\cite{badoiu2003smaller}, if one computes an approximate MEB for $T$ in each iteration and the resulting center $o_i$ has the distance to $c_i$ less than $\xi r_i= s\frac{\epsilon}{1+\epsilon} r_i$ for some $s\in(0,1)$, the final core-set size is bounded by $z=\frac{2}{(1-s)\epsilon}$. Also, the bound could be arbitrarily close to $2/\epsilon$ when $s$ is small enough. 
\end{theorem}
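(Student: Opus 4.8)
The plan is to absorb the error term $\xi r_i$ into a slightly smaller ``effective'' accuracy parameter and then re-run the analysis of~\cite{badoiu2003smaller} with that parameter. Normalize so that $Rad(P)=1$, so the target radius is $1+\epsilon$, and recall that $r_i\leq 1$ for every $i$ since $T\subseteq P$. The first step is to notice that in (\ref{for-bc4}) one may bound $\xi r_i\leq\xi$, so its first lower bound implies
\begin{equation}
r_{i+1}\ \geq\ (1+\epsilon)-L_i-\xi\ =\ (1+\epsilon')-L_i,\qquad \epsilon':=\epsilon-\xi=\epsilon\cdot\frac{1+\epsilon-s}{1+\epsilon}>0. \label{eq-prp-eff}
\end{equation}
Together with the untouched second bound $r_{i+1}\geq\sqrt{r_i^2+L_i^2}$ of (\ref{for-bc4}) (which depends only on the exact MEB), this is precisely the pair of recurrences on which the core-set size analysis of~\cite{badoiu2003smaller} rests, only with $\epsilon'$ in place of $\epsilon$. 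Since $s<1$ we have $\xi<\frac{\epsilon}{1+\epsilon}$, hence $1+\epsilon'=(1+\epsilon)-\xi>1$, which is all the recurrence needs.

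Next I would run the standard improvement argument on (\ref{eq-prp-eff}) and $r_{i+1}\geq\sqrt{r_i^2+L_i^2}$. For a non-terminating iteration these two lower bounds are monotone in opposite directions in $L_i$ and meet at $L_i=\frac{(1+\epsilon')^2-r_i^2}{2(1+\epsilon')}$, so
\begin{equation}
r_{i+1}\ \geq\ \frac{(1+\epsilon')^2+r_i^2}{2(1+\epsilon')}. \label{eq-prp-rec}
\end{equation}
Setting $\tau_i:=r_i/(1+\epsilon')\in(0,1)$ and $u_i:=1-\tau_i$, inequality (\ref{eq-prp-rec}) becomes $u_{i+1}\leq u_i\big(1-\frac{u_i}{2}\big)$, hence $\frac{1}{u_{i+1}}\geq\frac{1}{u_i}+\frac{1}{2}$. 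I would feed in two a priori facts: first, after the first point is added $T$ contains two points at distance at least $Rad(P)=1$ (the farthest input point from the initial point lies at distance $\geq Rad(P)$, since the ball of that radius around it already covers $P$), so the corresponding radius is $\geq\frac{1}{2}$ and the corresponding $u$ is $\leq 1-\frac{1}{2(1+\epsilon')}$; second, $r_i\leq 1$ forces $u_i\geq\frac{\epsilon'}{1+\epsilon'}$ throughout. Telescoping $\frac{1}{u_{i+1}}\geq\frac{1}{u_i}+\frac{1}{2}$ between these two endpoints bounds the number of iterations $N$; a direct computation gives $N\leq 1+\frac{2(1+\epsilon')}{\epsilon'(1+2\epsilon')}\leq\frac{2}{\epsilon'}$ (the last inequality holds once $\epsilon'$ is below $1/2$, the regime of interest; as in~\cite{badoiu2003smaller} the general case costs only an $O(1)$ additive term, absorbed by rounding $2/\epsilon$ to an integer). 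The returned core-set is $T$ at termination, of size at most $N$.

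It then remains to compare $2/\epsilon'$ with $z=\frac{2}{(1-s)\epsilon}$. Since $\frac{1+\epsilon-s}{1+\epsilon}\geq 1-s$ (equivalently $-s\epsilon\leq 0$), we have $\epsilon'\geq(1-s)\epsilon$, hence $\frac{2}{\epsilon'}\leq\frac{2}{(1-s)\epsilon}=z$, which gives the claimed bound on the core-set size. For the last assertion, as $s\to 0$ we get $\epsilon'\to\epsilon$ and $z\to\frac{2}{\epsilon}$, so the bound approaches $2/\epsilon$.

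The step I expect to be the main obstacle is the very first one: recognizing that $\xi r_i$ should not be carried verbatim through the recurrence (which would entangle the approximation error with the running radius $r_i$ and yield only a messy, weaker bound), but should instead be crudely bounded via $r_i\leq Rad(P)$, turning the additive nuisance into the clean substitution $\epsilon\mapsto\epsilon'=\epsilon-\xi$. After that the remainder is the known analysis; the only further care needed is to keep the ``head-start'' fact $r\geq Rad(P)/2$ in the telescoping sum, since discarding it would lose roughly a factor of two and miss the constant $2$.
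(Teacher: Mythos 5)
Your proof is correct and rests on the same underlying idea as the paper's: absorb the approximation-error term $\xi r_i$ into a modified accuracy parameter and then re-run the B\u{a}doiu--Clarkson telescoping argument. The difference is purely in where the bound $r_i\leq Rad(P)$ is invoked. You apply it immediately, replacing $\xi r_i$ by $\xi$ so that the first recurrence becomes $(1+\epsilon')-L_i$ with $\epsilon'=\epsilon-\xi$, after which the standard analysis goes through verbatim with $\epsilon'$ in place of $\epsilon$. The paper instead carries $\xi r_i$ (equivalently, $\xi\lambda_i$ after normalizing by $(1+\epsilon)Rad(P)$) through the algebra, defines $g(x)=\frac{(1-x)^2-\lambda_i^2}{1-x}$, and uses its monotonicity together with the weaker bound $\lambda_i<1$ to replace $\xi\lambda_i$ by $\xi$. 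Your version is a little cleaner and in fact yields a marginally sharper constant, $\frac{2}{\epsilon-\xi}=\frac{2(1+\epsilon)}{\epsilon(1+\epsilon-s)}$, which dominates $z=\frac{2}{(1-s)\epsilon}$ because $\epsilon'\geq(1-s)\epsilon$; both satisfy the stated bound. Your extra care about the ``head-start'' $r_1\geq Rad(P)/2$ and the restriction $\epsilon'<1/2$ is fine but not really a divergence: the paper implicitly handles the same boundary effects by starting the telescoping at $\gamma_0=1$ (i.e., $\lambda_0=0$) and by the ``assume $2/\epsilon$ integral'' convention, with the same $O(1)$ slack.
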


\begin{remark}
We want to emphasize a simple observation on the above core-set construction procedure, which will be used in our algorithms and analysis later on. The algorithm always selects the farthest point to $o_i$ in each iteration. However, this is actually not necessary. As long as the selected point has distance at least $(1+\epsilon)Rad(P)$, the inequality~(\ref{for-bc-1}) always holds and the following analysis is still true.
 If no such a point exists ({\em i.e.,} $P\setminus \mathbb{B}\big(o_i, (1+\epsilon)Rad(P)\big)=\emptyset$), a $(1+\epsilon)$-approximate MEB ({\em i.e.,} $\mathbb{B}\big(o_i, (1+\epsilon)Rad(P)\big)$) has already been obtained.  
\end{remark}

\section{Two Key Lemmas for Handling Outliers}
\label{sec-twolemma}
In this section, we introduce two important techniques, Lemma~\ref{lem-outlier-general1} and~\ref{lem-outlier-general2}, for solving the problem of MEB with outliers in sub-linear time; the proofs are placed in Section~\ref{sec-proof-outlier-general1} and \ref{sec-proof-outlier-general2}, respectively. The algorithms are presented in Section~\ref{sec-outlier-general}. Moreover, these techniques can be generalized to solve a broader range of optimization problems, and we show the details in Section~\ref{sec-ext}.

To shed some light on our ideas, consider using the core-set construction method in Section~\ref{sec-newanalysis} to compute a bi-criteria $(1+\epsilon, 1+\delta)$-approximation for an instance $(P, \gamma)$ of MEB with outliers. Let $o_i$ be the obtained ball center in the current iteration, and $Q$ be the set of $(1+\delta)\gamma n$ farthest points to $o_i$ from $P$. A key step for updating $o_i$ is finding a point in the set $P_{opt}\cap Q$ (the formal analysis is given in Section~\ref{sec-outlier-general}). Actually, this can be done by performing a random sampling from $Q$. However, it requires to compute the set $Q$ in advance, which takes an $\Omega(nd)$ time complexity. To keep the running time to be sub-linear, we need to find a point from $P_{opt}\cap Q$ by a more sophisticated way. 


Since $P_{opt}$ is mixed with outliers in the set $Q$, simple uniform sampling cannot realize our goal.
To remedy this issue, we propose  a ``two level'' sampling procedure
 which is called ``\textbf{Uniform-Adaptive Sampling}'' (see Lemma~\ref{lem-outlier-general1}). Roughly speaking, we take a random sample $A$ of size $n'$ first ({\em i.e.,} the uniform sampling step), and then randomly select a point from $Q'$, the set of the farthest $\frac{3}{2}(1+\delta)\gamma n'$  points from $A$ to $o_i$ ({\em i.e.,} the adaptive sampling step). According to Lemma~\ref{lem-outlier-general1}, with probability at least $(1-\eta_1) \frac{\delta}{3(1+\delta)}$, the selected point belongs to $P_{opt}\cap Q$; more importantly, the sample size $n'$ is independent of $n$ and $d$.  The key to prove Lemma~\ref{lem-outlier-general1} is to show that the size of the intersection $Q'\cap\big(P_{opt}\cap Q\big)$ is large enough. By setting an appropriate value for $n'$, we can prove a lower bound of $|Q'\cap\big(P_{opt}\cap Q\big)|$. 

\begin{lemma}[Uniform-Adaptive Sampling]
\label{lem-outlier-general1}
Let $\eta_1\in(0,1)$.  If we sample $n'=O(\frac{1}{\delta\gamma}\log\frac{1}{\eta_1})$ points independently and uniformly at random from $P$ and let $Q'$ be the set of farthest $\frac{3}{2}(1+\delta)\gamma n'$ points to $o_i$ from the sample, then, with probability at least $1-\eta_1$, the following holds 
\begin{eqnarray}
\frac{\Big|Q'\cap\big(P_{opt}\cap Q\big)\Big|}{|Q'|}\geq \frac{\delta}{3(1+\delta)}.
\end{eqnarray}
\end{lemma}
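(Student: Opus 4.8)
The plan is to analyze the two-stage sampling by tracking what fraction of the uniform sample $A$ lands inside the ``target set'' $P_{opt}\cap Q$, and then argue that the adaptively-chosen set $Q'$ (the farthest $\frac{3}{2}(1+\delta)\gamma n'$ points of $A$) must capture a large chunk of that target. Let me set up the relevant quantities. Write $R = P_{opt}\cap Q$; by definition $Q$ is the $(1+\delta)\gamma n$ farthest points of $P$ to $o_i$, so $|P\setminus Q| = n - (1+\delta)\gamma n$, and since $|P_{opt}| = (1-\gamma)n$ we get by inclusion-exclusion that $|R| = |P_{opt}| - |P_{opt}\setminus Q| \geq (1-\gamma)n - (1 - (1+\delta)\gamma)n = \delta\gamma n$. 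So the target set $R$ has at least a $\delta\gamma$ fraction of all of $P$, and moreover every point of $R$ is at least as far from $o_i$ as every point of $P\setminus Q$ is close (points in $R\subseteq Q$ are among the farthest).

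First I would handle the uniform sampling step: since $|R|\geq \delta\gamma n$, the expected number of sample points in $R$ is at least $\delta\gamma n'$, and by a Chernoff bound — this is exactly where the choice $n' = O(\frac{1}{\delta\gamma}\log\frac{1}{\eta_1})$ comes from — with probability at least $1-\eta_1$ we have $|A\cap R| \geq \frac{1}{2}\delta\gamma n'$ (say), i.e., at least half the expectation. Actually I would want a slightly sharper statement: I'd show that with probability $\geq 1-\eta_1$, simultaneously (a) $|A\cap R|$ is not too small and (b) the total number of sample points falling in $Q$ (equivalently, the ``rank threshold'' behaves as expected) so that $Q'$, being the top $\frac{3}{2}(1+\delta)\gamma n'$ of $A$, actually contains all of $A\cap Q$ or at least all of $A\cap R$. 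The point of the factor $\frac{3}{2}$ (versus the ``natural'' $(1+\delta)\gamma n'$) is to give slack: by a Chernoff/union bound, the number of sample points landing in $Q$ is at most $\frac{3}{2}(1+\delta)\gamma n'$ with high probability, hence $A\cap Q \subseteq Q'$, and in particular $A\cap R\subseteq Q'$ since $R\subseteq Q$. Combining, $|Q'\cap R| = |A\cap R| \geq \frac{1}{2}\delta\gamma n'$ while $|Q'| = \frac{3}{2}(1+\delta)\gamma n'$, giving the ratio $\geq \frac{\delta}{3(1+\delta)}$.

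The key step I would carry out carefully is the second Chernoff bound controlling $|A\cap Q|$ from above: we need $|A\cap Q|\leq \frac{3}{2}(1+\delta)\gamma n'$ with probability $\geq 1-\eta_1$, and since $\mathbb{E}[|A\cap Q|] = (1+\delta)\gamma n'$, this is a deviation by a constant multiplicative factor $\frac{1}{2}$, which again costs only $n' = \Omega(\frac{1}{\delta\gamma}\log\frac{1}{\eta_1})$ samples — here one should be a little careful that the relevant base rate in the Chernoff exponent is $(1+\delta)\gamma \geq \delta\gamma$, so the same order of $n'$ suffices. I would merge the two failure events (lower bound on $|A\cap R|$, upper bound on $|A\cap Q|$) by a union bound, adjusting constants inside the $O(\cdot)$ so the total failure probability is $\leq \eta_1$.

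The main obstacle — or rather the subtle point that needs care rather than real difficulty — is the interaction between the two sampling levels: $Q'$ is defined by a data-dependent threshold (the $\frac{3}{2}(1+\delta)\gamma n'$-th farthest point of $A$), so one cannot just say ``$Q'$ is a uniform subsample of $Q$''. The resolution is the observation above: conditioned on the high-probability event $|A\cap Q|\leq |Q'|$, we have the deterministic inclusion $A\cap Q\subseteq Q'$ (the top-$|Q'|$ points of $A$ must include every sample point that lies in $Q$, because those are the farthest), so $Q'\cap R = A\cap R$ exactly, and the randomness of $A\cap R$ is just plain binomial. Once that inclusion is in hand, the rest is the two Chernoff bounds and arithmetic. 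I would also double check the edge case where $Q$ or $R$ interacts with ties in distance to $o_i$; assuming distances are generic (or breaking ties consistently) removes this, and it does not affect the counting bounds.
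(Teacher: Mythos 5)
Your proof is correct and follows essentially the same route as the paper: lower-bound $|P_{opt}\cap Q|$ by $\delta\gamma n$, apply Chernoff bounds to get $|A\cap(P_{opt}\cap Q)|\geq\frac{1}{2}\delta\gamma n'$ and $|A\cap Q|<\frac{3}{2}(1+\delta)\gamma n'$, observe that both $A\cap Q$ and $Q'$ are distance-threshold sets so $|A\cap Q|<|Q'|$ forces $A\cap Q\subseteq Q'$, and conclude $Q'\cap(P_{opt}\cap Q)=A\cap(P_{opt}\cap Q)$ using $Q'\subseteq A$ for the reverse inclusion. The identification of the data-dependent-threshold subtlety and its resolution via the nesting of threshold sets is exactly the paper's key step.
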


\begin{wrapfigure}{r}{0.3\textwidth}
  \vspace{-22pt}
\begin{center}
    \includegraphics[width=0.24\textwidth]{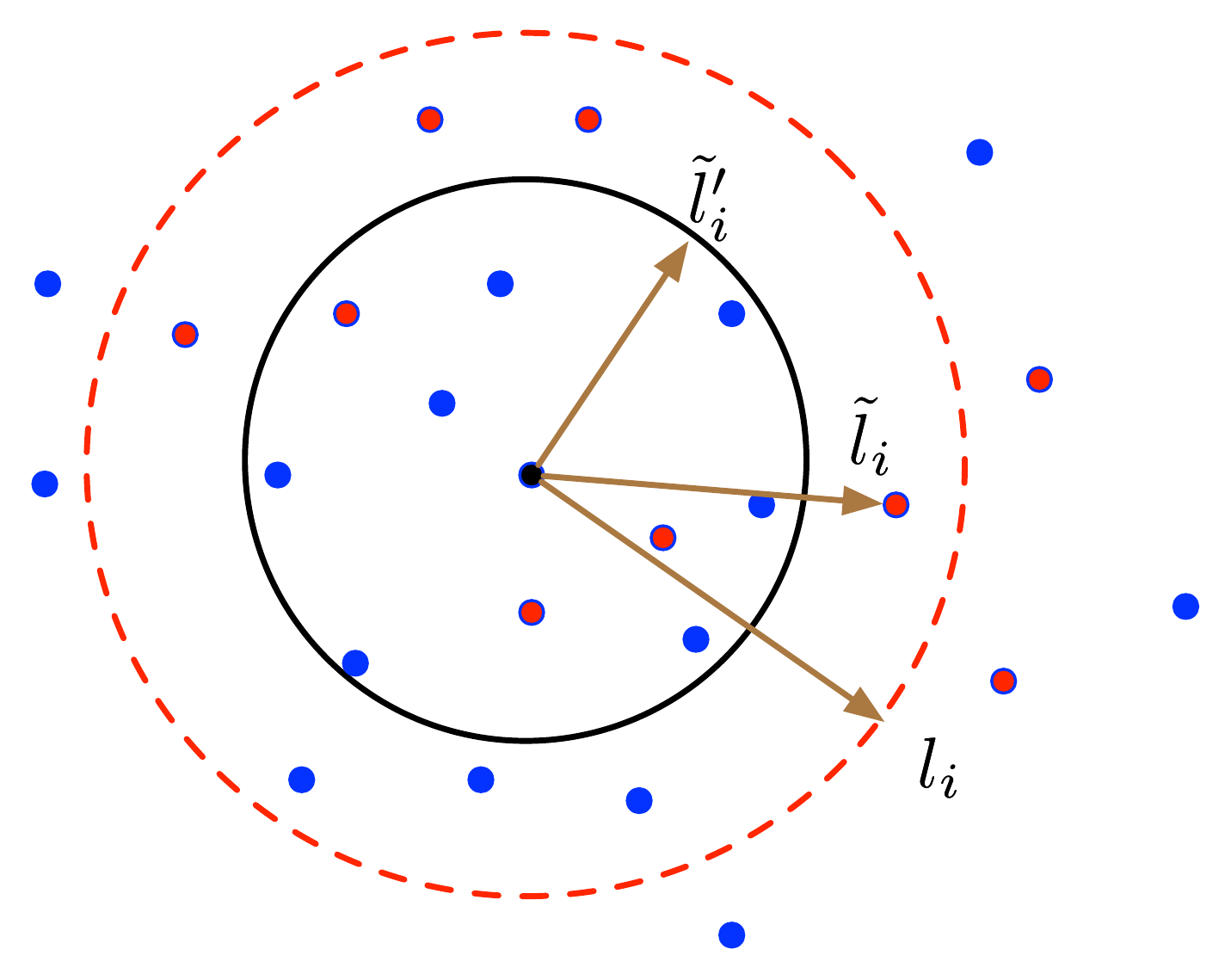}  
    \end{center}
  \vspace{-20pt}
  \caption{The red points are the sampled $n''$ points in Lemma~\ref{lem-outlier-general2}, and the $\big((1+\delta)^2\gamma n''+1\big)$-th farthest point is in the ring bounded by the spheres $\mathbb{B}(o_i, \tilde{l}'_i)$ and $\mathbb{B}(o_i, l_i)$.}     
   \label{fig-sub-outlier}
     \vspace{-22pt}
\end{wrapfigure}

The Uniform-Adaptive Sampling procedure will result in a ``side-effect''. To boost the overall success probability, we have to repeatedly run the algorithm multiple times and each time the algorithm will generate a candidate solution ({\em i.e.,} the ball center). Consequently we have to select the best one as our final solution. With a slight abuse of notation, we still use $o_i$ to denote a candidate ball center; to achieve a $(1+\epsilon, 1+\delta)$-approximation, its radius should be the $\big((1+\delta)\gamma n+1\big)$-th largest distance from $P$ to $o_i$, which is denoted as $l_i$. A straightforward way is to compute the value ``$l_i$'' in linear time for each candidate and return the one having the smallest $l_i$. In this section, we propose the ``\textbf{Sandwich Lemma}'' to estimate $l_i$ in sub-linear time (see Lemma~\ref{lem-outlier-general2}). Let $B$ be the set of $n''$ sampled points from $P$ in Lemma~\ref{lem-outlier-general2}, and $\tilde{l}_i$ be the $\big((1+\delta)^2\gamma n''+1\big)$-th largest distance from  $B$ to $o_i$. The key idea is to prove that the ball $\mathbb{B}(o_i, \tilde{l}_i)$ is ``sandwiched'' by two balls $\mathbb{B}(o_i, \tilde{l}'_i)$ and $\mathbb{B}(o_i, l_i)$, where $\tilde{l}'_i$ is a carefully designed value satisfying (\rmnum{1}) $\tilde{l}'_i\leq \tilde{l}_i\leq l_i$ and (\rmnum{2}) $\Big|P\setminus \mathbb{B}(o_i, \tilde{l}'_i)\Big|\leq (1+O(\delta))\gamma n$. See Figure~\ref{fig-sub-outlier} for an illustration.  These two conditions of $\tilde{l}'_i$ can imply the inequalities (\ref{for-outlier-general2-1}) and (\ref{for-outlier-general2-2}) of Lemma~\ref{lem-outlier-general2}. Further, the inequalities (\ref{for-outlier-general2-1}) and (\ref{for-outlier-general2-2})  jointly imply that $\tilde{l}_i$ is a qualified estimation of $l_i$: if $\mathbb{B}(o_i, l_i)$ is a $(1+\epsilon, 1+\delta)$-approximation, the ball  $\mathbb{B}(o_i, \tilde{l}_i)$  should be a $(1+\epsilon, 1+O(\delta))$-approximation. Similar to Lemma~\ref{lem-outlier-general1}, the sample size $n''$ is also independent of $n$ and $d$.

\begin{lemma} [Sandwich Lemma]
\label{lem-outlier-general2}
Let $\eta_2\in(0,1)$ and assume $\delta<1/3$. If we sample $n''=O\big(\frac{1}{\delta^2\gamma}\log\frac{1}{\eta_2}\big)$ points independently and uniformly at random from $P$ and let $\tilde{l}_i$ be the $\big((1+\delta)^2\gamma n''+1\big)$-th largest distance from the sample to $o_i$, then, with probability $1-\eta_2$, the following holds
\begin{eqnarray}
\tilde{l}_i&\leq& l_i\label{for-outlier-general2-1};\\
\Big|P\setminus \mathbb{B}(o_i, \tilde{l}_i)\Big|&\leq& (1+O(\delta))\gamma n.\label{for-outlier-general2-2}
\end{eqnarray}
\end{lemma}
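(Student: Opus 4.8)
The plan is to sandwich the sample-based quantity $\tilde{l}_i$ between two \emph{population} order statistics and control each side by a Chernoff bound. Fix $o_i$ and sort the points of $P$ by their distance to $o_i$ as $d_1\ge d_2\ge\cdots\ge d_n$; for $r\ge 0$ let $N(r)=|P\setminus\mathbb{B}(o_i,r)|$ be the number of points strictly farther than $r$ from $o_i$, a non-increasing step function. By definition $l_i=d_a$ with $a=(1+\delta)\gamma n+1$, so $N(l_i)\le(1+\delta)\gamma n$. We introduce the auxiliary radius $\tilde{l}'_i:=d_b$ with $b=\lceil(1+\delta)^3\gamma n\rceil$. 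Since $b>a$ we have $\tilde{l}'_i\le l_i$, and since at most $b-1$ points can be strictly farther than the $b$-th farthest, $N(\tilde{l}'_i)\le b-1<(1+\delta)^3\gamma n=(1+O(\delta))\gamma n$ (using $\delta<1/3$). Thus $\tilde{l}'_i$ is exactly the radius advertised before the lemma, and it already meets condition~(\rmnum{2}); what remains is to prove $\tilde{l}'_i\le\tilde{l}_i\le l_i$ with probability $1-\eta_2$.

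For the upper side $\tilde{l}_i\le l_i$, which is~(\ref{for-outlier-general2-1}): this fails only if strictly more than $(1+\delta)^2\gamma n''$ of the sampled points lie outside $\mathbb{B}(o_i,l_i)$. A sampled point lies there independently with probability $N(l_i)/n\le(1+\delta)\gamma$, so the expected count is at most $(1+\delta)\gamma n''$; a multiplicative Chernoff bound with relative deviation $\delta$ (blowing the mean up by a factor $1+\delta$ to reach the threshold $(1+\delta)^2\gamma n''$) shows this bad event has probability at most $\exp\!\big(-\Theta(\delta^2\gamma n'')\big)\le\eta_2/2$ once $n''=\Theta\!\big(\frac{1}{\delta^2\gamma}\log\frac{1}{\eta_2}\big)$.

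For the lower side $\tilde{l}'_i\le\tilde{l}_i$: if instead $\tilde{l}_i<\tilde{l}'_i$ then, since $\tilde{l}_i$ is the $\big((1+\delta)^2\gamma n''+1\big)$-th largest sampled distance, at most $(1+\delta)^2\gamma n''$ sampled points can have distance $\ge\tilde{l}'_i$. But the number of points of $P$ at distance $\ge\tilde{l}'_i=d_b$ is at least $b\ge(1+\delta)^3\gamma n$, so the expected number of such points in the sample is at least $(1+\delta)^3\gamma n''$, which exceeds the threshold $(1+\delta)^2\gamma n''$ by a factor $1+\delta$; the Chernoff lower-tail bound again gives probability at most $\exp\!\big(-\Theta(\delta^2\gamma n'')\big)\le\eta_2/2$ for the same sample size. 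Conditioned on both good events we get $\mathbb{B}(o_i,\tilde{l}'_i)\subseteq\mathbb{B}(o_i,\tilde{l}_i)\subseteq\mathbb{B}(o_i,l_i)$, hence $|P\setminus\mathbb{B}(o_i,\tilde{l}_i)|\le N(\tilde{l}'_i)\le(1+O(\delta))\gamma n$, which is~(\ref{for-outlier-general2-2}); a union bound over the two failure events yields probability $1-\eta_2$ overall.

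The Chernoff estimates, the rounding of the rank indices (we may assume, as the paper does for $2/\epsilon$, that the relevant quantities are integers), and tie-breaking among equal distances are all routine and only affect hidden constants. The one point that genuinely needs care is the choice of the three exponents of $1+\delta$ — population rank $(1+\delta)$ for $l_i$, sample rank $(1+\delta)^2$ for $\tilde{l}_i$, population rank $(1+\delta)^3$ for $\tilde{l}'_i$ — so that each of the two Chernoff bounds is applied with a genuine multiplicative gap of order $\delta$ between the mean and the threshold it must beat; this is exactly why $n''$ must scale like $1/\delta^2$, and the hypothesis $\delta<1/3$ is what keeps the nested factors comparable (so that $(1+\delta)^3=1+O(\delta)$ with a small absolute constant) and the indices $a<b$ well inside $[1,n]$ in the regime $n''<n$ of interest. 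We expect making this exponent bookkeeping consistent, rather than any single inequality, to be the main obstacle.
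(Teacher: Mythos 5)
Your proof is correct and follows essentially the same route as the paper's: introduce an auxiliary population radius $\tilde{l}'_i$ (you take it at rank $\lceil(1+\delta)^3\gamma n\rceil$, the paper at $\frac{(1+\delta)^2}{1-\delta}\gamma n$ — both are $(1+O(\delta))\gamma n$ for $\delta<1/3$), bound the sample count beyond $l_i$ from above and beyond $\tilde{l}'_i$ from below by two Chernoff bounds to get $\tilde{l}'_i\le\tilde{l}_i\le l_i$, and read off (\ref{for-outlier-general2-1}) and (\ref{for-outlier-general2-2}) from the resulting ball nesting. The only differences from the paper's write-up are cosmetic (you phrase the bounds directly in terms of sample counts rather than via $|B\cap Q|$ and $|B\setminus\mathbb{B}(o_i,\tilde{l}'_i)|$), and your explicit accounting of the three exponents of $(1+\delta)$ matches the paper's choice of $\sigma=\delta/2$ in its Chernoff application.
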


\subsection{Proof of Lemma~\ref{lem-outlier-general1}}
\label{sec-proof-outlier-general1}

Let $A$ denote  the set of sampled $n'$ points from $P$. First, we know $|Q|=(1+\delta)\gamma n$ and $|P_{opt}\cap Q|\geq \delta\gamma n$ (since there are at most $\gamma n$ outliers in $Q$). For ease of presentation, let $\lambda=\frac{|P_{opt}\cap Q|}{n}\geq \delta\gamma$. Let $\{x_i\mid 1\leq i\leq n'\}$ be $n'$ independent random variables with $x_i=1$ if the $i$-th sampled point of $A$ belongs to $P_{opt}\cap Q$, and $x_i=0$ otherwise. Thus, $E[x_i]= \lambda$ for each $i$. Let $\sigma$ be a small parameter in $(0,1)$. By using the Chernoff bound, we have  $\textbf{Pr}\Big(\sum^{n'}_{i=1}x_i\notin (1\pm\sigma)\lambda n'\Big)\leq e^{-O(\sigma^2 \lambda n')}$. That is,
\begin{eqnarray}
\textbf{Pr}\Big(|A\cap\big(P_{opt}\cap Q\big)|\in (1\pm\sigma)\lambda n'\Big)\geq 1-e^{-O(\sigma^2 \lambda n')}. \label{for-outlier-e3-1}
\end{eqnarray}
Similarly, we have
\begin{eqnarray}
\textbf{Pr}\Big(|A\cap  Q|\in (1\pm\sigma)(1+\delta)\gamma n'\Big)\geq 1-e^{-O(\sigma^2 (1+\delta)\gamma n')}. \label{for-outlier-e4-1}
\end{eqnarray}
Note that $n'=O(\frac{1}{\delta\gamma}\log\frac{1}{\eta_1})$. By setting $\sigma<1/2$ in (\ref{for-outlier-e3-1}) and (\ref{for-outlier-e4-1}), we have
\begin{eqnarray}
\Big|A\cap\big(P_{opt}\cap Q\big)\Big|> \frac{1}{2}\delta\gamma n' \text{\hspace{0.2in} and \hspace{0.2in}} \Big|A\cap  Q\Big|< \frac{3}{2}(1+\delta)\gamma n' \label{for-outlier-general1-1}
\end{eqnarray}
with probability $1-\eta_1$. 
Note that $Q$ contains all the farthest $(1+\delta)\gamma n$ points to $o_i$. Denote by $l_i$ the $\big((1+\delta)\gamma n+1\big)$-th largest distance from $P$ to $o_i$. Thus 
\begin{eqnarray}
A\cap Q=\{p\in A\mid ||p-o_i||>l_i\}.\label{for-aq}
\end{eqnarray}
Also, since $Q'$ is the set of the farthest $\frac{3}{2}(1+\delta)\gamma n'$ points to $o_i$ from $A$, there exists some $l'_i>0$ such that
\begin{eqnarray}
 Q'=\{p\in A\mid ||p-o_i||>l'_i\}.\label{for-qprime}
\end{eqnarray}
(\ref{for-aq}) and (\ref{for-qprime}) imply that either $(A\cap Q)\subseteq Q'$ or $Q'\subseteq (A\cap Q)$. Since $\big|A\cap  Q\big|< \frac{3}{2}(1+\delta)\gamma n'$ and $|Q'|=\frac{3}{2}(1+\delta)\gamma n'$, we know $\Big(A\cap  Q\Big)\subseteq Q'$. Therefore, 
\begin{eqnarray}
\Big(A\cap\big(P_{opt}\cap Q\big)\Big)=\Big(P_{opt}\cap \big(A\cap  Q\big)\Big)\subseteq Q'.\label{for-v9-1} 
\end{eqnarray}
Obviously, 
\begin{eqnarray}
\Big(A\cap\big(P_{opt}\cap Q\big)\Big)\subseteq  \big(P_{opt}\cap Q\big).\label{for-v9-4} 
\end{eqnarray}
The above (\ref{for-v9-1}) and (\ref{for-v9-4}) together imply  
\begin{eqnarray}
\Big(A\cap\big(P_{opt}\cap Q\big)\Big)\subseteq \Big(Q'\cap\big(P_{opt}\cap Q\big)\Big).\label{for-v9-2} 
\end{eqnarray}
Moreover, since $Q'\subseteq A$, we have 
\begin{eqnarray}
\Big(Q'\cap\big(P_{opt}\cap Q\big)\Big)\subseteq \Big(A\cap\big(P_{opt}\cap Q\big)\Big).\label{for-v9-3} 
\end{eqnarray}
Consequently, (\ref{for-v9-2}) and (\ref{for-v9-3}) together imply $Q'\cap\big(P_{opt}\cap Q\big)=A\cap\big(P_{opt}\cap Q\big)$ and hence 
\begin{eqnarray}
\frac{\Big|Q'\cap\big(P_{opt}\cap Q\big)\Big|}{|Q'|}&=&\frac{\Big|A\cap\big(P_{opt}\cap Q\big)\Big|}{|Q'|}\geq \frac{\delta}{3(1+\delta)},
\end{eqnarray}
where the final inequality comes from the first inequality of (\ref{for-outlier-general1-1}) and the fact $|Q'|=\frac{3}{2}(1+\delta)\gamma n'$.

\subsection{Proof of Lemma~\ref{lem-outlier-general2}}
\label{sec-proof-outlier-general2}

Let $B$ denote  the set of sampled $n''$ points from $P$. For simplicity, let $t=(1+\delta)\gamma n$. Assume $\tilde{l}'_i>0$ is the value such that $\Big|P\setminus \mathbb{B}(o_i, \tilde{l}'_i)\Big|=\frac{(1+\delta)^2}{1-\delta}\gamma n$. Recall that $l_i$ is the $\big(t+1\big)$-th largest distance from $P$ to $o_i$. Since $(1+\delta)\gamma n<\frac{(1+\delta)^2}{1-\delta}\gamma n$, it is easy to know $\tilde{l}'_i\leq l_i$. Below, we aim to prove that the $\big((1+\delta)^2\gamma n''+1\big)$-th farthest point from $B$ is in the ring bounded by the spheres $\mathbb{B}(o_i, \tilde{l}'_i)$ and $\mathbb{B}(o_i, l_i)$ (see Figure~\ref{fig-sub-outlier}).

Again, using the Chernoff bound  (let $\sigma=\delta/2$) and the same idea for proving (\ref{for-outlier-general1-1}), since $|B|=n''=O\big(\frac{1}{\delta^2\gamma}\log\frac{1}{\eta_2}\big)$, we have 
\begin{eqnarray}
\Big|B\setminus \mathbb{B}(o_i, \tilde{l}'_i)\Big|&\geq& (1-\delta/2)\frac{(1+\delta)^2}{1-\delta}\gamma n''>(1-\delta)\frac{(1+\delta)^2}{1-\delta}\gamma n''= (1+\delta)^2 \gamma n'' ;\label{for-outlier-general2-3}\\
\Big|B\cap Q\big|&\leq& (1+\delta/2)\frac{t}{n}n''< (1+\delta)\frac{t}{n}n''=(1+\delta)^2 \gamma n'', \label{for-outlier-general2-4}
\end{eqnarray}
with probability $1-\eta_2$. Suppose that (\ref{for-outlier-general2-3}) and (\ref{for-outlier-general2-4}) both hold. Recall that $\tilde{l}_i$ is the $\big((1+\delta)^2\gamma n''+1\big)$-th largest distance from the sampled points $B$ to $o_i$, so $\Big|B\setminus \mathbb{B}(o_i, \tilde{l}_i)\Big|= (1+\delta)^2 \gamma n'' $, and thus $\tilde{l}_i\geq \tilde{l}'_i$ by (\ref{for-outlier-general2-3}). 

The inequality (\ref{for-outlier-general2-4}) implies that the $\big((1+\delta)^2\gamma n''+1\big)$-th farthest point (say $q_x$) from $B$ to $o_i$ is not in $Q$. Then, we claim that $ \mathbb{B}(o_i, \tilde{l}_i)\cap Q=\emptyset$. Otherwise, let $q_y\in  \mathbb{B}(o_i, \tilde{l}_i)\cap Q$. 
Then we have 
\begin{eqnarray}
||q_y-o_i||\leq\tilde{l}_i=||q_x-o_i||.\label{for-v9-5}
\end{eqnarray} 
Note that $Q$ is the set of farthest $t$ points to $o_i$ of $P$. So $q_x\notin Q$ implies 
\begin{eqnarray}
||q_x-o_i||<\min_{q\in Q}||q-o_i|| \leq ||q_y-o_i||
\end{eqnarray}
which is in contradiction to (\ref{for-v9-5}). Therefore, $ \mathbb{B}(o_i, \tilde{l}_i)\cap Q=\emptyset$. 
Further, since  $\mathbb{B}(o_i, l_i)$ excludes exactly the farthest $t$ points ({\em i.e.}, $Q$), $ \mathbb{B}(o_i, \tilde{l}_i)\cap Q=\emptyset$ implies $\tilde{l}_i\leq l_i$.

Overall, we have $\tilde{l}_i\in[\tilde{l}'_i,l_i]$, {\em i.e.,} the $\big((1+\delta)^2\gamma n''+1\big)$-th farthest point from $B$ locates in the ring bounded by the spheres $\mathbb{B}(o_i, \tilde{l}'_i)$ and $\mathbb{B}(o_i, l_i)$ as shown in Figure~\ref{fig-sub-outlier}. Also, $\tilde{l}_i\geq \tilde{l}'_i$ implies 
\begin{eqnarray}
\Big|P\setminus \mathbb{B}(o_i, \tilde{l}_i)\Big|&\leq& \Big|P\setminus \mathbb{B}(o_i, \tilde{l}'_i)\Big|=\frac{(1+\delta)^2}{1-\delta}\gamma n=(1+O(\delta))\gamma n,
\end{eqnarray}
where the last equality comes from the assumption $\delta<1/3$. So (\ref{for-outlier-general2-1}) and (\ref{for-outlier-general2-2}) are true in Lemma~\ref{lem-outlier-general2}. 

\section{Sub-linear Time Algorithm of MEB with Outliers}
\label{sec-outlier-general}
%
%
%
%
%

Recall the remark following Theorem~\ref{the-newbc}. 
%
As long as the selected point has a distance to the center of $MEB(T)$ larger than $(1+\epsilon)$ times the optimal radius, the expected 
improvement will always be guaranteed. Following this observation, we investigate the following approach. 
%
%
%
%
%
%
%
%
%
%
%
Suppose we run the core-set construction procedure decribed in Theorem~\ref{the-newbc} (we should replace $P$ by $P_{opt}$ in our following analysis). 
In the $i$-th step, we add an arbitrary point from $P_{\textnormal{opt}}\setminus \mathbb{B}(o_i, (1+\epsilon)Rad(P_{opt}))$ to $T$. We know that a $(1+\epsilon)$-approximation is obtained after at most $ \frac{2}{(1-s)\epsilon} $ steps, that is, $P_{opt}\subset \mathbb{B}\big(o_i, (1+\epsilon)Rad(P_{opt})\big)$ for some $i\leq   \frac{2}{(1-s)\epsilon}  $. 


However, we need to solve two key issues in order to implement 
the above approach: \textbf{(\rmnum{1})} how to determine the value of $Rad(P_{opt})$ and \textbf{(\rmnum{2})} how to correctly select a point from $P_{\textnormal{opt}}\setminus\mathbb{B}(o_i, (1+\epsilon)Rad(P_{opt}))$. Actually, we can implicitly avoid the first issue via replacing $(1+\epsilon)Rad(P_{opt})$ by the $t$-th largest distance from the points of $P$ to $o_i$, where we set $t=(1+\delta)\gamma n$ for achieving a $(1+\epsilon, 1+\delta)$-approximation in the following analysis. For the second issue, we randomly select one point from the farthest $t$ points of $P$ to $o_i$, and show that it belongs to $P_{\textnormal{opt}}\setminus\mathbb{B}(o_i, (1+\epsilon)Rad(P_{opt}))$ with a certain probability.



Based on the above idea, we present a sub-linear time  $(1+\epsilon, 1+\delta)$-approximation algorithm in this section. To better  understand the algorithm, we show a linear time algorithm first (Algorithm~\ref{alg-outlier} in Sections~\ref{sec-quality}). Note that B\u{a}doiu {\em et al.}~\cite{BHI} also achieved a $(1+\epsilon, 1+\delta)$-approximation algorithm but with a higher complexity. \textbf{Please see more details in our analysis on the running time at the end of Sections~\ref{sec-quality}}. More importantly, we improve the running time of Algorithm~\ref{alg-outlier} to be sub-linear. 
%
%
%
For this purpose, we need to avoid computing the farthest $t$ points to $o_i$, since this operation will take 
linear time. 
Also, Algorithm~\ref{alg-outlier} generates a set of candidates for the solution and we need to select the best one. This process also costs linear time.  By using the techniques proposed in Section~\ref{sec-twolemma}, we can remedy these issues and develop 
%
a sub-linear time algorithm that has the sample complexity independent of $n$ and $d$, in Section~\ref{sec-oulier-improve}.



  \subsection{A Linear Time Algorithm}
\label{sec-quality}

\renewcommand{\algorithmicrequire}{\textbf{Input:}}
\renewcommand{\algorithmicensure}{\textbf{Output:}}
\begin{algorithm}
   \caption{$(1+\epsilon,1+\delta)$-approximation Algorithm for MEB with Outliers}
   \label{alg-outlier}
\begin{algorithmic}[1]
\REQUIRE A point set $P$ with $n$ points in $\mathbb{R}^{d}$, the fraction of outliers $\gamma\in(0,1)$, and the parameters $0<\epsilon,\delta<1$, $z\in\mathbb{Z}^{+}$.
\STATE Let $t=(1+\delta)\gamma n$.
\STATE Initially, randomly select a point $p\in P$ and let $T=\{p\}$. 
\STATE $i=1$; repeat the following steps until $i>z$:
\begin{enumerate}[(1)]
\item Compute the approximate MEB center $o_i$ of $T$.
\item Let $Q$ be the set of farthest $t$ points from $P$ to $o_i$; denote by $l_i$ the $(t+1)$-th largest distance from $P$ to $o_i$.
\item Randomly select a point $q\in Q$, and add it to $T$. 
\item $i=i+1$.
\end{enumerate}
\STATE Output the ball $\mathbb{B}(o_{\hat{i}}, l_{\hat{i}})$ where $\hat{i}=\arg_{i}\min\{l_i\mid 1\leq i\leq z\}$.
\end{algorithmic}
\end{algorithm}

In this section, we present our linear time $(1+\epsilon,1+\delta)$-approximation algorithm for MEB with outliers (see Algorithm~\ref{alg-outlier}). 
For convenience, denote by $c_i$ and $r_i$ the exact center and radius of $MEB(T)$ respectively in the $i$-th round of Step 3 of Algorithm~\ref{alg-outlier}.
In Step 3(1), we compute the approximate center $o_i$ with a distance to $c_i$ of less than $\xi Rad(T)=s\frac{\epsilon}{1+\epsilon}Rad(T)$, where $s\in(0,1)$ as described in Theorem~\ref{the-newbc} (we will determine the value of $s$ in our following analysis on the running time). The following theorem shows the success probability of Algorithm~\ref{alg-outlier}.

\begin{theorem}
\label{the-outlier}
If the input parameter $z= \frac{2}{(1-s)\epsilon}$ (we assume it is an integer for convenience), then with probability $(1-\gamma)(\frac{\delta}{1+\delta})^{z}$, Algorithm~\ref{alg-outlier} outputs a $(1+\epsilon,1+\delta)$-approximation for the MEB with outliers problem.
\end{theorem}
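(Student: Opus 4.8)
The plan is to analyze a single run of Algorithm~\ref{alg-outlier} and isolate a ``success event'', whose probability is at least $(1-\gamma)\big(\frac{\delta}{1+\delta}\big)^{z}$, on which the output is forced to be a $(1+\epsilon,1+\delta)$-approximation. The success event I would use is $E_0\cap G_1\cap\cdots\cap G_z$, where $E_0$ is the event that the initial point chosen in Step~2 lies in $P_{\textnormal{opt}}$, and $G_i$ is the event that the point $q$ selected from $Q$ in round $i$ of Step~3 lies in $P_{\textnormal{opt}}$. Since the algorithm blindly runs $z$ rounds and then returns the ball with the smallest $l_i$, it suffices to show (i) this event has the claimed probability and (ii) on this event some round produces a good ball.

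First I would lower bound the probability. Because $|P_{\textnormal{opt}}|=(1-\gamma)n$ we have $\Pr[E_0]=1-\gamma$. For $G_i$, note that in round $i$ the set $Q$ has size $t=(1+\delta)\gamma n$ while $P\setminus P_{\textnormal{opt}}$ has only $\gamma n$ points, so $|Q\cap P_{\textnormal{opt}}|\geq (1+\delta)\gamma n-\gamma n=\delta\gamma n$ \emph{regardless} of the (random) center $o_i$; hence, conditioned on the entire history of choices up to round $i$, the uniformly chosen $q$ satisfies $\Pr[G_i\mid\text{history}]\geq \frac{\delta\gamma n}{(1+\delta)\gamma n}=\frac{\delta}{1+\delta}$. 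Multiplying the conditional probabilities yields $\Pr[E_0\cap G_1\cap\cdots\cap G_z]\geq (1-\gamma)\big(\frac{\delta}{1+\delta}\big)^{z}$.

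Next I would show that on $E_0\cap G_1\cap\cdots\cap G_z$ the output is correct, arguing by contradiction. Split each round $i$ into Case A, where $l_i\leq (1+\epsilon)Rad(P_{\textnormal{opt}})$, and Case B, where $l_i>(1+\epsilon)Rad(P_{\textnormal{opt}})$. In Case A the ball $\mathbb{B}(o_i,l_i)$ covers all but the $t$ farthest points of $P$, i.e.\ at least $(1-(1+\delta)\gamma)n$ points, with radius $\leq (1+\epsilon)Rad(P_{\textnormal{opt}})$, so it is already a valid bi-criteria approximation; since the algorithm outputs the ball minimizing $l_{\hat i}$, the output is then also valid. So suppose, for contradiction, that Case B holds in every round $1,\dots,z$. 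Then in each round the selected $q$ satisfies $\|q-o_i\|\geq l_i>(1+\epsilon)Rad(P_{\textnormal{opt}})$, and on $G_i$ it also lies in $P_{\textnormal{opt}}$; combined with $E_0$ (which keeps $T\subseteq P_{\textnormal{opt}}$ throughout, so each $o_i$ is an approximate MEB center of a subset of $P_{\textnormal{opt}}$), this means Algorithm~\ref{alg-outlier} performs a bona fide execution of the core-set construction of Section~\ref{sec-newanalysis} on the point set $P_{\textnormal{opt}}$, adding in each of the $z$ rounds a point of $P_{\textnormal{opt}}$ at distance $>(1+\epsilon)Rad(P_{\textnormal{opt}})$ from the current approximate center. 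But by Theorem~\ref{the-newbc} and the remark following it, within $z=\frac{2}{(1-s)\epsilon}$ rounds some $o_i$ must satisfy $P_{\textnormal{opt}}\subseteq \mathbb{B}\big(o_i,(1+\epsilon)Rad(P_{\textnormal{opt}})\big)$; then $|P\setminus\mathbb{B}(o_i,(1+\epsilon)Rad(P_{\textnormal{opt}}))|\leq \gamma n<t$, which forces $l_i\leq (1+\epsilon)Rad(P_{\textnormal{opt}})$, i.e.\ round $i$ is actually in Case A --- a contradiction. Hence Case A occurs at some round and the output is a $(1+\epsilon,1+\delta)$-approximation.

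The main obstacle I anticipate is cleanly interfacing with Theorem~\ref{the-newbc}: ensuring that the ``randomized greedy selection'' of the algorithm (picking $q$ from the $t$ farthest points of $P$, not from $P_{\textnormal{opt}}$) really verifies the hypotheses the core-set analysis needs --- membership of the added point in $P_{\textnormal{opt}}$ and distance exceeding $(1+\epsilon)Rad(P_{\textnormal{opt}})$ --- simultaneously with $o_i$ being an approximate MEB center of $T\subseteq P_{\textnormal{opt}}$. The case split on $l_i$ is precisely what makes this work: Case B supplies the distance bound, $G_i$ supplies membership, $E_0$ supplies $T\subseteq P_{\textnormal{opt}}$, and any failure of these is absorbed into Case A, where we are already finished. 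Minor bookkeeping points --- that $t=(1+\delta)\gamma n$ and $z$ are taken to be integers, and the harmless off-by-one between ``core-set size'' and ``number of rounds'' --- do not affect the argument.
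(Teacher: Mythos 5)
Your proposal is correct and follows essentially the same route as the paper: your event $E_0\cap G_1\cap\cdots\cap G_z$ and the $\frac{\delta}{1+\delta}$ per-round bound is exactly the paper's Lemma~\ref{lem-outlier-1}, your Case A/Case B dichotomy is the paper's Lemma~\ref{lem-outlier-2}, and your closing contradiction via Theorem~\ref{the-newbc} and the remark after it matches the paper's final argument.
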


Before proving Theorem~\ref{the-outlier}, we present the following two lemmas first.

\begin{lemma}
\label{lem-outlier-1}
With probability $(1-\gamma)(\frac{\delta}{1+\delta})^{z}$, after running $z$ rounds in Step 3, the set $T\subset P_{opt}$ in Algorithm~\ref{alg-outlier}.
\end{lemma}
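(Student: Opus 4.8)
The plan is to induct on the number of rounds $i$ and track the probability that $T$ remains entirely inside $P_{opt}$. First I would handle the base case: in Step 2 we pick $p\in P$ uniformly at random, and since $|P_{opt}|=(1-\gamma)n$, we have $p\in P_{opt}$ with probability exactly $1-\gamma$. Conditioned on this, $T=\{p\}\subset P_{opt}$ before the loop starts.

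Next, for the inductive step, suppose at the beginning of the $i$-th round we have $T\subset P_{opt}$. Then the approximate MEB center $o_i$ computed in Step 3(1) depends only on $T$, hence only on points of $P_{opt}$. The key sub-claim is that the $t$-th largest distance $l_i$ from $P$ to $o_i$ is at most $(1+\epsilon)Rad(P_{opt})$ — indeed, if this failed, then strictly more than $t=(1+\delta)\gamma n > \gamma n$ points of $P$ would lie outside $\mathbb{B}(o_i,(1+\epsilon)Rad(P_{opt}))$, so that ball would cover fewer than $(1-\gamma)n$ points; but $\mathbb{B}(o_i,(1+\epsilon)Rad(P_{opt}))$ does cover all of $P_{opt}$ (here I invoke that $o_i$, being the center of an approximate MEB of $T\subset P_{opt}$ with the $\xi$-bound of Theorem~\ref{the-newbc}, is within distance $\xi r_i$ of the true center $c_i$ of $MEB(T)$, and $Rad(T)\le Rad(P_{opt})$, so by the triangle inequality every point of $P_{opt}$ is within $(1+\epsilon)Rad(P_{opt})$ of $o_i$) — a contradiction since $|P_{opt}|=(1-\gamma)n$. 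Hence $Q$, the set of farthest $t$ points of $P$ to $o_i$, satisfies $Q \supseteq P\setminus\mathbb{B}(o_i,(1+\epsilon)Rad(P_{opt}))$, and consequently $P_{opt}\setminus\mathbb{B}(o_i,(1+\epsilon)Rad(P_{opt}))\subseteq Q\cap P_{opt}$. Since $o_i$ has not yet achieved a $(1+\epsilon)$-approximation of $MEB(P_{opt})$ — otherwise we would already be done — the set $P_{opt}\setminus\mathbb{B}(o_i,(1+\epsilon)Rad(P_{opt}))$ is nonempty, and $|Q\cap P_{opt}| \ge t - \gamma n = (1+\delta)\gamma n - \gamma n = \delta\gamma n$, because $Q$ can contain at most $\gamma n$ of the $\gamma n$ outliers. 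Therefore, when Step 3(3) picks $q\in Q$ uniformly at random, $q\in P_{opt}$ with probability at least $\frac{\delta\gamma n}{t} = \frac{\delta}{1+\delta}$, which keeps $T\subset P_{opt}$.

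Finally I would chain these together: the events (base point in $P_{opt}$) and (each of the $z$ selected points in $P_{opt}$) are handled sequentially, and the conditional success probability at each of the $z$ rounds is at least $\frac{\delta}{1+\delta}$, so the overall probability that $T\subset P_{opt}$ after $z$ rounds is at least $(1-\gamma)\big(\frac{\delta}{1+\delta}\big)^z$, as claimed. I expect the main obstacle to be stating the geometric sub-claim cleanly — specifically, verifying that $P_{opt}\subseteq \mathbb{B}(o_i,(1+\epsilon)Rad(P_{opt}))$ using only the inclusion $T\subset P_{opt}$, the bound $Rad(T)\le Rad(P_{opt})$, and the $\xi$-approximation guarantee on $o_i$, and making sure the constant $\epsilon$ (rather than something slightly larger) comes out, which is exactly where the careful choice $\xi = s\frac{\epsilon}{1+\epsilon}$ from Theorem~\ref{the-newbc} and the remark following it are used; the probability bookkeeping itself is routine.
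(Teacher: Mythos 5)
Your core counting step --- that $|Q\cap P_{opt}|\geq t-\gamma n=\delta\gamma n$ since $Q$ has $(1+\delta)\gamma n$ elements but only $\gamma n$ of them can fail to lie in $P_{opt}$, hence the per-round success probability is at least $\tfrac{\delta}{1+\delta}$ --- is exactly the paper's argument, and it is correct. But the geometric sub-claim you bolt on in front of it, namely that $l_i\leq(1+\epsilon)Rad(P_{opt})$ because $P_{opt}\subseteq\mathbb{B}\big(o_i,(1+\epsilon)Rad(P_{opt})\big)$, is both false in general and unnecessary. The claimed containment does not follow from $T\subset P_{opt}$, $Rad(T)\leq Rad(P_{opt})$, and $\|o_i-c_i\|\leq\xi r_i$: the center $c_i$ of $MEB(T)$ lies in the convex hull of $T$, so it may sit on the boundary of $MEB(P_{opt})$, and a point of $P_{opt}$ on the opposite side can then be nearly $2\,Rad(P_{opt})$ from $o_i$. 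Concretely, in $\mathbb{R}^1$ take $P_{opt}=\{0,2\}$ and $T=\{0\}$; then $o_i=c_i=0$, $r_i=0$, and the point $2$ sits at distance $2\,Rad(P_{opt})$ from $o_i$. Your own next sentence --- that $P_{opt}\setminus\mathbb{B}\big(o_i,(1+\epsilon)Rad(P_{opt})\big)$ is nonempty unless we are already done --- contradicts the sub-claim, which should have been a signal that something was off.

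The underlying confusion is about what Lemma~\ref{lem-outlier-1} is responsible for. It is a pure sampling lemma: it only bounds the probability that all $z+1$ random selections land in $P_{opt}$, and that bound needs nothing beyond $|Q|=(1+\delta)\gamma n$ and $|P\setminus P_{opt}|=\gamma n$; it holds regardless of where $o_i$ sits or how large $l_i$ is. The geometric dichotomy you are reaching for --- either $l_i\leq(1+\epsilon)Rad(P_{opt})$ and we are done, or the newly added point is far from $o_i$ and the radius makes progress --- is the content of Lemma~\ref{lem-outlier-2} and the proof of Theorem~\ref{the-outlier}, not of this lemma. If you delete the geometric detour and keep only the counting, your proof coincides with the paper's.
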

\begin{proof}
Initially, because $|P_{opt}|/|P|=1-\gamma$, the first selected point in Step 2 belongs to $P_{opt}$ with probability $1-\gamma$. In each of the $z$ rounds in Step 3, the selected point belongs to $P_{opt}$ with probability $\frac{\delta}{1+\delta}$, since 
\begin{eqnarray}
\frac{|P_{opt}\cap Q|}{|Q|}&=& 1-\frac{|Q\setminus P_{opt}|}{|Q|}\geq 1-\frac{|P\setminus P_{opt}|}{|Q|}=1-\frac{\gamma n}{(1+\delta)\gamma n}=\frac{\delta}{1+\delta}. \label{for-lem-outlier-1}
\end{eqnarray}
Therefore, $T\subset P_{opt}$ with probability $(1-\gamma)(\frac{\delta}{1+\delta})^{z}$.
\qed\end{proof}

\begin{lemma}
In the $i$-th round of Step 3 for $1\leq i\leq z$, at least one of the following two events happens: (1) $o_i$ is the ball center of a $(1+\epsilon,1+\delta)$-approximation; (2) $r_{i+1}>(1+\epsilon)Rad(P_{opt})-||c_i-c_{i+1}||-\xi r_i$. 
\label{lem-outlier-2}
\end{lemma}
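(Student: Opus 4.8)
The plan is to adapt the core-set analysis of Section~\ref{sec-newanalysis}, specifically inequality~(\ref{for-bc-1}), to the outlier setting, with $P_{opt}$ playing the role of the point set $P$ and the threshold $(1+\epsilon)Rad(P_{opt})$ playing the role of $(1+\epsilon)Rad(P)$. The first step I would take is to observe that the covering requirement of a $(1+\epsilon,1+\delta)$-approximation is automatically met by the ball $\mathbb{B}(o_i,l_i)$: since $l_i$ is by definition the $(t+1)$-th largest distance from $P$ to $o_i$ with $t=(1+\delta)\gamma n$, the closed ball $\mathbb{B}(o_i,l_i)$ contains at least $n-t=\big(1-(1+\delta)\gamma\big)n$ points of $P$. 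Consequently, event~(1) holds precisely when $l_i\le(1+\epsilon)Rad(P_{opt})$, so it suffices to show that whenever $l_i>(1+\epsilon)Rad(P_{opt})$, event~(2) must hold.

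Assume therefore $l_i>(1+\epsilon)Rad(P_{opt})$. The point $q$ selected in Step~3(3) belongs to $Q$, the set of the farthest $t$ points of $P$ to $o_i$, so $\|q-o_i\|\ge l_i>(1+\epsilon)Rad(P_{opt})$. Because $q$ is added to $T$, it lies in the point set whose exact MEB is computed in round $i+1$, hence $\|q-c_{i+1}\|\le r_{i+1}$. Combining this with $\|c_i-o_i\|\le\xi r_i$ and the triangle inequality along $o_i\to c_{i+1}\to c_i$ — exactly mirroring~(\ref{for-bc-1}) — yields
\[
(1+\epsilon)Rad(P_{opt})<\|q-o_i\|\le\|q-c_{i+1}\|+\|c_{i+1}-c_i\|+\|c_i-o_i\|\le r_{i+1}+\|c_i-c_{i+1}\|+\xi r_i,
\]
which is precisely event~(2).

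I do not anticipate any genuine obstacle in this argument; the only point requiring a moment of care is cleanly separating the two criteria of the bi-criteria guarantee, i.e.\ recognizing that the ``cover $\big(1-(1+\delta)\gamma\big)n$ points'' condition is free from the definition of $l_i$, so that the lemma collapses to the radius estimate, which is then settled by the same triangle-inequality computation as in the vanilla core-set analysis. It is worth noting that this reasoning uses neither the randomness in the choice of $q$ nor whether $q\in P_{opt}$; those facts, together with the exact-MEB inequality $r_{i+1}\ge\sqrt{r_i^2+\|c_i-c_{i+1}\|^2}$ (which applies because $T\subseteq P_{opt}$ forces $r_i\le Rad(P_{opt})$), will be invoked only later, in the proof of Theorem~\ref{the-outlier}, to argue that event~(1) cannot fail for every $i\le z$.
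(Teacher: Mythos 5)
Your proof is correct and follows essentially the same route as the paper's: split on whether $l_i\le(1+\epsilon)Rad(P_{opt})$, note that the covering count is automatic from the definition of $l_i$, and otherwise apply the triangle inequality along $o_i\to c_i\to c_{i+1}\to q$ using $\|q-o_i\|\ge l_i$, $\|c_i-o_i\|\le\xi r_i$, and $\|q-c_{i+1}\|\le r_{i+1}$. The closing remark about which facts are deferred to the proof of Theorem~\ref{the-outlier} is accurate commentary but is not part of the lemma's argument.
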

\begin{proof}
If $l_i\le(1+\epsilon)Rad(P_{opt})$, then we are done. That is, the ball $\mathbb{B}(o_i, l_i)$ covers $(1-(1+\delta)\gamma)n$ points with radius $l_i\leq (1+\epsilon)Rad(P_{opt})$ (the first event happens). Otherwise, $l_i> (1+\epsilon)Rad(P_{opt})$ and we consider the second event.
Let $q$ be the point added to $T$ in the $i$-th round. Using the triangle inequality, we have
\begin{eqnarray}
||o_i-q||\leq ||o_i-c_i||+||c_i-c_{i+1}||+|c_{i+1}-q||\leq \xi r_i+||c_i-c_{i+1}||+r_{i+1}.\label{for-lemma5-1}
\end{eqnarray}
Since $l_i> (1+\epsilon)Rad(P_{opt})$ and $q$ lies outside  of $\mathbb{B}(o_i, l_i)$, {\em i.e,} $||o_i-q||\geq l_i> (1+\epsilon)Rad(P_{opt})$, (\ref{for-lemma5-1}) implies that the second event happens and the proof is completed.
\qed\end{proof}

\begin{proof} \textbf{(of Theorem~\ref{the-outlier})}
Suppose that the first event of Lemma~\ref{lem-outlier-2} never happens. As a consequence, we obtain a series of inequalities for each pair of radii $ r_{i+1}$ and $r_{i}$, {\em i.e.,} $r_{i+1}>(1+\epsilon)Rad(P_{opt})-||c_i-c_{i+1}||-\xi r_i$. Assume that $T\subset P_{opt}$ in Lemma~\ref{lem-outlier-1}, {\em i.e.,} each time the algorithm correctly adds a point from $P_{opt}$ to $T$. 
Using the almost identical idea for proving Theorem~\ref{the-newbc} in Section~\ref{sec-newanalysis}, 
we know that a $(1+\epsilon)$-approximate MEB of $P_{opt}$ is obtained after at most $z$ rounds. The success probability directly comes from Lemma~\ref{lem-outlier-1}. Overall, we obtain Theorem~\ref{the-outlier}. 
\qed\end{proof}

Moreover, Theorem~\ref{the-outlier} implies the following corollary.

\begin{corollary}
\label{cor-outlier}
If one repeatedly runs Algorithm~\ref{alg-outlier} $O(\frac{1}{1-\gamma}(1+\frac{1}{\delta})^z)$ times, with constant probability, the algorithm outputs a $(1+\epsilon,1+\delta)$-approximation for the problem of MEB with outliers.
\end{corollary}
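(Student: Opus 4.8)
The plan is the standard probability-amplification argument, together with one short observation about how to commit to a single answer among the repeated runs. First I would rewrite the per-run success probability of Theorem~\ref{the-outlier} in a more convenient form: since $\frac{\delta}{1+\delta}=\big(1+\frac{1}{\delta}\big)^{-1}$, a single execution of Algorithm~\ref{alg-outlier} (with $z=\frac{2}{(1-s)\epsilon}$) outputs a $(1+\epsilon,1+\delta)$-approximation with probability at least $p:=(1-\gamma)\big(1+\frac{1}{\delta}\big)^{-z}$, so that $1/p=\frac{1}{1-\gamma}\big(1+\frac{1}{\delta}\big)^{z}$.

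Next I would run the algorithm $N=\lceil c/p\rceil$ times for a suitable absolute constant $c$. Because the runs are mutually independent, the probability that \emph{every} run fails to produce a $(1+\epsilon,1+\delta)$-approximation is at most $(1-p)^{N}\le e^{-pN}\le e^{-c}$, which is a constant strictly less than $1$ (e.g.\ $c=1$ already gives success probability at least $1-1/e$). Hence with constant probability at least one of the $N$ runs outputs a genuine $(1+\epsilon,1+\delta)$-approximation, and $N=O\big(\frac{1}{1-\gamma}(1+\frac{1}{\delta})^{z}\big)$ matches the claimed bound.

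It then remains to explain how the algorithm settles on a single output, and here I would use the fact that \emph{every} ball ever produced is already feasible on the covering side. Indeed, in Step~3(2) the value $l_i$ is defined as the $(t+1)$-th largest distance from $P$ to $o_i$ with $t=(1+\delta)\gamma n$, so $\mathbb{B}(o_i,l_i)$ excludes at most $t$ points and therefore covers at least $\big(1-(1+\delta)\gamma\big)n$ points; in particular this holds for the ball $\mathbb{B}(o_{\hat i},l_{\hat i})$ returned by each run. Thus among the $N$ returned balls I would keep the one of smallest radius. If at least one run succeeded, the minimum radius over the $N$ outputs is at most $(1+\epsilon)Rad(P_{opt})$, and the corresponding ball still covers at least $\big(1-(1+\delta)\gamma\big)n$ points, so it is a valid $(1+\epsilon,1+\delta)$-approximation. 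Combining this deterministic selection step with the constant-probability event above yields the corollary.

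The argument contains no genuinely hard step; the only point that deserves to be stated explicitly — and the place a careless proof could slip — is that taking the smallest-radius output is safe precisely because the covering constraint is satisfied automatically by the definition of $l_i$, so the minimization over runs cannot buy a radius improvement at the cost of covering too few points.
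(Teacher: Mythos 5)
Your proof is correct and is exactly the standard amplification argument that the paper leaves implicit (the paper states Corollary~\ref{cor-outlier} without proof as an immediate consequence of Theorem~\ref{the-outlier}). The one extra observation you add — that taking the minimum-radius ball across repetitions is safe because each output $\mathbb{B}(o_{\hat i}, l_{\hat i})$ is automatically feasible on the covering side by the definition of $l_i$ — is a genuine and worthwhile clarification that the paper glosses over, and it is precisely the right justification for why the selection step cannot go wrong.
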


\textbf{Running time.} 
In Theorem~\ref{the-outlier}, we set $z=\frac{2}{(1-s)\epsilon}$ and $s\in(0,1)$. To keep $z$ small, according to Theorem~\ref{the-newbc}, we set $s=\frac{\epsilon}{2+\epsilon}$ so that $z=\frac{2}{\epsilon}+1$ (only larger than the lower bound $\frac{2}{\epsilon}$ by $1$). 
For each round of Step 3, we need to compute an approximate center $o_i$ that has a distance to the exact one less than $\xi r_i=s\frac{\epsilon}{1+\epsilon}r_i=O(\epsilon^2)r_i$. Using the algorithm proposed in~\cite{badoiu2003smaller}, this can be done in $O(\frac{1}{\xi^2}|T|d)=O(\frac{1}{\epsilon^5}d)$ time. Also, the set $Q$ can be obtained in linear time by the algorithm in~\cite{blum1973time}. 
In total, the time complexity for obtaining a $(1+\epsilon,1+\delta)$-approximation in Corollary~\ref{cor-outlier} is 
\begin{eqnarray}
O\big(\frac{C}{\epsilon}(n+\frac{1}{\epsilon^5})d\big) \label{for-time-bi},
\end{eqnarray}
where $C=O(\frac{1}{1-\gamma}(1+\frac{1}{\delta})^{\frac{2}{\epsilon}+1})$.  As mentioned before, B\u{a}doiu {\em et al.}~\cite{BHI} also achieved a linear time bi-criteria approximation. However, the hidden constant of their running time is exponential in $\Theta(\frac{1}{\epsilon\mu})$ (where $\mu$ is defined in~\cite{BHI}, and should be $\delta\gamma$ to ensure a $(1+\epsilon, 1+\delta)$-approximation) that is much larger than $\frac{2}{\epsilon}+1$.

%

\subsection{Improvement on Running Time}
\label{sec-oulier-improve}

In this section, we show that the running time of Algorithm~\ref{alg-outlier} can be further improved to be independent of the number of points $n$. First, we observe that it is not necessary to compute the set $Q$ of the farthest $t$ points in Step 3(2) of the algorithm. Actually, as long as the selected point $q$ is part of $ P_{opt}\cap Q$ in Step 3(3), a $(1+\epsilon, 1+\delta)$-approximation is still guaranteed. The Uniform-Adaptive Sampling procedure proposed in Section~\ref{sec-twolemma} can help us to obtain a point $q\in P_{opt}\cap Q$ without computing the set $Q$. Moreover, in Lemma~\ref{lem-outlier-general2}, we show that the radius of each candidate solution can be estimated via random sampling. Overall, we achieve a sub-linear time algorithm (Algorithm~\ref{alg-outlier2}). Following the analysis in Section~\ref{sec-quality}, we set $s=\frac{\epsilon}{2+\epsilon}$ so that $z=\frac{2}{(1-s)\epsilon}=\frac{2}{\epsilon}+1$. We present the results in Theorem~\ref{the-outlier2} and Corollary~\ref{cor-outlier2}. Comparing with Theorem~\ref{the-outlier}, we have an extra $(1-\eta_1)(1-\eta_2)$ in the success probability in Theorem~\ref{the-outlier2}, due to the probabilities from Lemmas~\ref{lem-outlier-general1} and \ref{lem-outlier-general2}.

\renewcommand{\algorithmicrequire}{\textbf{Input:}}
\renewcommand{\algorithmicensure}{\textbf{Output:}}
\begin{algorithm}
\vspace{-0.05in}
   \caption{Sub-linear Time $(1+\epsilon,1+O(\delta))$-approximation Algorithm for MEB with Outliers}
   \label{alg-outlier2}
\begin{algorithmic}[1]
\REQUIRE A point set $P$ with $n$ points in $\mathbb{R}^{d}$, the fraction of outliers $\gamma\in(0,1)$, and the parameters $\epsilon,\eta_1, \eta_2\in (0,1)$, $\delta\in (0,1/3)$, and $z\in\mathbb{Z}^{+}$.
\STATE Let $n'=O(\frac{1}{\delta\gamma}\log\frac{1}{\eta_1})$, $n''=O\big(\frac{1}{\delta^2\gamma}\log\frac{1}{\eta_2}\big)$, $t'=\frac{3}{2}(1+\delta)\gamma n'$, and $t''=(1+\delta)^2\gamma n''$.
\STATE Initially, randomly select a point $p\in P$ and let $T=\{p\}$. 
\STATE $i=1$; repeat the following steps until $j=z$:
\begin{enumerate}[(1)]
\item Compute the approximate MEB center $o_i$ of $T$.

\item Sample $n'$ points uniformly at random from $P$, and let $Q'$ be the set of farthest $t'$ points to $o_i$ from the sample.

\item Randomly select a point $q\in Q'$, and add it to $T$. 

\item Sample $n''$ points uniformly at random from $P$, and let $\tilde{l}_i$ be the $(t''+1)$-th largest distance from the sampled points to $o_i$.
\item $i=i+1$.
\end{enumerate}
\STATE Output the ball $\mathbb{B}(o_{\hat{i}}, \tilde{l}_{\hat{i}})$ where $\hat{i}=\arg_{i}\min\{\tilde{l}_i\mid 1\leq i\leq z\}$.
\end{algorithmic}
\end{algorithm}


\begin{theorem}
\label{the-outlier2}
If the input parameter $z= \frac{2}{\epsilon}+1$, then with probability $(1-\gamma)\big((1-\eta_1)(1-\eta_2)\frac{\delta}{3(1+\delta)}\big)^{z}$, Algorithm~\ref{alg-outlier2} outputs a $(1+\epsilon,1+O(\delta))$-approximation for the problem of MEB with outliers.
\end{theorem}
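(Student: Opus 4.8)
The plan is to combine the correctness analysis of the linear-time Algorithm~\ref{alg-outlier} (Theorem~\ref{the-outlier}) with the two sampling lemmas of Section~\ref{sec-twolemma}, accounting for the extra failure probabilities they introduce. The core observation is that Algorithm~\ref{alg-outlier2} is obtained from Algorithm~\ref{alg-outlier} by replacing two linear-time operations with sampling-based surrogates: Step 3(2)--3(3) now picks a point $q$ from $Q'$ (the farthest $t'$ points of a uniform sample of size $n'$) instead of from $Q$, and Step 3(4) estimates the radius $l_i$ by $\tilde l_i$ computed from a uniform sample of size $n''$. So I would argue that, conditioned on all the sampling steps succeeding, the run of Algorithm~\ref{alg-outlier2} behaves exactly like a successful run of Algorithm~\ref{alg-outlier}, and then multiply in the success probabilities.

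First I would fix an iteration $i$ and invoke Lemma~\ref{lem-outlier-general1}: with probability at least $1-\eta_1$ the set $Q'$ satisfies $|Q'\cap(P_{opt}\cap Q)|/|Q'|\geq \delta/(3(1+\delta))$, so the point $q$ chosen uniformly from $Q'$ in Step 3(3) lies in $P_{opt}\cap Q$ with probability at least $(1-\eta_1)\frac{\delta}{3(1+\delta)}$. The crucial point is that membership $q\in P_{opt}\cap Q$ is exactly what the proof of Lemma~\ref{lem-outlier-2} and Theorem~\ref{the-outlier} needs: $q\in Q$ gives $\|o_i-q\|\ge l_i > (1+\epsilon)Rad(P_{opt})$ (when the first event has not yet occurred), which drives the radius recursion $r_{i+1}>(1+\epsilon)Rad(P_{opt})-\|c_i-c_{i+1}\|-\xi r_i$ of the new core-set analysis (Theorem~\ref{the-newbc}); and $q\in P_{opt}$ keeps $T\subseteq P_{opt}$, so the second inequality $r_{i+1}\ge\sqrt{r_i^2+L_i^2}$ (which only uses properties of the exact MEB of $T$) remains valid. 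Hence, exactly as in the proof of Theorem~\ref{the-outlier}, if in all $z$ iterations the chosen point lies in $P_{opt}\cap Q$ and the initial point lies in $P_{opt}$ (probability $1-\gamma$), then within $z=\frac{2}{\epsilon}+1$ rounds some $o_{i^\ast}$ is the center of a ball of radius $l_{i^\ast}\le(1+\epsilon)Rad(P_{opt})$ covering at least $(1-(1+\delta)\gamma)n$ points — i.e.\ a genuine $(1+\epsilon,1+\delta)$-approximation exists among the candidates.

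Next I would handle the estimation step: for each $i$, Lemma~\ref{lem-outlier-general2} (using sample size $n''$) guarantees with probability $1-\eta_2$ that $\tilde l_i\le l_i$ and $|P\setminus\mathbb{B}(o_i,\tilde l_i)|\le(1+O(\delta))\gamma n$. Conditioning on this holding simultaneously for all $z$ iterations, I would argue as follows about the output $\mathbb{B}(o_{\hat i},\tilde l_{\hat i})$ with $\hat i=\arg\min_i\tilde l_i$. By $\tilde l_{\hat i}\le\tilde l_{i^\ast}\le l_{i^\ast}\le(1+\epsilon)Rad(P_{opt})$, the output radius is at most $(1+\epsilon)Rad(P_{opt})$; and by the second conclusion of Lemma~\ref{lem-outlier-general2} applied at index $\hat i$, the output ball excludes at most $(1+O(\delta))\gamma n$ points, hence covers at least $(1-(1+O(\delta))\gamma)n$ points. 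That is precisely a $(1+\epsilon,1+O(\delta))$-approximation. Finally, collecting probabilities: the initial draw succeeds with probability $1-\gamma$; across the $z$ iterations the Uniform-Adaptive Sampling steps all succeed and select a point of $P_{opt}\cap Q$ with probability at least $\big((1-\eta_1)\frac{\delta}{3(1+\delta)}\big)^{z}$, and the $z$ Sandwich-Lemma estimations all succeed with probability at least $(1-\eta_2)^{z}$. By a union bound / independence of the fresh samples, the overall success probability is at least $(1-\gamma)\big((1-\eta_1)(1-\eta_2)\frac{\delta}{3(1+\delta)}\big)^{z}$, as claimed.

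The main obstacle — and the place I would be most careful — is the conditioning and independence bookkeeping: the point $o_i$ in iteration $i$ depends on the random choices of all previous iterations, yet Lemmas~\ref{lem-outlier-general1} and \ref{lem-outlier-general2} are stated for a \emph{fixed} $o_i$. I would resolve this by noting that the $n'$ and $n''$ samples drawn in iteration $i$ are fresh and independent of everything so far, so conditioned on the entire history (hence on $o_i$ and on $T$), the two lemmas apply verbatim; one then chains the conditional success probabilities multiplicatively down the $z$ iterations, which is exactly what yields the product form in the statement. A secondary subtlety is making sure the "first event" bailout in Lemma~\ref{lem-outlier-2} (some $o_i$ already being a good center) is compatible with using $\tilde l_i$ rather than $l_i$ as the reported radius — but this is handled by the inequality $\tilde l_i\le l_i$ together with the covering bound, so no separate argument is needed. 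Everything else is a direct transcription of the proof of Theorem~\ref{the-outlier} with $Q$ replaced by $Q'$ and $l_i$ replaced by $\tilde l_i$.
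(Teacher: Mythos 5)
Your proof is correct and matches the paper's (very terse) justification, which simply remarks that Theorem~\ref{the-outlier2} is Theorem~\ref{the-outlier} with an extra $(1-\eta_1)(1-\eta_2)$ factor per round from Lemmas~\ref{lem-outlier-general1} and~\ref{lem-outlier-general2}. You fill in the two points the paper leaves implicit — the $\hat{i}$-versus-$i^{\ast}$ comparison $\tilde l_{\hat i}\le \tilde l_{i^{\ast}}\le l_{i^{\ast}}$ that transfers the radius bound to the reported candidate, and the conditioning/independence bookkeeping across iterations — both of which are exactly what is needed.
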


To boost the success probability in Theorem~\ref{the-outlier2}, we need to repeatedly run Algorithm~\ref{alg-outlier2} and output the best candidate. However, we need to be careful on setting the parameters. The success probability in Theorem~\ref{the-outlier2} consists of two parts, $\mathcal{P}_1=(1-\gamma)\big((1-\eta_1)\frac{\delta}{3(1+\delta)}\big)^{z}$ and $\mathcal{P}_2=(1-\eta_2)^{z}$, where $\mathcal{P}_1$ indicates the probability that $\{o_1, \cdots, o_z\}$ contains a qualified candidate, and  $\mathcal{P}_2$ indicates the success probability of Lemma~\ref{lem-outlier-general2} over all the $z$ rounds. Therefore, if we run Algorithm~\ref{alg-outlier2} $N=O(\frac{1}{\mathcal{P}_1})$ times, with constant probability (by taking the union bound), the set of all the generated candidates contains at least one that yields a $(1+\epsilon,1+O(\delta))$-approximation; moreover, to guarantee that we can correctly estimate the resulting radii of all the candidates via the Sandwich Lemma with constant probability, we need to set $\eta_2=O(\frac{1}{zN})$ (because there are $O(zN)$ candidates).

%
%
%

\begin{corollary}
\label{cor-outlier2}
If one repeatedly runs Algorithm~\ref{alg-outlier2} $N=O\Big(\frac{1}{1-\gamma}\big(\frac{1}{1-\eta_1}(3+\frac{3}{\delta})\big)^{z}\Big)$ times with setting $\eta_2=O(\frac{1}{zN})$, with constant probability, the algorithm outputs a $(1+\epsilon,1+O(\delta))$-approximation for the problem of MEB with outliers.
\end{corollary}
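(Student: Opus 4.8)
The plan is to derive Corollary~\ref{cor-outlier2} from Theorem~\ref{the-outlier2} by a standard amplification-plus-union-bound argument, being careful to keep the two sources of randomness separate. I would first decompose the per-run success probability from Theorem~\ref{the-outlier2} as the product $\mathcal{P}_1\cdot\mathcal{P}_2$, where $\mathcal{P}_1=(1-\gamma)\big((1-\eta_1)\frac{\delta}{3(1+\delta)}\big)^{z}$ accounts for the event that the candidate centers $o_1,\dots,o_z$ produced in one run contain a center $o_{i^\star}$ whose associated ``true'' radius $l_{i^\star}$ (the $(t+1)$-th largest distance from $P$) yields a $(1+\epsilon,1+\delta)$-approximation, and $\mathcal{P}_2=(1-\eta_2)^{z}$ accounts for the Sandwich Lemma (Lemma~\ref{lem-outlier-general2}) succeeding in all $z$ rounds of that run so that each $\tilde l_i\in[\tilde l'_i, l_i]$. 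The key point, which I would state explicitly, is that $\mathcal{P}_1$ does not depend on $\eta_2$, so amplifying against $\mathcal{P}_1$ alone controls the number of repetitions $N$.

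Next I would run Algorithm~\ref{alg-outlier2} independently $N$ times and let $\mathcal{E}_1$ be the event that at least one of these $N$ runs produces a center $o_{i^\star}$ of the required quality. Since the runs are independent, $\Pr[\neg\mathcal{E}_1]\le(1-\mathcal{P}_1)^N\le e^{-\mathcal{P}_1 N}$, so taking $N=O(1/\mathcal{P}_1)=O\!\big(\frac{1}{1-\gamma}(\frac{1}{1-\eta_1}(3+\frac{3}{\delta}))^{z}\big)$ makes $\Pr[\mathcal{E}_1]$ at least some constant, say $9/10$; here I would just absorb the factor $\big(\frac{3(1+\delta)}{\delta}\big)^z=\big(3+\frac3\delta\big)^z$ coming from inverting $\mathcal{P}_1$ and note that $(1-\eta_1)^{-z}$ contributes the remaining factor. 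Then I would let $\mathcal{E}_2$ be the event that the Sandwich Lemma succeeds for \emph{every} candidate center generated across all $N$ runs; there are $zN$ such centers, and by the union bound $\Pr[\neg\mathcal{E}_2]\le zN\,\eta_2$, so choosing $\eta_2=O(1/(zN))$ (small enough that this is at most $1/10$) gives $\Pr[\mathcal{E}_2]\ge 9/10$. Note this choice of $\eta_2$ only enlarges $n''=O(\frac{1}{\delta^2\gamma}\log\frac1{\eta_2})=O(\frac{1}{\delta^2\gamma}\log(zN))$ by a logarithmic factor and in particular keeps it independent of $n$ and $d$, so the sample-size guarantee is preserved; I would remark on this to close the loop with the abstract's claim.

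On the event $\mathcal{E}_1\cap\mathcal{E}_2$ (which holds with constant probability $\ge 4/5$ by the union bound), I would argue correctness of the output as follows. By $\mathcal{E}_1$ there is a candidate $o_{i^\star}$ with $l_{i^\star}\le(1+\epsilon)Rad(P_{opt})$ and $|P\setminus\mathbb{B}(o_{i^\star},l_{i^\star})|\le(1+\delta)\gamma n$. By $\mathcal{E}_2$ applied to that candidate, $\tilde l_{i^\star}\le l_{i^\star}$ and $|P\setminus\mathbb{B}(o_{i^\star},\tilde l_{i^\star})|\le(1+O(\delta))\gamma n$ (inequalities~(\ref{for-outlier-general2-1}) and~(\ref{for-outlier-general2-2})). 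The algorithm outputs the candidate minimizing $\tilde l_i$ over all generated candidates; call it $o_{\hat i}$. Then $\tilde l_{\hat i}\le\tilde l_{i^\star}\le l_{i^\star}\le(1+\epsilon)Rad(P_{opt})$, and by $\mathcal{E}_2$ applied to $o_{\hat i}$ we have $|P\setminus\mathbb{B}(o_{\hat i},\tilde l_{\hat i})|\le(1+O(\delta))\gamma n$, so $\mathbb{B}(o_{\hat i},\tilde l_{\hat i})$ is a $(1+\epsilon,1+O(\delta))$-approximation, as claimed.

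The step I expect to require the most care is the bookkeeping around $\mathcal{E}_2$: the definition of $\eta_2$ in Corollary~\ref{cor-outlier2} is self-referential in that $\eta_2=O(1/(zN))$ while $n''$ (and hence the running time, though not $N$) depends on $\eta_2$ through $\log(1/\eta_2)$. I would resolve this by observing that $N$ depends only on $\mathcal{P}_1$, which is independent of $\eta_2$, so $N$ is pinned down first; then $\eta_2$ is set as a function of $z$ and $N$, and only $n''$ absorbs the resulting extra $\log(zN)$ factor — there is no circularity. A secondary subtlety worth a sentence is that the failure of $\mathcal{E}_2$ for some candidate could in principle make a \emph{bad} center look good (underestimate its radius) rather than only making a good center look bad; this is precisely why I ask the Sandwich Lemma to hold for \emph{all} $zN$ candidates, not merely for the good one, and why the argument above invokes $\mathcal{E}_2$ both at $o_{i^\star}$ and at the output $o_{\hat i}$.
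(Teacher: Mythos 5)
Your proposal follows exactly the paper's argument: decompose the per-run success probability into $\mathcal{P}_1=(1-\gamma)\big((1-\eta_1)\frac{\delta}{3(1+\delta)}\big)^{z}$ (a good candidate center is generated) and $\mathcal{P}_2=(1-\eta_2)^{z}$ (the Sandwich Lemma holds in every round), take $N=O(1/\mathcal{P}_1)$ to amplify $\mathcal{E}_1$ to constant probability, and then set $\eta_2=O(1/(zN))$ so a union bound over all $O(zN)$ candidates makes $\mathcal{E}_2$ hold with constant probability as well. You correctly resolve the apparent self-reference by noting that $N$ depends only on $\mathcal{P}_1$, which is free of $\eta_2$, so $\eta_2$ can be pinned down afterward, inflating only $n''$ by a $\log(zN)$ factor; and you spell out the correctness chain $\tilde l_{\hat i}\le\tilde l_{i^\star}\le l_{i^\star}\le(1+\epsilon)Rad(P_{opt})$ together with the outlier bound at $o_{\hat i}$, including the point (left implicit in the paper) that the Sandwich Lemma must hold at \emph{all} candidates and not just the good one, since a failed estimate could otherwise make a bad center's $\tilde l_i$ spuriously small. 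This is the same route as the paper, with the bookkeeping made explicit.
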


The calculation of running time is similar to (\ref{for-time-bi}) in Section~\ref{sec-quality}. We just replace $n$ by $\max\{n', n''\}=O\big(\frac{1}{\delta^2\gamma}\log\frac{1}{\eta_2}\big)=O\big(\frac{1}{\delta^2\gamma}\log(zN)\big)=\tilde{O}\big(\frac{1}{\delta^2\gamma\epsilon}\big)$~\footnote{The asymptotic notation $\tilde{O}(f)=O\big(f\cdot polylog(\frac{1}{\eta_1\delta(1-\gamma)})\big)$.}, and change the value of $C$ to be $O\Big(\frac{1}{1-\gamma}\big(\frac{1}{1-\eta_1}(3+\frac{3}{\delta})\big)^{\frac{2}{\epsilon}+1}\Big)$. So the total running time is independent of $n$. Also, to covert the result from  $(1+\epsilon,1+O(\delta))$-approximation to  $(1+\epsilon,1+\delta)$-approximation, we just need to reduce the value of $\delta$ in the input of Algorithm~\ref{alg-outlier2} appropriately.

\section{The Extension: MEX with Outliers}
\label{sec-ext}



In this section, we extend Definition~\ref{def-outlier} and define a more general problem called \textbf{minimum enclosing ``x'' (MEX) with Outliers}. We observe that the ideas of Lemma~\ref{lem-outlier-general1} and~\ref{lem-outlier-general2} can be further extended to deal with MEX with outliers problems, as long as the shape ``x'' satisfies several properties. 
\textbf{To describe a shape ``x'', we need to clarify three basic concepts: center, size, and distance function.}

Let $\mathcal{X}$ be the set of specified shapes in $\mathbb{R}^d$. In this paper, we require that each shape $x\in \mathcal{X}$ is uniquely determined by the following two components: 
``$c(x)$'', the \textbf{center} of $x$, and 
``$s(x)\geq 0$'', the \textbf{size} of $x$. 
For any two shapes $x_1, x_2\in\mathcal{X}$, $x_1=x_2$ if and only if $c(x_1)=c(x_2)$ and $s(x_1)=s(x_2)$. Moreover, given a center $o_0$ and a value $l_0\geq 0$, we use $x(o_0, l_0)$ to denote the shape $x$ with $c(x)=o_0$ and $s(x)=l_0$. 
For different shapes, we have different definitions for the center and size. For example, if $x$ is a ball, $c(x)$ and $s(x)$ should be the ball center and the radius respectively; given $o_0\in \mathbb{R}^d$ and  $l_0\geq 0$,  $x(o_0, l_0)$ should be the ball $\mathbb{B}(o_0, l_0)$. As a more complicated example, consider the $k$-center clustering with outliers problem, which is to find $k$ balls to cover the input point set excluding a certain number of outliers and minimize the maximum radius ({\em w.l.o.g.,} we can assume that the $k$ balls have the same radius). For this problem, the shape ``x'' is a union of $k$ balls in $\mathbb{R}^d$; the center $c(x)$ is the set of the $k$ ball centers and the size $s(x)$ is the radius. 


For any point $p\in \mathbb{R}^d$ and any shape $x\in\mathcal{X}$, we also need to define a \textbf{distance function} $f(c(x), p)$ between the center $c(x)$ and $p$.
 For example, if $x$ is a ball, $f(c(x), p)$ is simply equal to $||p-c(x)||$; if $x$ is a union of $k$ balls with the center $c(x)=\{c_1, c_2, \cdots, c_k\}$, the distance should be $\min_{1\leq j\leq k}||p-c_j||$. Note that the distance function is only for ranking the points to $c(x)$, and not necessary to be non-negative ({\em e.g.,} in Section~\ref{sec-svm}, we define a distance function $f(c(x), p)\leq 0$ for SVM). By using this distance function, we can define the set ``$Q$'' and the value ``$l_i$'' when generalizing Lemma~\ref{lem-outlier-general1} and~\ref{lem-outlier-general2} below. To guarantee their correctnesses, we also require $\mathcal{X}$ to satisfy the following three properties. 

\begin{property}
\label{prop-1}
For any two shapes $x_1\neq x_2\in\mathcal{X}$, if $c(x_1)=c(x_2)$, then
\begin{eqnarray}
s(x_1)\leq s(x_2)\Longleftrightarrow x_1 \text{ is covered by } x_2, 
\end{eqnarray}
where ``$x_1$ is covered by $x_2$'' means ``for any point $p\in \mathbb{R}^d$, $p\in x_1\Rightarrow p\in x_2$''.
\end{property}

\begin{property}
\label{prop-2}
Given any shape $x\in\mathcal{X}$ and any point $p_0\in x$, the set 
\begin{eqnarray}
\{p\mid p\in \mathbb{R}^d \text{ and } f(c(x), p)\leq f(c(x), p_0)\}\subseteq x.
\end{eqnarray}
\end{property}

%

\begin{property}
\label{prop-4}
Given any shape center $o_0$ and any point $p_0\in \mathbb{R}^d$, they together determine a value $r_0=\min\{r\mid r\geq 0, p_0\in x(o_0, r)\}$; that is,  $p_0\in x(o_0, r_0)$ and  $p_0 \notin x(o_0, r)$ for any $r<r_0$. (\textbf{Note:} usually the value $r_0$ is just the distance from $p_0$ to the shape center $o_0$; but for some cases, such as the SVM problem in Section~\ref{sec-svm}, the shape size and distance function have different meanings).
 \end{property}

Intuitively, Property~\ref{prop-1} shows that $s(x)$ defines an order of the shapes sharing the same center $c(x)$. Property~\ref{prop-2} shows that the distance function $f$ defines an order of the points to a given shape center $c(x)$. Property~\ref{prop-4} shows that a center $o_0$ and a point $p_0$ can define a shape just ``touching'' $p_0$. We can take $\mathcal{X}=\{\text{all $d$-dimensional balls}\}$ as an example. For any two concentric balls, the smaller one is always covered by the larger one (Property~\ref{prop-1}); if a point $p_0$ is inside a ball $x$, any point $p$ having the distance $||p-c(x)||\leq ||p_0-c(x)||$ should be inside $x$ too (Property~\ref{prop-2}); also, given a ball center $o_0$ and a point $p_0$,  $p_0\in \mathbb{B}(o_0, ||p_0-o_0||)$ and $p_0\notin \mathbb{B}(o_0, r)$ for any $r<||p_0-o_0||$ (Property~\ref{prop-4}). 
 
%
%
%
Now, we introduce the formal definitions of the MEX with outliers problem and its bi-criteria approximation. 

\begin{definition} [MEX with Outliers]
\label{def-outlier-general}
Suppose the shape set $\mathcal{X}$ satisfies Property~\ref{prop-1}, \ref{prop-2}, and \ref{prop-4}. 
Given a set $P$ of $n$ points in $\mathbb{R}^d$ and a small parameter $\gamma\in (0,1)$, the MEX with outliers problem is to find the smallest shape $x\in \mathcal{X}$ that covers $(1-\gamma)n$ points. Namely, the task is to find a subset of $P$ with size $(1-\gamma)n$ such that its minimum enclosing shape of $\mathcal{X}$ is the smallest among all possible choices of the subset. The obtained solution is denoted by $MEX(P, \gamma)$.
\end{definition}


\begin{definition}[Bi-criteria Approximation]
\label{def-app-general}
Given an instance $(P, \gamma)$ for MEX with outliers and two small parameters $0<\epsilon, \delta<1$, a $(1+\epsilon, 1+\delta)$-approximation of $(P, \gamma)$ is a solution $x\in\mathcal{X}$ that covers at least $\big(1-(1+\delta)\gamma\big)n$ points and has the size at most $(1+\epsilon)s(x_{opt})$, where $x_{opt}$ is the optimal solution.
\end{definition}

It is easy to see that Definition~\ref{def-outlier} of MEB with outliers actually is a special case of Definition~\ref{def-outlier-general}. 
Similar to MEB with outliers, we still use $P_{opt}$, where $P_{opt}\subset P$ and $|P_{opt}|=(1-\gamma)n$, to denote the subset covered by the optimal solution of MEX with outliers. 

%
%
Now, we provide the generalized versions of Lemma~\ref{lem-outlier-general1} and \ref{lem-outlier-general2}. Similar to the core-set construction method in Section~\ref{sec-newanalysis}, we assume that there exists an iterative  algorithm $\Gamma$ to compute MEX (without outliers); actually, this is an important prerequisite to design the sub-linear time algorithms under our framework (we will discuss the iterative algorithms for the MEX with outliers problems considered in our paper, including flat fitting, $k$-center clustering, and SVM, in the appendix). 
In the $i$-th iteration of $\Gamma$, it maintains a shape center $o_i$.  Also, let $Q$ be the set of $(1+\delta)\gamma n$ farthest points from $P$ to $o_i$ with respect to the distance function $f$. First, we need to define the value ``$l_i$'' by $Q$ in the following claim. 

\begin{claim}
\label{claim-prop}
There exists a value $l_i\geq 0$ satisfying $P\setminus x(o_i, l_i)=Q$.
\end{claim}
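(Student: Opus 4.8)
\textbf{Proof proposal for Claim~\ref{claim-prop}.}

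The plan is to construct the value $l_i$ explicitly from the point of $Q$ that sits ``on the boundary'' of the ordering induced by the distance function $f$, and then invoke Properties~\ref{prop-1}, \ref{prop-2}, and \ref{prop-4} to verify that the shape $x(o_i, l_i)$ excludes exactly the set $Q$. First I would rank all $n$ points of $P$ according to the distance function $f(c(x), \cdot)$ with the shape center $c(x) = o_i$; by definition $Q$ consists of the $(1+\delta)\gamma n$ points with the largest $f$-values. Let $p_0$ be the point of $P \setminus Q$ with the largest value $f(o_i, p_0)$ (that is, $p_0$ is the ``first inlier'' in the ranking). Apply Property~\ref{prop-4} to the center $o_i$ and the point $p_0$: this yields a value $r_0 = \min\{r \mid r \geq 0,\ p_0 \in x(o_i, r)\}$, and I would set $l_i := r_0$.

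The main steps are then: (i) show $P \setminus Q \subseteq x(o_i, l_i)$, and (ii) show $Q \cap x(o_i, l_i) = \emptyset$. For (i), note $p_0 \in x(o_i, l_i)$ by the defining property of $r_0$ in Property~\ref{prop-4}; for any other point $p \in P \setminus Q$ we have $f(o_i, p) \leq f(o_i, p_0)$ by the choice of $p_0$, so Property~\ref{prop-2} (applied with the shape $x = x(o_i, l_i)$, which contains $p_0$) gives $p \in x(o_i, l_i)$. For (ii), suppose toward contradiction that some $q \in Q$ satisfies $q \in x(o_i, l_i)$. Then Property~\ref{prop-2} again — applied with the point $q$ in place of $p_0$ — would force every point with $f$-value at most $f(o_i, q)$ into $x(o_i, l_i)$; but I need the reverse direction, so instead I argue as follows: since $q \in Q$ and $p_0 \notin Q$, we have $f(o_i, q) > f(o_i, p_0)$ (strict, or at least $\geq$; if there are ties one takes the split consistently so that the $(1+\delta)\gamma n$ farthest points are exactly $Q$). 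If $q \in x(o_i, l_i) = x(o_i, r_0)$, then $r_0$ is a radius for which $q$ is covered, and I would use Property~\ref{prop-2} with $p_0 \in x(o_i, r_0)$ together with the contrapositive reasoning: actually the clean route is to show that $f(o_i, q) > f(o_i, p_0)$ combined with $q \in x(o_i, r_0)$ and $p_0$ being on the boundary (i.e. $p_0 \notin x(o_i, r)$ for $r < r_0$) yields a contradiction via Property~\ref{prop-4} applied to $q$: the minimal radius covering $q$ would then be $\leq r_0$, yet by Property~\ref{prop-2} applied to that minimal shape it must also cover $p_0$ and everything with smaller $f$-value, which is fine — so the contradiction must instead come from a monotonicity statement that $f(o_i, q) > f(o_i, p_0)$ implies the minimal radius covering $q$ is strictly larger than the one covering $p_0$.

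I expect the main obstacle to be exactly this last point: the definitions as stated (Properties~\ref{prop-1}, \ref{prop-2}, \ref{prop-4}) give monotonicity of coverage in the size parameter and give that $f$ orders points consistently with a \emph{fixed} shape, but one needs the extra coherence that ``larger $f$-value $\Leftrightarrow$ needs a larger shape to be covered,'' i.e. that the map $p \mapsto \min\{r \mid p \in x(o_i, r)\}$ is monotone in $f(o_i, p)$. This should follow by combining Property~\ref{prop-2} (points with smaller $f$-value lie inside any shape containing a given point) with Property~\ref{prop-1} (shapes with the same center are nested by size): if $f(o_i, q) > f(o_i, p_0)$ and $q \in x(o_i, r_0)$, then by Property~\ref{prop-2} the shape $x(o_i, r_0)$ also contains all points of $f$-value at most $f(o_i, q)$, in particular all of $P \setminus Q$ plus $q$, which is more than $(1-(1+\delta)\gamma)n + 1$ points; but that is consistent, so the real contradiction is that $q \notin x(o_i, r')$ for $r' < r_0$ would have to hold while $p_0 \in x(o_i, r')$ for some such $r'$ — and whether such an $r'$ exists is precisely what needs the monotonicity. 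I would therefore either (a) add the observation that $l_i$ can be taken as the largest value $r$ such that $|P \setminus x(o_i, r)| = |Q|$ and argue existence of such an $r$ directly from Properties~\ref{prop-1} and \ref{prop-4} applied across all points, or (b) simply define $l_i$ via $P \setminus x(o_i, l_i) = Q$ being achievable because the coverage sets $\{x(o_i, r)\}_{r \geq 0}$ form a nested increasing family (Property~\ref{prop-1}) whose union is $\mathbb{R}^d$ and which realizes, at the ``threshold'' radius between $p_0$ and the nearest point of $Q$, exactly the complement $Q$. Route (b) is cleanest and I would present that.
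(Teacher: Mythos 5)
Your construction of $l_i$ matches the paper's: rank $P$ by $f(o_i,\cdot)$, take $p_0$ to be the farthest point of $P\setminus Q$ from $o_i$, and set $l_i$ to the minimal radius $r_0=\min\{r\geq 0\mid p_0\in x(o_i,r)\}$ guaranteed by Property~\ref{prop-4}. Your step (i), $P\setminus Q\subseteq x(o_i,l_i)$, is also exactly the paper's argument via Property~\ref{prop-2}.

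The gap is in step (ii), $Q\cap x(o_i,l_i)=\emptyset$, which you never complete. You diagnose the symptom correctly but not the cause: you keep applying Property~\ref{prop-2} to the \emph{large} shape $x(o_i,r_0)$ with $q$ assumed inside it, which only yields the consistent, non-contradictory statement that everything with smaller $f$-value than $q$ is also inside $x(o_i,r_0)$ --- hence your ``which is fine.'' The paper applies Property~\ref{prop-2} to a \emph{smaller} shape. Suppose $q\in Q$ lies in $x(o_i,r_0)$. By Property~\ref{prop-4}, $q$ has a minimal radius $r_q$, and $q\in x(o_i,r_0)$ forces $r_q\leq r_0$; the paper asserts $r_q<r_0$ strictly (here both your argument and the paper's lean on the convention that the $f$-distances are distinct, together with the parenthetical remark in Property~\ref{prop-4} that $r_0$ is ordinarily just the distance, so that distinct $f$-values yield distinct minimal radii). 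By Property~\ref{prop-4} again, $r_q<r_0$ gives $p_0\notin x(o_i,r_q)$, while $q\in x(o_i,r_q)$. Now the contrapositive of Property~\ref{prop-2} applied to the shape $x(o_i,r_q)$ forces $f(o_i,p_0)>f(o_i,q)$, contradicting $q\in Q$ and $p_0\notin Q$. This single contradiction suffices --- you do not need the global monotonicity of $p\mapsto r_p$ in $f(o_i,p)$ that you were hunting for, and your proposed routes (a) and (b) do not substitute for it: route (b) in particular asserts (``realizes exactly the complement $Q$ at the threshold'') precisely what has to be proved, and the claim that the nested family's union is $\mathbb{R}^d$ is not generally true (the SVM half-spaces in Section~\ref{sec-svm} never cover the origin's side of the boundary).
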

\begin{proof}
The points of $P$ can be ranked based on their distances to $o_i$. Without loss of generality, let $P=\{p_1, p_2, \cdots, p_n\}$ with $f(o_i, p_1)> f(o_i, p_2)>\cdots > f(o_i, p_n)$ (for convenience, we assume that any two distances are not equal; if there is a tie, we can arbitrarily decide their order to $o_i$). Therefore, the set $Q=\{p_j\mid 1\leq j\leq (1+\delta)\gamma n\}$. Moreover, from Property~\ref{prop-4}, we know that each point $p_j\in P$ corresponds to a value $r_j$ that $p_j\in x(o_i, r_j)$ and $p_j\notin x(o_i, r)$ for any $r<r_j$. 
Denote by $x_{j}$ the shape $x(o_i, r_{j})$. We select the  point  $p_{j_0}$ with $j_0=(1+\delta)\gamma n+1$. 
From Property~\ref{prop-2}, we know that $p_j\in x_{j_0}$ for any $j\geq j_0$, {\em i.e.,} \textbf{(a)}~$P\setminus Q\subseteq x_{j_0}$. We also need to prove that $p_j\notin x_{j_0}$ for any $j< j_0$. Assume there exists some $p_{j_1}\in x_{j_0}$ with $j_1< j_0$. Then we have $r_{j_1}< r_{j_0}$ and thus $p_{j_0}\notin x_{j_1}$ (by Property~\ref{prop-4}). By Property~\ref{prop-2}, $p_{j_0}\notin x_{j_1}$ implies $f(o_i, p_{j_0})>f(o_i, p_{j_1})$, which is in contradiction to the fact $f(o_i, p_{j_0})<f(o_i, p_{j_1})$. So we have \textbf{(b)}~$Q\cap x_{j_0}=\emptyset$. 

From the above  \textbf{(a)} and \textbf{(b)}, we know $P\setminus x_{j_0}=Q$. Therefore, we can set the value $l_i=r_{j_0}$ and then $P\setminus x(o_i, l_i)=Q$. 
%
%
%
%
%
%
%
%
%
%
%
%
%
%
%
%
%
%
%
%
%
%
%
%
\qed\end{proof}





\begin{lemma}[Generalized Uniform-Adaptive Sampling]
\label{lem-outlier-general1-general}
Let $\eta_1\in(0,1)$.  If we sample $n'=O(\frac{1}{\delta\gamma}\log\frac{1}{\eta_1})$ points independently and uniformly at random from $P$ and let $Q'$ be the set of farthest $\frac{3}{2}(1+\delta)\gamma n'$ points to $o_i$ from the sample, then, with probability at least $1-\eta_1$, the following holds 
\begin{eqnarray}
\frac{\Big|Q'\cap\big(P_{opt}\cap Q\big)\Big|}{|Q'|}\geq \frac{\delta}{3(1+\delta)}.
\end{eqnarray}
\end{lemma}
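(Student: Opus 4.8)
The plan is to observe that Lemma~\ref{lem-outlier-general1-general} is essentially a verbatim restatement of Lemma~\ref{lem-outlier-general1}, the only difference being that the Euclidean ball $\mathbb{B}(o_i,\cdot)$ is replaced by the abstract shape $x(o_i,\cdot)$ and the Euclidean distance $\|p-o_i\|$ is replaced by the distance function $f(o_i,p)$. Consequently the proof should proceed by retracing the argument of Section~\ref{sec-proof-outlier-general1}, checking at each step that only the abstract ordering properties (Properties~\ref{prop-1},~\ref{prop-2},~\ref{prop-4}) and Claim~\ref{claim-prop}, rather than any metric-specific fact about balls, were used.

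Concretely, first I would recall that $|Q|=(1+\delta)\gamma n$ and $|P_{opt}\cap Q|\geq \delta\gamma n$ (at most $\gamma n$ outliers can lie in $Q$), set $\lambda=|P_{opt}\cap Q|/n\geq\delta\gamma$, and apply the Chernoff bound to the indicator variables recording whether the $j$-th sampled point lies in $P_{opt}\cap Q$, and separately whether it lies in $Q$. With $n'=O(\tfrac{1}{\delta\gamma}\log\tfrac{1}{\eta_1})$ and $\sigma<1/2$ this yields, with probability at least $1-\eta_1$, that $|A\cap(P_{opt}\cap Q)|>\tfrac12\delta\gamma n'$ and $|A\cap Q|<\tfrac32(1+\delta)\gamma n'$, exactly as in~(\ref{for-outlier-general1-1}). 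These estimates are purely combinatorial and carry over unchanged.

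The one step that requires care — and which I expect to be the main (minor) obstacle — is the nesting argument that gave $(A\cap Q)\subseteq Q'$. In the ball case this followed from writing both $A\cap Q$ and $Q'$ as sublevel sets of $\|p-o_i\|$ (equations~(\ref{for-aq}) and~(\ref{for-qprime})). Here I would instead invoke Claim~\ref{claim-prop}: there is a value $l_i$ with $P\setminus x(o_i,l_i)=Q$, and ranking the points of $A$ by $f(o_i,\cdot)$ shows that $A\cap Q=\{p\in A\mid f(o_i,p)>f(o_i,\text{threshold})\}$ is again a ``top-$k$'' set under the $f$-ordering, while $Q'$ is by definition the top-$\tfrac32(1+\delta)\gamma n'$ points of $A$ under the same ordering. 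Two top-sets of the same ranked list are nested, so either $A\cap Q\subseteq Q'$ or $Q'\subseteq A\cap Q$; since $|A\cap Q|<\tfrac32(1+\delta)\gamma n'=|Q'|$, we must have $A\cap Q\subseteq Q'$. (The only subtlety is ties in $f$-values, which we dispose of exactly as in the proof of Claim~\ref{claim-prop} by fixing an arbitrary tie-breaking order.)

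From $(A\cap Q)\subseteq Q'$ the remainder is identical to Section~\ref{sec-proof-outlier-general1}: intersecting with $P_{opt}$ gives $A\cap(P_{opt}\cap Q)\subseteq Q'$, hence $A\cap(P_{opt}\cap Q)\subseteq Q'\cap(P_{opt}\cap Q)$; conversely $Q'\subseteq A$ gives the reverse inclusion; so $Q'\cap(P_{opt}\cap Q)=A\cap(P_{opt}\cap Q)$, and then
\begin{eqnarray*}
\frac{\big|Q'\cap(P_{opt}\cap Q)\big|}{|Q'|}=\frac{\big|A\cap(P_{opt}\cap Q)\big|}{\tfrac32(1+\delta)\gamma n'}>\frac{\tfrac12\delta\gamma n'}{\tfrac32(1+\delta)\gamma n'}=\frac{\delta}{3(1+\delta)},
\end{eqnarray*}
which is the desired bound. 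I would close the proof by remarking that the whole argument used nothing about $x$ beyond Claim~\ref{claim-prop} and the induced $f$-ordering, which is precisely why it generalizes.
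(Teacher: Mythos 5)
Your proposal is correct and takes essentially the same approach as the paper: apply the Chernoff bound exactly as in the MEB case, then use Claim~\ref{claim-prop} and Property~\ref{prop-2} to recast both $A\cap Q$ and $Q'$ as ``top-$k$'' sets under the $f$-ordering (with explicit threshold points $p_{j_0}$ and $p_{j'_0}$ in the paper's notation), conclude nesting from the cardinality comparison, and finish with the same chain of set inclusions. The tie-breaking remark is a nice touch but is already handled inside Claim~\ref{claim-prop} as you note.
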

\begin{proof}
Let $A$ denote  the set of sampled $n'$ points from $P$. Similar to (\ref{for-outlier-general1-1}), we have
\begin{eqnarray}
\Big|A\cap\big(P_{opt}\cap Q\big)\Big|> \frac{1}{2}\delta\gamma n' \text{\hspace{0.2in} and \hspace{0.2in}} \Big|A\cap  Q\Big|< \frac{3}{2}(1+\delta)\gamma n' \label{for-outlier-general1-1-2}
\end{eqnarray}
with probability $1-\eta_1$. Similar to (\ref{for-aq}), we have 
\begin{eqnarray}
A\cap Q=\{p\in A\mid f(o_i, p)>f(o_i, p_{j_0})\},\label{for-aq-1}
\end{eqnarray}
where $p_{j_0}$ is the point selected in the proof of Claim~\ref{claim-prop}. By using the same manner of Claim~\ref{claim-prop}, we also can select a point $p_{j'_0}\in A$ with 
\begin{eqnarray}
 Q'=\{p\in A\mid  f(o_i, p)>f(o_i, p_{j'_0})\}.\label{for-qprime-1}
\end{eqnarray}
Then, we can prove 
\begin{eqnarray}
\Big(A\cap\big(P_{opt}\cap Q\big)\Big)= \Big(Q'\cap\big(P_{opt}\cap Q\big)\Big).\label{for-v9-2-1} 
\end{eqnarray}
by using the same idea of (\ref{for-v9-2}). Hence, 
\begin{eqnarray}
\frac{\Big|Q'\cap\big(P_{opt}\cap Q\big)\Big|}{|Q'|}&=&\frac{\Big|A\cap\big(P_{opt}\cap Q\big)\Big|}{|Q'|}\geq \frac{\delta}{3(1+\delta)},
\end{eqnarray}
where the final inequality comes from the first inequality of (\ref{for-outlier-general1-1-2}) and the fact $|Q'|=\frac{3}{2}(1+\delta)\gamma n'$.
\qed\end{proof}

\begin{lemma} [Generalized Sandwich Lemma]
\label{lem-outlier-general2-generalize}
Let $\eta_2\in(0,1)$ and assume $\delta<1/3$. $l_i$ is the value from Claim~\ref{claim-prop}. We sample $n''=O\big(\frac{1}{\delta^2\gamma}\log\frac{1}{\eta_2}\big)$ points independently and uniformly at random from $P$ and let $q$ be the $\big((1+\delta)^2\gamma n''+1\big)$-th farthest one from the sampled points to $o_i$. If 
$\tilde{l}_i=\min\{r\mid r\geq 0, q\in x(o_i, r)\}$ (similar to the way defining ``$r_0$'' in Property~\ref{prop-4}), then, with probability $1-\eta_2$, the following holds
\begin{eqnarray}
\tilde{l}_i&\leq& l_i\label{for-outlier-general2-g1};\\
\Big|P\setminus x(o_i, \tilde{l}_i)\Big|&\leq& (1+O(\delta))\gamma n.\label{for-outlier-general2-g2}
\end{eqnarray}
\end{lemma}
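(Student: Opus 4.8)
The plan is to transliterate the proof of Lemma~\ref{lem-outlier-general2} (Section~\ref{sec-proof-outlier-general2}) into the abstract ``x'' setting, replacing every ball $\mathbb{B}(o_i,r)$ by the shape $x(o_i,r)$ and every Euclidean distance $||p-o_i||$ by the distance value $f(o_i,p)$. What legitimizes this substitution are two structural facts already implicit in the proof of Claim~\ref{claim-prop}: (i) by Property~\ref{prop-1} the family $\{x(o_i,r)\}_{r\geq 0}$ is totally ordered by inclusion, hence $\big|P\setminus x(o_i,r)\big|$ is non-increasing in $r$ and, for any finite point set, the sets $P\setminus x(o_i,r)$ form a nested chain of ``farthest-point prefixes'' (this is exactly how $l_i$ with $P\setminus x(o_i,l_i)=Q$ is produced in Claim~\ref{claim-prop}); and (ii) for a fixed center $o_i$ the map $p\mapsto r_p:=\min\{r\geq 0\mid p\in x(o_i,r)\}$ (well defined by Property~\ref{prop-4}) is order-preserving with respect to $f(o_i,\cdot)$ — if $f(o_i,p_1)\leq f(o_i,p_2)$ then Property~\ref{prop-2} applied to $x(o_i,r_{p_2})$ forces $p_1\in x(o_i,r_{p_2})$, so $r_{p_1}\leq r_{p_2}$. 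In particular $\tilde{l}_i=r_q$ for the sampled point $q$, and ranking any subset by $f(o_i,\cdot)$ is the same as ranking it by $r_{(\cdot)}$.

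Granting these, I would run the argument of Section~\ref{sec-proof-outlier-general2} line by line. Let $B$ be the uniform sample of $n''=O\big(\frac{1}{\delta^2\gamma}\log\frac{1}{\eta_2}\big)$ points and set $t=(1+\delta)\gamma n$. Mimicking $\tilde{l}'_i$, let $p'$ be the $\big(\frac{(1+\delta)^2}{1-\delta}\gamma n+1\big)$-th farthest point of $P$ from $o_i$ and put $\tilde{l}'_i=r_{p'}$; by the prefix property $\big|P\setminus x(o_i,\tilde{l}'_i)\big|=\frac{(1+\delta)^2}{1-\delta}\gamma n$, and since this count exceeds $t$, $p'$ ranks below the point $p_{j_0}$ from Claim~\ref{claim-prop} (for which $r_{p_{j_0}}=l_i$), so order-preservation gives $\tilde{l}'_i\leq l_i$. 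Next, mimicking the derivation of~(\ref{for-outlier-general1-1}) with $\sigma=\delta/2$, a Chernoff bound over $B$ gives with probability $1-\eta_2$ both $\big|B\setminus x(o_i,\tilde{l}'_i)\big|>(1+\delta)^2\gamma n''$ and $\big|B\cap Q\big|<(1+\delta)^2\gamma n''$. Applying the Claim~\ref{claim-prop} reasoning to $B$ yields $\big|B\setminus x(o_i,\tilde{l}_i)\big|=(1+\delta)^2\gamma n''$; comparing with the first inequality and using monotonicity of $\big|B\setminus x(o_i,\cdot)\big|$ forces $\tilde{l}_i\geq\tilde{l}'_i$, and comparing with the second inequality forces $q\notin Q$. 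Then the ``swap'' step of Section~\ref{sec-proof-outlier-general2} gives $x(o_i,\tilde{l}_i)\cap Q=\emptyset$, whence $\tilde{l}_i\leq l_i$ since $x(o_i,l_i)$ excludes precisely $Q$. So $\tilde{l}_i\in[\tilde{l}'_i,l_i]$, which is~(\ref{for-outlier-general2-g1}), and $\tilde{l}_i\geq\tilde{l}'_i$ with the prefix property gives $\big|P\setminus x(o_i,\tilde{l}_i)\big|\leq\frac{(1+\delta)^2}{1-\delta}\gamma n=(1+O(\delta))\gamma n$ under $\delta<1/3$, which is~(\ref{for-outlier-general2-g2}).

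The Chernoff estimate is routine and copied essentially verbatim. The part I expect to be the main obstacle is the ``swap'' step upgrading $q\notin Q$ to $x(o_i,\tilde{l}_i)\cap Q=\emptyset$: for balls it is immediate because $f(o_i,\cdot)$ \emph{is} the touching radius, so a strictly larger distance means strictly outside, whereas Properties~\ref{prop-1}--\ref{prop-4} only give the \emph{non-strict} implication $f(o_i,p_1)\leq f(o_i,p_2)\Rightarrow r_{p_1}\leq r_{p_2}$, and one must still exclude the degenerate case of a point of $Q$ whose touching radius equals $\tilde{l}_i$ while its $f$-value exceeds $f(o_i,q)$ (such a point would actually make $\tilde{l}_i>l_i$). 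I would settle this by combining a consistent tie-breaking convention on the values $f(o_i,\cdot)$ (as in Claim~\ref{claim-prop}) with the observation that for each concrete shape family treated later (slabs for flat fitting, unions of $k$ balls for $k$-center, and the SVM geometry) distinct $f$-values do yield distinct touching radii, or else by recording this as a mild supplementary requirement on $\mathcal{X}$. Everything else is a mechanical repetition of Section~\ref{sec-proof-outlier-general2}.
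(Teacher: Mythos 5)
Your transliteration is correct and is essentially the paper's own (very terse) argument, which just points back to Claim~\ref{claim-prop} and the proof of Lemma~\ref{lem-outlier-general2}. The one worry you flag --- that the swap step upgrading $q\notin Q$ to $x(o_i,\tilde{l}_i)\cap Q=\emptyset$ might fail because Properties~\ref{prop-1}--\ref{prop-4} give only the weak monotonicity $f(o_i,p_1)\leq f(o_i,p_2)\Rightarrow r_{p_1}\leq r_{p_2}$ --- is unfounded, and neither a tie-breaking patch nor a supplementary axiom on $\mathcal{X}$ is needed. The strictness you want is already supplied by Claim~\ref{claim-prop} itself: since $P\setminus x(o_i,l_i)=Q$, any $q_y\in Q$ satisfies $q_y\notin x(o_i,l_i)$, and then Property~\ref{prop-4} forces $r_{q_y}>l_i$; meanwhile $q\notin Q$ gives $q\in x(o_i,l_i)$, hence $\tilde{l}_i=r_q\leq l_i<r_{q_y}$, so $q_y\notin x(o_i,\tilde{l}_i)$ and the swap step closes. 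In fact the inequality $\tilde{l}_i\leq l_i$ already falls out of $q\notin Q$ alone (via $q\in x(o_i,l_i)$ and the minimality in Property~\ref{prop-4}), so the swap-step detour you inherited from the proof of Lemma~\ref{lem-outlier-general2} is what created the appearance of a gap in the first place; your degenerate scenario of a $q_y\in Q$ with $r_{q_y}=\tilde{l}_i$ is impossible because it would force $r_{q_y}\leq l_i$.
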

\begin{proof}
Let $B$ denote  the set of sampled $n''$ points from $P$. By using the same manner of Claim~\ref{claim-prop}, we know that there exists a value $\tilde{l}'_i>0$ satisfying $\Big|P\setminus x(o_i, \tilde{l}'_i)\Big|=\frac{(1+\delta)^2}{1-\delta}\gamma n$. Similar to the proof of Lemma~\ref{lem-outlier-general2}, we can prove that $\tilde{l}_i\in[\tilde{l}'_i,l_i]$. Due to Property~\ref{prop-1}, we know that $x(o_i, \tilde{l}_i)$ is ``sandwiched'' by the two shapes $x(o_i, \tilde{l}'_i)$ and $x(o_i, l_i)$. Further, since $x(o_i, \tilde{l}'_i)$ is covered by $x(o_i, \tilde{l}_i)$, we have
\begin{eqnarray}
\Big|P\setminus x(o_i, \tilde{l}_i)\Big|&\leq& \Big|P\setminus x(o_i, \tilde{l}'_i)\Big|=\frac{(1+\delta)^2}{1-\delta}\gamma n=(1+O(\delta))\gamma n,
\end{eqnarray}
where the last equality comes from the assumption $\delta<1/3$. So (\ref{for-outlier-general2-g1}) and (\ref{for-outlier-general2-g2}) are true. 
\qed\end{proof}

\textbf{Appendix.} Due to the space limit, we place other parts in our appendix. In Section~\ref{sec-ext-kcenter}, \ref{sec-flat}, and \ref{sec-svm}, we propose the sub-linear time bi-criteria approximation algorithms for three different MEX with outlier problems: $k$-center clustering, flat fitting, and SVM with outliers. All of these problems have important applications in real world, and our results significantly reduce their time complexities comparing with existing approaches. 

\newpage
\bibliographystyle{abbrv}

\bibliography{stability}

\section{The Lower Bound of Sample Size for MEB with Outliers}
\label{sec-lower}
Actually, it is easy to obtain a sub-linear time randomized $2$-approximation algorithm for MEB with outliers (if only returning the center): one can randomly select one point $p\in P$, and it belongs $P_{opt}$ with probability $1-\gamma$; thus, with probability $1-\gamma$, the point $p$ yields a $2$-approximation by using the triangle inequality. Below, we consider an example and show that it is impossible to achieve a single-criterion $\lambda$-approximation for any $\lambda<2$, if the time complexity is required to be independent of $n$. 

\begin{figure}[]
   \centering
  \includegraphics[height=0.7in]{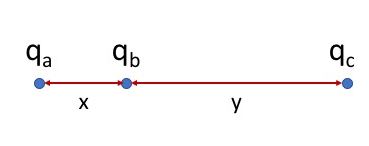}
  \vspace{-0.1in}
      \caption{$||q_a-q_b||=x$ and $||q_b-q_c||=y$.}
  \label{fig-lower}
  \vspace{-0.1in}
\end{figure}

Let $q_a$, $q_b$, and $q_c$ be three colinear points in $\mathbb{R}^d$, where $||q_a-q_b||=x$ and $||q_b-q_c||=y$ (see Figure~\ref{fig-lower}). Let $\gamma\in (0,1)$ and the point set $P=P_a\cup P_b\cup P_c$, where $P_a$ contains a single point located at $q_a$, $P_b$ contains $(1-\gamma)n-1$ points overlapping at $q_b$, and $P_c$ contains $\gamma n$ points overlapping at $q_c$. Consider the instance $(P, \gamma)$ for the problem of MEB with outliers. Suppose $x\ll y$. Consequently, the optimal subset $P_{opt}$ should be $P_a\cup P_b$ and the optimal radius is $x/2$. 

If we take a random sample $S$ of size $m=o(n)$ from $P$, with high probability, $P_a\cap S=\emptyset$ (even if we repeat our sampling a constant number of times, the single point $P_a$ will still be missing with high probability). Therefore, $S$ only contains the points from $P_b$ and $P_c$. If we run an existing algorithm on $S$, the resulting ball center, say $o$, should always be inside the convex hull of $S$, {\em i.e.,} the segment $\overline{q_b q_c}$. Let $\mathbb{B}(o, r)$ be the ball covering $(1-\gamma)n$ points of $P$. We consider two cases: (1) $q_a\in\mathbb{B}(o, r)$ and (2) $q_a\notin\mathbb{B}(o, r)$. For case (1), since $o\in \overline{q_b q_c}$, it is easy to know that the radius $r\geq ||q_a-q_b||=x$ and therefore the approximation ratio is at least $2$. For case (2), since $|P_b|=(1-\gamma)n-1$, $\mathbb{B}(o, r)$ must cover some point from $P_c$ and therefore $r=y/2$; because $x\ll y$, the approximation ratio is also larger than $2$.

\section{Proof of Theorem~\ref{the-newbc}}
\label{sec-proof-newbc}

Similar to the analysis in~\cite{badoiu2003smaller}, we let $\lambda_i=\frac{r_i}{(1+\epsilon)Rad(P)}$. Because $r_i$ is the radius of $MEB(T)$ and $T\subset P$,  we know $r_i\leq Rad(P)$ and then $\lambda_i\leq1/(1+\epsilon)$. By simple calculation, we know that when $L_i=\frac{\big((1+\epsilon)Rad(P)-\xi r_i\big)^2-r^2_i}{2\big((1+\epsilon)Rad(P)-\xi r_i\big)}$ the lower bound of $r_{i+1}$ in (\ref{for-bc4}) achieves the minimum value. Plugging this value of $L_i$ into (\ref{for-bc4}), we have
\begin{eqnarray}
\lambda^2_{i+1}\geq \lambda^2_i+\frac{\big((1-\xi\lambda_i)^2-\lambda^2_i\big)^2}{4(1-\xi\lambda_i)^2}.\label{for-bc5}
\end{eqnarray}
To simplify  inequality (\ref{for-bc5}), we consider the function $g(x)=\frac{(1-x)^2-\lambda^2_i}{1-x}$, where $0<x<\xi$. Its derivative $g'(x)=-1-\frac{\lambda^2_i}{(1-x)^2}$ is always negative, thus we have
\begin{eqnarray}
g(x)\geq g(\xi)=\frac{(1-\xi)^2-\lambda^2_i}{1-\xi}. \label{for-bc2-1}
\end{eqnarray}
Because $\xi<\frac{\epsilon}{1+\epsilon}$ and $\lambda_i\leq 1/(1+\epsilon)$, we know that  the right-hand side of (\ref{for-bc2-1}) is always non-negative. Using (\ref{for-bc2-1}), the inequality (\ref{for-bc5}) can be simplified to 
\begin{eqnarray}
\lambda^2_{i+1}&\geq& \lambda^2_i+\frac{1}{4}\big(g(\xi)\big)^2\nonumber\\
&=&\lambda^2_i+\frac{\big((1-\xi)^2-\lambda^2_i\big)^2}{4(1-\xi)^2}.\label{for-bc2-2}
\end{eqnarray}
(\ref{for-bc2-2}) can be further rewritten as 
\begin{eqnarray}
\Big(\frac{\lambda_{i+1}}{1-\xi}\Big)^2&\geq&\frac{1}{4}\Big(1+(\frac{\lambda_{i}}{1-\xi})^2\Big)^2 \nonumber\\
\Longrightarrow  \frac{\lambda_{i+1}}{1-\xi}&\geq&\frac{1}{2}\Big(1+(\frac{\lambda_{i}}{1-\xi})^2\Big).\label{for-bc2-3}
\end{eqnarray}

Now, we can apply a similar transformation of $\lambda_i$ which was used in~\cite{badoiu2003smaller}. Let $\gamma_i=\frac{1}{1-\frac{\lambda_i}{1-\xi}}$.  We know $\gamma_i>1$ (note $0\leq\lambda_i\leq\frac{1}{1+\epsilon}$ and $\xi<\frac{\epsilon}{1+\epsilon}$). Then, (\ref{for-bc2-3}) implies that 
\begin{eqnarray}
\gamma_{i+1}&\geq&\frac{\gamma_i}{1-\frac{1}{2\gamma_i}}\nonumber\\
&=&\gamma_i\big(1+\frac{1}{2\gamma_i}+(\frac{1}{2\gamma_i})^2+\cdots\big)\nonumber\\
&>&\gamma_i+\frac{1}{2}, \label{for-bc2-4}
\end{eqnarray}
where the equation comes from the fact that $\gamma_i>1$ and thus $\frac{1}{2\gamma_i}\in(0,\frac{1}{2})$. Note that $\lambda_0=0$ and thus $\gamma_0=1$. As a consequence, we have $\gamma_i>1+\frac{i}{2}$. In addition, since $\lambda_i\leq\frac{1}{1+\epsilon}$, that is, $\gamma_i\leq\frac{1}{1-\frac{1}{(1+\epsilon)(1-\xi)}}$, we have
\begin{eqnarray}
i< \frac{2}{\epsilon-\xi-\epsilon\xi}=\frac{2}{(1-\frac{1+\epsilon}{\epsilon}\xi)\epsilon}.\label{for-bc2-5}
\end{eqnarray}

Consequently, we obtain the theorem.

\section{$k$-Center Clustering with Outliers}
\label{sec-ext-kcenter}

%
%


Let $\gamma\in(0,1)$. Given a set $P$ of $n$ points in $\mathbb{R}^d$, the problem of  $k$-center clustering with outliers is to find $k$ balls to cover $(1-\gamma)n$ points, and the maximum radius of the balls is minimized ({\em w.l.o.g.,} we can assume that the $k$ balls have the same radius). 
Given an instance $(P, \gamma)$, let $\{C_1, \cdots, C_k\}$ be the $k$ clusters forming $P_{opt}$ (the subset of $P$ yielding the optimal solution), and $r_{opt}$ be the optimal radius; that is, each $C_j$ is covered by an individual ball with radius $r_{opt}$. Similar to Section~\ref{sec-outlier-general}, we first introduce a linear time algorithm, and then show how to modify it to be sub-linear time by using Lemma~\ref{lem-outlier-general1-general} and \ref{lem-outlier-general2-generalize}.

%


\textbf{Linear time algorithm.} Our algorithm in Section~\ref{sec-quality} can be generalized to be a linear time bi-criteria algorithm for the problem of $k$-center clustering with outliers, if $k$ is assumed to be a constant. Our idea is as follows. In Algorithm~\ref{alg-outlier}, we maintain a set $T$ as the core-set of $P_{opt}$; here, we instead maintain $k$ sets $T_1, T_2, \cdots, T_k$ as the core-sets of $C_1, C_2, \cdots, C_k$, respectively. Consequently, each $T_j$ for $1\leq j\leq k$ has an approximate MEB center $o^j_i$ in the $i$-th round of Step 3, and we let $O_i=\{o^1_i, \cdots, o^k_i\}$. Initially, $O_0$ and $T_j$ for $1\leq j\leq k$ are all empty; we randomly select a point $p\in P$, and with probability $1-\gamma$, $p\in P_{opt}$ (w.l.o.g., we assume $p\in C_1$ and add it to $T_1$; thus $O_1=\{p\}$ after this step). 
We let $Q$ be the set of farthest $t=(1+\delta)\gamma n$ points to $O_i$, and $l_i$ be the $(t+1)$-th largest distance from $P$ to $O_i$ (the distance from a point $p\in P$ to $O_i$ is $\min_{1\leq j\leq k}||p-o^j_i||$). Then, we randomly select a point $q\in Q$, and with probability $\frac{\delta}{1+\delta}$, $q\in P_{opt}$ (as (\ref{for-lem-outlier-1}) in Lemma~\ref{lem-outlier-1}). For ease of presentation, we assume that $q\in P_{opt}$ happens and we have an ``oracle'' to guess which optimal cluster $q$ belongs to, say $q\in C_{j_q}$; then, we add $q$ to $T_{j_q}$ and update the approximate MEB center of $T_{j_q}$. Since each optimal cluster $C_j$ for $1\leq j\leq k$ has the core-set with size $\frac{2}{\epsilon}+1$ (by setting $s=\frac{\epsilon}{2+\epsilon}$ in Theorem~\ref{the-newbc}), after adding at most $k(\frac{2}{\epsilon}+1)$ points, the distance $l_i$ will be smaller than $(1+\epsilon)r_{opt}$. Consequently, a $(1+\epsilon, 1+\delta)$-approximation solution is obtained when $i\geq k(\frac{2}{\epsilon}+1)$. \textbf{Note} that some ``small'' clusters could be missing from the above random sampling based approach and therefore $|O_i|$ could be less than $k$; however, it always can be guaranteed that the total number of missing inliers is at most $\delta\gamma n$, {\em i.e.,} a $(1+\epsilon, 1+\delta)$-approximation is always guaranteed (otherwise, the ratio $\frac{|P_{opt}\cap Q|}{|Q|}>\frac{\delta}{1+\delta}$ and we can continue to sample a point from $P_{opt}$ so as to update $O_i$). 

To remove the oracle for guessing the cluster containing $q$, we can enumerate all the possible $k$ cases; since we add $k(\frac{2}{\epsilon}+1)$ points to $T_1, T_2, \cdots, T_k$, it generates $k^{k(\frac{2}{\epsilon}+1)}=2^{k\log k (\frac{2}{\epsilon}+1)}$ solutions in total, and at least one yields  a $(1+\epsilon, 1+\delta)$-approximation with probability $(1-\gamma)(\frac{\delta}{1+\delta})^{k(\frac{2}{\epsilon}+1)}$ (by the same manner for proving Theorem~\ref{the-outlier}).

\begin{theorem}
\label{the-kcenter}
Let $(P, \gamma)$ be an instance of $k$-center clustering with outliers. Given two parameters $\epsilon, \delta\in (0,1)$, there exists an algorithm that outputs a $(1+\epsilon, 1+\delta)$-approximation with probability $(1-\gamma)(\frac{\delta}{1+\delta})^{k(\frac{2}{\epsilon}+1)}$. The running time is $O(2^{k\log k (\frac{2}{\epsilon}+1)}(n+\frac{1}{\epsilon^5})d)$.

If one repeatedly runs the algorithm $O(\frac{1}{1-\gamma}(1+\frac{1}{\delta})^{k(\frac{2}{\epsilon}+1)})$ times, with constant probability, the algorithm outputs a $(1+\epsilon,1+\delta)$-approximation solution.

\end{theorem}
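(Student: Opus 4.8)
The plan is to carry over the analysis of Theorem~\ref{the-outlier} essentially verbatim, but tracking $k$ core-sets $T_1,\dots,T_k$ in parallel instead of a single one, and then accounting for the extra combinatorial blow-up coming from not knowing which optimal cluster each sampled point belongs to. First I would fix the optimal solution: let $C_1,\dots,C_k$ be the optimal clusters forming $P_{opt}$, each covered by a ball of radius $r_{opt}$. The algorithm maintains $O_i=\{o^1_i,\dots,o^k_i\}$ (approximate MEB centers of the current $T_j$'s), sets $t=(1+\delta)\gamma n$, takes $Q$ to be the farthest $t$ points of $P$ to $O_i$ under the distance $f(O_i,p)=\min_{j}\|p-o^j_i\|$, samples $q\in Q$ uniformly, and (given an oracle) adds $q$ to the appropriate $T_{j_q}$. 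The first thing to verify is the analogue of Lemma~\ref{lem-outlier-1}: the initial point lies in $P_{opt}$ with probability $1-\gamma$, and in each round, as long as fewer than $\delta\gamma n$ inliers are currently uncovered by $\mathbb{B}(O_i,(1+\epsilon)r_{opt})$, the same counting argument as~(\ref{for-lem-outlier-1}) gives $|P_{opt}\cap Q|/|Q|\ge \delta/(1+\delta)$, so the sampled point lies in $P_{opt}$ with probability at least $\delta/(1+\delta)$; otherwise we are already done.

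Next I would argue the progress bound. The key point is that adding a point $q\in C_j\setminus\mathbb{B}(o^j_i,(1+\epsilon)r_{opt})$ to $T_j$ is exactly one greedy step of the B\u{a}doiu--Clarkson construction applied to $C_j$ (with $P$ replaced by $C_j$, so $Rad(C_j)\le r_{opt}$), and Lemma~\ref{lem-outlier-2} together with Theorem~\ref{the-newbc} (using $s=\epsilon/(2+\epsilon)$, hence core-set size $\frac 2\epsilon+1$) shows each $C_j$ needs at most $\frac 2\epsilon+1$ such steps before $C_j\subset\mathbb{B}(o^j_i,(1+\epsilon)r_{opt})$. So after at most $k(\frac 2\epsilon+1)$ successful rounds, every ``large enough'' cluster is covered within radius $(1+\epsilon)r_{opt}$; I'd then observe, as in the ``Note'' in the text, that any inliers in clusters that got entirely skipped number at most $\delta\gamma n$ in total (else the ratio $|P_{opt}\cap Q|/|Q|$ would still exceed $\delta/(1+\delta)$ and the algorithm would keep making progress), so $\mathbb{B}(O_i,(1+\epsilon)r_{opt})$ excludes at most $(1+\delta)\gamma n$ points --- a $(1+\epsilon,1+\delta)$-approximation. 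Chaining the per-round success probability over $z=k(\frac 2\epsilon+1)$ rounds (plus the initial $1-\gamma$) gives success probability $(1-\gamma)(\frac{\delta}{1+\delta})^{k(\frac 2\epsilon+1)}$, exactly as in Theorem~\ref{the-outlier}.

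For the running time: each round computes $k$ approximate MEB centers in $O(\frac{1}{\epsilon^5}d)$ time apiece (via the subroutine of~\cite{badoiu2003smaller}, with $|T_j|=O(1/\epsilon)$ and $\xi=O(\epsilon^2)$), and $Q$ and $l_i$ are found in $O(knd)=O(nd)$ time for constant $k$; over $z=O(k/\epsilon)$ rounds and the $k^{k(2/\epsilon+1)}=2^{k\log k(2/\epsilon+1)}$ oracle guesses this yields the claimed $O(2^{k\log k(2/\epsilon+1)}(n+\frac{1}{\epsilon^5})d)$ bound. The final amplification statement follows by independent repetition: running the whole procedure $O(\frac{1}{1-\gamma}(1+\frac 1\delta)^{k(2/\epsilon+1)})$ times makes the probability that no run succeeds a small constant. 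I expect the main obstacle to be the ``missing clusters'' subtlety in the progress argument --- making rigorous that skipped clusters collectively cost at most $\delta\gamma n$ uncovered inliers, so that the bi-criteria guarantee survives even when $|O_i|<k$; everything else is a faithful, if bookkeeping-heavy, replay of Sections~\ref{sec-newanalysis} and~\ref{sec-quality}.
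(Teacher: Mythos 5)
Your proposal replays the paper's own argument almost verbatim: maintain $k$ core-sets $T_1,\dots,T_k$ with centers $O_i$, sample from the farthest $(1+\delta)\gamma n$ points, use an oracle (removed by enumerating $k^{k(2/\epsilon+1)}$ assignments) to place each sampled point in the correct cluster, invoke Theorem~\ref{the-newbc} with $s=\epsilon/(2+\epsilon)$ to bound each cluster's core-set by $\frac{2}{\epsilon}+1$, and handle small skipped clusters via the termination test $l_i\le(1+\epsilon)r_{opt}$ — so the decomposition, parameters, success probability, and running-time bookkeeping all match. One small slip worth fixing: the counting bound $|P_{opt}\cap Q|/|Q|\ge\delta/(1+\delta)$ of~(\ref{for-lem-outlier-1}) holds \emph{unconditionally} (it only uses $|Q\setminus P_{opt}|\le\gamma n$ and $|Q|=(1+\delta)\gamma n$), so it does not depend on the ``fewer than $\delta\gamma n$ uncovered inliers'' hypothesis you wrote — indeed that hypothesis is exactly the regime in which $l_i\le(1+\epsilon)r_{opt}$ and the algorithm has already terminated; the state-dependent part of the argument is only that a sampled point of $P_{opt}\cap Q$ is a useful greedy step whenever $l_i>(1+\epsilon)r_{opt}$.
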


Similar to our discussion on the running time for MEB with outliers in Section~\ref{sec-quality}, B\u{a}doiu {\em et al.}~\cite{BHI} also achieved a linear time bi-criteria approximation for the $k$-center clustering with outliers problem (see Section 4 in their paper). However, the hidden constant of their running time is exponential in $(\frac{k}{\epsilon\mu})^{O(1)}$ (where $\mu$ is defined in ~\cite{BHI}, and should be $\delta\gamma$ to ensure a $(1+\epsilon, 1+\delta)$-approximation) that is much larger than ``$k\log k (\frac{2}{\epsilon}+1)$'' in Theorem~\ref{the-kcenter}. 

\vspace{0.05in}
\textbf{Sub-linear time algorithm.} The linear time algorithm can be further improved to be sub-linear time; the idea is similar to that for designing sub-linear time algorithm for MEB with outliers in Section~\ref{sec-oulier-improve}. First, we follow Definition~\ref{def-outlier-general} and define the shape set $\mathcal{X}$, where each $x\in\mathcal{X}$ is union of $k$ balls in the space; 
 the center $c(x)$ should be the set of its $k$ ball centers, say $c(x)=\{o^1_x, o^2_x, \cdots, o^k_x\}$, and the size $s(x)$ is the radius, {\em i.e.,} $x=\cup^k_{j=1}\mathbb{B}(o^j_x, s(x))$. Obviously, if $x$ is a feasible solution for the instance $(P, \gamma)$, $\Big|P\cap (\cup^k_{j=1}\mathbb{B}(o^j_x, s(x)))\Big|$ should be at least $(1-\gamma)n$. Also, define the distance function $f(c(x), p)=\min_{1\leq j\leq k}||p-o^j_x||$. It is easy to verify that the shape set $\mathcal{X}$ satisfies Property~\ref{prop-1}, \ref{prop-2}, and \ref{prop-4} defined in Section~\ref{sec-ext}.
From Lemma~\ref{lem-outlier-general1-general}, we know that it is possible to obtain a point in $P_{opt}\cap Q$ with probability $(1-\eta_1) \frac{\delta}{3(1+\delta)}$. Further, we can estimate the value $l_i$ and select the best candidate solution based on Lemma~\ref{lem-outlier-general2-generalize}. Overall, we have the following theorem.


\begin{theorem}
\label{the-kcenter2}
Let $(P, \gamma)$ be an instance of $k$-center clustering with outliers. Given the parameters $\epsilon, \delta, \eta_1, \eta_2\in (0,1)$, there exists an algorithm that outputs a $(1+\epsilon, 1+O(\delta))$-approximation with probability $(1-\gamma)\big((1-\eta_1)(1-\eta_2)\frac{\delta}{3(1+\delta)}\big)^{k(\frac{2}{\epsilon}+1)}$. The running time is $\tilde{O}(2^{k\log k (\frac{2}{\epsilon}+1)}(\frac{1}{\delta^2\gamma}+\frac{1}{\epsilon^5})d)$.

If one repeatedly runs the algorithm  $N=O\Big(\frac{1}{1-\gamma}\big(\frac{1}{1-\eta_1}(3+\frac{3}{\delta})\big)^{k(\frac{2}{\epsilon}+1)}\Big)$ times with setting $\eta_2=O(\frac{1}{2^{k\log k (\frac{2}{\epsilon}+1)}N})$, with constant probability, the algorithm outputs a $(1+\epsilon,1+O(\delta))$-approximation solution.

\end{theorem}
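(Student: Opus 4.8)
The plan is to take the linear-time algorithm behind Theorem~\ref{the-kcenter} and replace its two $\Omega(nd)$ bottlenecks --- computing the set $Q$ of the $t=(1+\delta)\gamma n$ farthest points to the current center set, and evaluating the covering radius of each enumerated candidate --- by the sampling routines of Lemma~\ref{lem-outlier-general1-general} and Lemma~\ref{lem-outlier-general2-generalize}. Concretely, I would instantiate the framework of Section~\ref{sec-ext} with the shape set $\mathcal{X}=\{\bigcup_{j=1}^k\mathbb{B}(o^j,r)\mid o^1,\dots,o^k\in\mathbb{R}^d,\ r\ge 0\}$, center $c(x)=\{o^1_x,\dots,o^k_x\}$, size $s(x)=r$, and distance function $f(c(x),p)=\min_{1\le j\le k}\|p-o^j_x\|$; as already observed in Section~\ref{sec-ext-kcenter}, this $\mathcal{X}$ satisfies Properties~\ref{prop-1}, \ref{prop-2} and~\ref{prop-4}, so Claim~\ref{claim-prop} and the two generalized lemmas apply verbatim.

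The algorithm maintains $k$ core-sets $T_1,\dots,T_k$ (the intended core-sets of the optimal clusters $C_1,\dots,C_k$), runs for $z_k:=k(\frac{2}{\epsilon}+1)$ rounds, and in round $i$: computes an approximate MEB center $o^j_i$ of each $T_j$ with the accuracy $\xi=\frac{\epsilon}{2+\epsilon}\cdot\frac{\epsilon}{1+\epsilon}$ of Theorem~\ref{the-newbc} and sets $O_i=\{o^1_i,\dots,o^k_i\}$; draws $n'=O(\frac{1}{\delta\gamma}\log\frac{1}{\eta_1})$ uniform samples and lets $Q'$ be the farthest $\frac{3}{2}(1+\delta)\gamma n'$ of them to $O_i$; picks a random $q\in Q'$; and branches over the $k$ optimal clusters that $q$ might belong to, adding $q$ to the corresponding $T_j$ in each branch. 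This yields $k^{z_k}=2^{k\log k(\frac{2}{\epsilon}+1)}$ candidate center sets. For correctness I would argue that, conditioned on the first random point lying in $P_{opt}$, on every subsequent $q$ lying in $P_{opt}\cap Q$ --- which by Lemma~\ref{lem-outlier-general1-general} occurs with probability at least $(1-\eta_1)\frac{\delta}{3(1+\delta)}$ per round, since a uniformly random element of $Q'$ falls in $P_{opt}\cap Q$ with probability $\ge\frac{\delta}{3(1+\delta)}$ --- and on following the branch that always guesses the correct cluster, each $T_j$ evolves exactly as a core-set grown by the procedure analysed in Theorem~\ref{the-newbc} and its following remark; hence after at most $\frac{2}{\epsilon}+1$ additions to $T_j$, the ball $\mathbb{B}(o^j_i,(1+\epsilon)r_{opt})$ covers all of $C_j$. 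The deficient-center subtlety is handled as in the linear-time discussion of Section~\ref{sec-ext-kcenter}: if the current $O_i$ (possibly with fewer than $k$ balls) fails to cover $(1-(1+\delta)\gamma)n$ points, then $\frac{|P_{opt}\cap Q|}{|Q|}>\frac{\delta}{1+\delta}$, so a point of $P_{opt}\cap Q$ remains to be sampled and progress continues; since only $z_k$ core-set additions can ever occur before every grown core-set is complete, $z_k$ rounds suffice and a $(1+\epsilon,1+\delta)$-approximation appears among the candidates.

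To pick the output, for each candidate center $O_i$ I would invoke the Generalized Sandwich Lemma~\ref{lem-outlier-general2-generalize} with $n''=O(\frac{1}{\delta^2\gamma}\log\frac{1}{\eta_2})$ fresh uniform samples to get $\tilde l_i$ with $\tilde l_i\le l_i$ and $|P\setminus x(O_i,\tilde l_i)|\le(1+O(\delta))\gamma n$, each holding with probability $\ge1-\eta_2$, and return the candidate minimising $\tilde l_i$; these two inequalities force the returned union of balls to be a $(1+\epsilon,1+O(\delta))$-approximation whenever some candidate is a genuine $(1+\epsilon,1+\delta)$-approximation. Multiplying the per-event probabilities --- $1-\gamma$ for the initial point and $(1-\eta_1)$, $\frac{\delta}{3(1+\delta)}$, $(1-\eta_2)$ for each of the $z_k$ rounds --- gives precisely $(1-\gamma)\big((1-\eta_1)(1-\eta_2)\frac{\delta}{3(1+\delta)}\big)^{k(\frac{2}{\epsilon}+1)}$. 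For the running time, there are $2^{k\log k(\frac{2}{\epsilon}+1)}$ branches; inside a branch each round recomputes one approximate MEB center in $O(\frac{1}{\epsilon^5}d)$ time (each $|T_j|=O(1/\epsilon)$) and processes $n'+n''=\tilde O(\frac{1}{\delta^2\gamma})$ samples at $O(kd)$ cost per distance; folding $\mathrm{poly}(k,1/\epsilon)$ factors into the $\tilde O(\cdot)$ and the exponential term gives $\tilde O\big(2^{k\log k(\frac{2}{\epsilon}+1)}(\frac{1}{\delta^2\gamma}+\frac{1}{\epsilon^5})d\big)$, independent of $n$. Finally, to reach constant success probability I would follow Corollary~\ref{cor-outlier2}: write the probability as $\mathcal{P}_1\cdot\mathcal{P}_2$ with $\mathcal{P}_1=(1-\gamma)\big((1-\eta_1)\frac{\delta}{3(1+\delta)}\big)^{z_k}$ and $\mathcal{P}_2=(1-\eta_2)^{z_k}$, run the procedure $N=O(1/\mathcal{P}_1)$ times so that some generated candidate is good with constant probability, and set $\eta_2=O\big(\frac{1}{2^{k\log k(\frac{2}{\epsilon}+1)}N}\big)$ so a union bound over all $O(2^{k\log k(\frac{2}{\epsilon}+1)}N)$ Sandwich-Lemma estimates keeps every one accurate with constant probability.

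The hard part will be the correctness bookkeeping of the second paragraph: showing that, in spite of the sampling error (each chosen point is only probably in $P_{opt}\cap Q$), possibly deficient center sets ($|O_i|<k$), and the $\min$-over-$k$ distance function, the joint evolution of the $k$ core-sets is still governed round-by-round by the single-ball analysis of Theorem~\ref{the-newbc}, and that exactly $k(\frac{2}{\epsilon}+1)$ rounds always suffice along the correct enumeration branch. The remaining pieces --- the probability multiplication, the running-time accounting, and the boosting argument --- are routine adaptations of the corresponding steps for MEB with outliers in Sections~\ref{sec-quality} and~\ref{sec-oulier-improve}.
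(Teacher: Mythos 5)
Your proposal is correct and takes essentially the same approach as the paper: instantiate the MEX framework with $\mathcal{X}$ as unions of $k$ balls, verify Properties~\ref{prop-1}--\ref{prop-4}, replace the farthest-$t$-points step of Theorem~\ref{the-kcenter} with Lemma~\ref{lem-outlier-general1-general}, select the best candidate via Lemma~\ref{lem-outlier-general2-generalize}, and boost as in Corollary~\ref{cor-outlier2}. The paper merely sketches this; your write-up supplies the bookkeeping (deficient centers, branch enumeration, round-by-round probability accounting) that the paper leaves implicit, but the underlying argument is identical.
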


\section{Flat Fitting with Outliers}
\label{sec-flat}
Let $j$ be a fixed integer between $0$ and $d$. Given a $j$-dimensional flat $\mathcal{F}$ and a point $p\in\mathbb{R}^d$, we define their distance, $dist(\mathcal{F}, p)$, to be the Euclidean distance from $p$ to its projection onto $\mathcal{F}$. Let $P$ be a set of $n$ points in $\mathbb{R}^d$. The problem of flat fitting is to find the $j$-dimensional flat $\mathcal{F}$ that minimizes $\max_{p\in P}dist(\mathcal{F}, p)$. It is easy to see that the MEB problem is the case $j=0$ of the flat fitting problem. Furthermore, given a parameter $\gamma\in (0,1)$, the flat fitting with outliers problem is to find a subset $P'\subset P$ with size $(1-\gamma)n$ such that  $\max_{p\in P'}dist(\mathcal{F}, p)$ is minimized. Similar to MEB with outliers, we also use $P_{opt}$ to denote the optimal subset. Before presenting our algorithms for flat fitting with outliers, we first introduce the linear time algorithm from Har-Peled and Varadarajan~\cite{DBLP:journals/dcg/Har-PeledV04} for the vanilla version (without outliers). 

We start from the case $j=1$, {\em i.e.,} the flat $\mathcal{F}$ is a line in the space. Roughly speaking, their algorithm is an iterative procedure to update the solution round by round, until it is close enough to the optimal line $l_{opt}$. There are two parts in the algorithm. \textbf{(1)} It picks an arbitrary point $p_\Delta\in P$ and let $q_\Delta$ be the farthest point of $P$ from $p_\Delta$; it can be proved that the line passing through $p_\Delta$ and $q_\Delta$, denoted as $l_0$, is a good initial solution that yields a $4$-approximation with respect to the objective function. \textbf{(2)} In each of the following  rounds, the algorithm updates the solution from $l_{i-1}$ to $l_i$ where $i\geq 1$ is the current number of rounds: let $p_i$ be the farthest point of $P$ from $l_{i-1}$ and let $h_i$ denote the $2$-dimensional flat spanned by $p_i$ and $l_{i-1}$; then the algorithm computes a set of $O(\frac{1}{\epsilon^8}\log^2\frac{1}{\epsilon})$ lines on $h_i$, and picks one of them as $l_i$ via an ``oracle''. They proved that the improvement from $l_{i-1}$ to $l_i$ is significant enough; thus, after running $\nu=O(\frac{1}{\epsilon^3}\log\frac{1}{\epsilon})$ rounds, it is able to achieve a $(1+\epsilon)$-approximation. To remove the ``oracle'', the algorithm can enumerate all the $O(\frac{1}{\epsilon^8}\log^2\frac{1}{\epsilon})$ lines on $h_i$, and thus the total running time is $O\big(2^{\frac{1}{\epsilon^3}\log^2\frac{1}{\epsilon}} n d\big)$. 

\vspace{0.05in}

\textbf{Linear time algorithm.} Now we consider to adapt the above algorithm to the case with outliers, where in fact the idea is similar to the idea proposed in Section~\ref{sec-quality} for MEB with outliers. For simplicity, we still use the same notations as above. Consider the part \textbf{(1)} first. If we randomly pick a point $p_{\Delta}$ from $ P$, with probability $1-\gamma$, it belongs to $P_{opt}$; further, we randomly pick a point, denoted as $q_\Delta$, from the set of $(1+\delta_0)\gamma n$ farthest points of $P$ from $p_{\Delta}$, where the value of $\delta_0$ will be determined below. Obviously, with probability $\frac{\delta_0}{1+\delta_0}$, $q_\Delta\in P_{opt}$. Denote by $P_0=\{p\in P_{opt}\mid ||p-p_\Delta||\leq ||q_\Delta-p_\Delta||\}$. Then we have the following lemma.

\begin{lemma}
\label{lem-flat-1}
Denote by $l_0$ the line passing through $p_\Delta$ and $q_\Delta$. Then, with probability $(1-\gamma)(\frac{\delta_0}{1+\delta_0})$, 
\begin{eqnarray}
\max_{p\in P_0}dist(l_0, p)\leq 4 \max_{p\in P_{0}}dist(l_{opt}, p)\leq 4 \max_{p\in P_{opt}}dist(l_{opt}, p).\label{for-flat-1}
\end{eqnarray}
Also, the size of $P_0$ is at least $\big(1-(1+\delta_0)\gamma\big)n$. 
\end{lemma}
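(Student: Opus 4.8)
The plan is to establish Lemma~\ref{lem-flat-1} in two parts that mirror the structure of its statement, reusing the random-sampling argument from Section~\ref{sec-quality} for the cardinality claim and the geometric argument of Har-Peled and Varadarajan for the $4$-approximation claim.

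First I would handle the claim on $|P_0|$. By construction $P_0=\{p\in P_{opt}\mid \|p-p_\Delta\|\le \|q_\Delta-p_\Delta\|\}$, so $P_{opt}\setminus P_0$ consists of those optimal points strictly farther from $p_\Delta$ than $q_\Delta$. Since $q_\Delta$ is drawn from the set $Q_\Delta$ of the $(1+\delta_0)\gamma n$ farthest points of $P$ from $p_\Delta$, any point of $P$ that is farther from $p_\Delta$ than $q_\Delta$ must lie in $Q_\Delta$; hence $P_{opt}\setminus P_0\subseteq P_{opt}\cap Q_\Delta$, and in particular $|P_{opt}\setminus P_0|\le |Q_\Delta|=(1+\delta_0)\gamma n$. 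Combined with $|P_{opt}|=(1-\gamma)n$ this gives $|P_0|\ge (1-\gamma)n-(1+\delta_0)\gamma n=\big(1-(1+\delta_0)\gamma\big)n$, which is the second assertion. (A small care point: strictness versus non-strictness in the inequality defining $P_0$ and ties among distances — these can be handled exactly as in the proof of Claim~\ref{claim-prop}, by fixing an arbitrary tie-breaking order, so I would mention this rather than belabor it.)

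Next I would prove the chain~(\ref{for-flat-1}). The rightmost inequality $\max_{p\in P_0}dist(l_{opt},p)\le \max_{p\in P_{opt}}dist(l_{opt},p)$ is immediate from $P_0\subseteq P_{opt}$. For the left inequality I would condition on the good event that $p_\Delta\in P_{opt}$ and $q_\Delta\in P_{opt}$, which happens with probability at least $(1-\gamma)\frac{\delta_0}{1+\delta_0}$ by the same reasoning as in Lemma~\ref{lem-outlier-1} (the first point lands in $P_{opt}$ with probability $1-\gamma$, and the second, drawn from $Q_\Delta$, lands in $P_{opt}$ with probability at least $\frac{|P_{opt}\cap Q_\Delta|}{|Q_\Delta|}\ge\frac{\delta_0}{1+\delta_0}$ since $|Q_\Delta\setminus P_{opt}|\le\gamma n$). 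On this event both $p_\Delta$ and $q_\Delta$ belong to $P_0$, so within the point set $P_0$ the pair $(p_\Delta,q_\Delta)$ is exactly a ``farthest-pair-type'' initialization: $q_\Delta$ is the farthest point of $P$ from $p_\Delta$, hence a fortiori the farthest point of $P_0\subseteq P$ from $p_\Delta$. This is precisely the hypothesis under which the part \textbf{(1)} analysis of~\cite{DBLP:journals/dcg/Har-PeledV04} shows the line $l_0$ through $p_\Delta$ and $q_\Delta$ is a $4$-approximation of the optimal fitting line \emph{for the point set $P_0$}; applying that result with $P_0$ in place of $P$ yields $\max_{p\in P_0}dist(l_0,p)\le 4\max_{p\in P_0}dist(l_{opt},p)$, noting that $\max_{p\in P_0}dist(l_{opt},p)$ only upper-bounds the optimal fitting value for $P_0$, which is all that is needed.

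The main obstacle I anticipate is the second step: one must verify that the ``$p_\Delta$ arbitrary, $q_\Delta$ farthest from $p_\Delta$'' initialization lemma of~\cite{DBLP:journals/dcg/Har-PeledV04} really does apply verbatim to the sub-point-set $P_0$ — in particular that it does not secretly use the fact that $q_\Delta$ is farthest over the \emph{whole} input rather than over $P_0$, and that comparing against $\max_{p\in P_0}dist(l_{opt},p)$ (rather than the true optimum of $P_0$, which could be smaller) is enough. Since $q_\Delta$ farthest from $p_\Delta$ over all of $P$ is a stronger property than farthest over $P_0$, and the $4$-approximation bound is monotone in the comparison value, both points go through, but this is the place where the argument needs to be stated carefully rather than treated as routine.
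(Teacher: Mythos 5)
Your overall plan follows what the paper (whose own proof is essentially a one-line deferral to ``straightforward'' plus a citation) must have in mind: condition on $p_\Delta, q_\Delta\in P_{opt}$, use the $\frac{\delta_0}{1+\delta_0}$ calculation for the probability, and apply the Har-Peled--Varadarajan initialization bound to the set $P_0$. The probability computation and the rightmost inequality of~(\ref{for-flat-1}) are handled correctly. But there are two concrete problems.

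First, the cardinality step contains an arithmetic error. You correctly obtain $|P_{opt}\setminus P_0|\le (1+\delta_0)\gamma n$, but then write $|P_0|\ge (1-\gamma)n-(1+\delta_0)\gamma n=\big(1-(1+\delta_0)\gamma\big)n$. In fact $(1-\gamma)n-(1+\delta_0)\gamma n=\big(1-(2+\delta_0)\gamma\big)n$, which is weaker than the lemma's stated bound by an additive $\gamma n$. This is not a harmless typo: with the inclusion $P_{opt}\setminus P_0\subseteq P_{opt}\cap Q_\Delta$ and the trivial bound $|P_{opt}\cap Q_\Delta|\le|Q_\Delta|$, the argument simply does not yield $(1-(1+\delta_0)\gamma)n$. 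To reach the lemma's stated constant one would need $|P_{opt}\setminus P_0|\le\delta_0\gamma n$, which this line of reasoning does not give (the outliers among $Q_\Delta$ can all be closer to $p_\Delta$ than $q_\Delta$, in which case almost all of $Q_\Delta\setminus\{q_\Delta\}$ may lie in $P_{opt}\setminus P_0$). You should have flagged the discrepancy rather than asserted the false equality; as written, the proposal presents a wrong computation as if it closes the gap.

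Second, to invoke the initialization result of Har-Peled--Varadarajan on $P_0$, you justify the ``farthest point'' hypothesis by claiming ``$q_\Delta$ is the farthest point of $P$ from $p_\Delta$, hence a fortiori the farthest point of $P_0\subseteq P$.'' This premise is false: $q_\Delta$ is a uniformly random element of the set $Q_\Delta$ of $(1+\delta_0)\gamma n$ farthest points, not the single farthest point. The conclusion you need (that $q_\Delta$ is the farthest point of $P_0$ from $p_\Delta$) is nevertheless true, but for a different and more elementary reason: $P_0$ is, by its very definition, the set of points of $P_{opt}$ at distance at most $\|q_\Delta-p_\Delta\|$ from $p_\Delta$, so $q_\Delta$ realizes the maximum over $P_0$ by construction. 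This is exactly the mechanism that makes the Har-Peled--Varadarajan bound applicable to $P_0$; you should argue from the definition of $P_0$ rather than through the incorrect ``farthest in all of $P$'' claim.
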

It is straightforward to obtain the size of $P_0$. The inequality (\ref{for-flat-1}) directly comes from the aforementioned result of~\cite{DBLP:journals/dcg/Har-PeledV04}, as long as $p_\Delta$ and $q_\Delta\in P_{opt}$. So we can use the line $l_0$ as our initial solution. Then, we can apply the same random sampling idea to select the point $p_i$ in the $i$-th round. Namely, we randomly pick a point as $p_i$ from the set of $(1+\delta_0)\gamma n$ farthest points of $P$ from $l_i$. Moreover, we need to shrink the set $P_{i-1}$ to $P_i=\{p\in P_{i-1}\mid dist(l_{i-1}, p)\leq dist(l_{i-1}, p_i)\}$. Similar to Lemma~\ref{lem-flat-1}, we can show that the improvement from $l_{i-1}$ to $l_{i}$ is significant enough with probability $(1-\gamma)(\frac{\delta_0}{1+\delta_0})^{i+1}$, and the size of $P_i$ is at least $\big(1-(1+(i+1)\delta_0)\gamma\big) n$. After running $\nu$ rounds, we obtain the line $l_\nu$ such that $\max_{p\in P_\nu}dist(l_\nu, p)\leq (1+\epsilon)  \max_{p\in P_{opt}}dist(l_{opt}, p)$, and $|P_\nu|\geq \big(1-(1+(\nu+1)\delta_0)\gamma\big)n$. So if we set $\delta_0=\frac{\delta}{\nu+1}$ with a given $\delta\in (0,1)$, the line $l_\nu$ will be a bi-criteria $(1+\epsilon, 1+\delta)$-approximation of the instance $(P, \gamma)$. By using the idea in~\cite{DBLP:journals/dcg/Har-PeledV04}, we can extend the result to the case $j>1$ with $\nu=\frac{e^{O(j^2)}}{\epsilon^{2j+1}}\log\frac{1}{\epsilon}$. We refer the reader to~\cite{DBLP:journals/dcg/Har-PeledV04} for more details. 

\begin{theorem}
\label{the-flat}
Let $(P, \gamma)$ be an instance of $j$-dimensional flat fitting with outliers. Given two parameters $\epsilon, \delta\in (0,1)$, there exists an algorithm that outputs a $(1+\epsilon, 1+\delta)$-approximation with probability $(1-\gamma)\big(\frac{1}{2}\big)^{g(j, \epsilon)}$ where $g(j, \epsilon)=poly(e^{O(j^2)}, \frac{1}{\epsilon^j})$. The running time is $O(2^{g'(j, \epsilon)}nd)$ where $g'(j, \epsilon)=poly(e^{O(j^2)}, \frac{1}{\epsilon^j})$.

If one repeatedly runs the algorithm $ \frac{2^{g(j, \epsilon)}}{1-\gamma}$ times, with constant probability, the algorithm outputs a $(1+\epsilon,1+\delta)$-approximation solution.

\end{theorem}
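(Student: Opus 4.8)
The plan is to package the algorithm sketched above into an induction over the $\nu$ rounds of the Har-Peled--Varadarajan procedure and then to eliminate the ``oracle'' by exhaustive branching. Fix $\nu=\frac{e^{O(j^2)}}{\epsilon^{2j+1}}\log\frac{1}{\epsilon}$, the number of rounds their algorithm needs to reach a $(1+\epsilon)$-approximation for $j$-dimensional flat fitting, and set the per-round sampling parameter $\delta_0=\frac{\delta}{\nu+1}$. The initialization draws $p_\Delta$ uniformly from $P$ and $q_\Delta$ uniformly from the $(1+\delta_0)\gamma n$ farthest points of $P$ to $p_\Delta$; by the same counting as in~(\ref{for-lem-outlier-1}), both land in $P_{opt}$ with probability at least $(1-\gamma)\cdot\frac{\delta_0}{1+\delta_0}$, and when they do, Lemma~\ref{lem-flat-1} certifies that the initial line $l_0$ is a $4$-approximation on the set $P_0$ with $|P_0|\geq(1-(1+\delta_0)\gamma)n$.

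Next I would run the inductive step. In round $i$, sample $p_i$ uniformly from the $(1+\delta_0)\gamma n$ farthest points of $P$ to $l_{i-1}$, shrink the feasible set to $P_i=\{p\in P_{i-1}\mid dist(l_{i-1},p)\leq dist(l_{i-1},p_i)\}$, form the $2$-dimensional flat $h_i$ spanned by $p_i$ and $l_{i-1}$, enumerate the $O(\frac{1}{\epsilon^8}\log^2\frac{1}{\epsilon})$ candidate lines on $h_i$ from~\cite{DBLP:journals/dcg/Har-PeledV04}, and branch on all of them (this is what replaces the oracle). Conditioned on $p_i\in P_{opt}$, $p_i$ is the genuine farthest point of $P_i$ from $l_{i-1}$, so one branch yields an $l_i$ enjoying the per-round improvement guaranteed in~\cite{DBLP:journals/dcg/Har-PeledV04}; a union bound over the $(\nu+1)$ conditioning events shows that with probability at least $(1-\gamma)\big(\frac{\delta_0}{1+\delta_0}\big)^{\nu+1}$ every sampled point lies in $P_{opt}$, and then some root-to-leaf path produces a line $l_\nu$ with $\max_{p\in P_\nu}dist(l_\nu,p)\leq(1+\epsilon)\max_{p\in P_{opt}}dist(l_{opt},p)$. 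Since each round deletes at most $\delta_0\gamma n$ further inliers, $|P_\nu|\geq\big(1-(1+(\nu+1)\delta_0)\gamma\big)n=\big(1-(1+\delta)\gamma\big)n$, so $l_\nu$ is the claimed $(1+\epsilon,1+\delta)$-approximation.

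For the bounds I would substitute $\delta_0=\frac{\delta}{\nu+1}$ into $\big(\frac{\delta_0}{1+\delta_0}\big)^{\nu+1}$ and absorb the resulting dependence on $\epsilon,\delta,j$ (recalling that $\nu$ is already of the form $poly(e^{O(j^2)},\frac{1}{\epsilon^j})$) into an exponent, writing the success probability as $(1-\gamma)\big(\frac{1}{2}\big)^{g(j,\epsilon)}$ with $g(j,\epsilon)=poly(e^{O(j^2)},\frac{1}{\epsilon^j})$; the repetition argument ($\frac{2^{g(j,\epsilon)}}{1-\gamma}$ independent runs, keep the best candidate) then gives constant success probability. The running time is dominated by the branch tree: each of the $\nu$ rounds costs one $O(nd)$ farthest-point scan and multiplies the number of branches by $O(\frac{1}{\epsilon^8}\log^2\frac{1}{\epsilon})$ (and by $(\frac{1}{\epsilon})^{O(j)}$ in the general case), so there are $2^{g'(j,\epsilon)}$ leaves and total time $O(2^{g'(j,\epsilon)}nd)$ with $g'(j,\epsilon)=poly(e^{O(j^2)},\frac{1}{\epsilon^j})$. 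The passage from $j=1$ to general fixed $j$ is exactly as in~\cite{DBLP:journals/dcg/Har-PeledV04}, only with the enlarged $\nu$.

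The step I expect to be the main obstacle is the bookkeeping of the shrinking sets $P_i$: one must check that replacing $P_{i-1}$ by $P_i$ is precisely what makes $p_i$ the true farthest point (so the geometric analysis of~\cite{DBLP:journals/dcg/Har-PeledV04} applies verbatim), that this costs only $\delta_0\gamma n$ inliers per round so the outlier budget $(1+\delta)\gamma n$ is never exceeded, and that the invariant ``$l_{i-1}$ is a good solution for $P_{i-1}$'' survives each shrink. Everything else --- the elementary counting for the sampling probabilities and the per-round improvement --- is inherited directly from Lemma~\ref{lem-flat-1} and from~\cite{DBLP:journals/dcg/Har-PeledV04}.
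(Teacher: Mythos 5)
Your proposal reconstructs the paper's argument essentially step for step: the same reduction to Har-Peled--Varadarajan's iterative procedure, the same choice $\delta_0=\frac{\delta}{\nu+1}$, the same use of Lemma~\ref{lem-flat-1} for the initial line $l_0$, the same random selection of $p_i$ from the farthest $(1+\delta_0)\gamma n$ points together with the shrinking sets $P_i$, the same probability accounting $(1-\gamma)\big(\frac{\delta_0}{1+\delta_0}\big)^{\nu+1}$ and outlier budget $|P_\nu|\geq(1-(1+\delta)\gamma)n$, and the same branch-tree enumeration to remove the oracle. The bookkeeping concern you flag (that the geometric per-round improvement must survive restriction to $P_i$ and that the shrinks exhaust at most $\delta_0\gamma n$ inliers per round) is precisely the point the paper also leans on without belaboring, so your proposal is an accurate, slightly more explicit rendering of the paper's own proof.
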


\textbf{Sub-linear time algorithm.} We can view the flat fitting with outliers problem as an MEX with outliers problem. Let $r\geq 0$ and $\mathcal{F}$ be a $j$-dimensional flat. Then we can define a $j$-dimensional ``slab'' $SL(\mathcal{F}, r)=\{p\in\mathbb{R}^d\mid dist(\mathcal{F}, p)\leq r\}$, where its ``center'' and ``size'' are $\mathcal{F}$ and $r$ respectively ({\em e.g.,} a ball is a $0$-dimensional slab); the distance function $f(\mathcal{F}, p)=dist(\mathcal{F}, p)$. It is easy to see that the shape set of slabs satisfies  Property~\ref{prop-1}, \ref{prop-2}, and \ref{prop-4} defined in Section~\ref{sec-ext}. Furthermore, finding the optimal flat is equivalent to finding the smallest slab covering $(1-\gamma)n$ points of $P$. Therefore, by using  Lemma~\ref{lem-outlier-general1-general} and~\ref{lem-outlier-general2-generalize}, we achieve the following theorem.

\begin{theorem}
\label{the-flat2}
Let $(P, \gamma)$ be an instance of $j$-dimensional flat fitting with outliers. Given the parameters $\epsilon, \delta, \eta_1, \eta_2\in (0,1)$, there exists an algorithm that outputs a $(1+\epsilon, 1+O(\delta))$-approximation with probability $(1-\gamma)\big((1-\eta_1)(1-\eta_2)\frac{\delta}{3(1+\delta)}\big)^{g(j, \epsilon)}$ where $g(j, \epsilon)=poly(e^{O(j^2)}, \frac{1}{\epsilon^j})$. The running time is $O(2^{g'(j, \epsilon,\delta,\gamma)} d)$ where $g'(j, \epsilon)=poly(e^{O(j^2)}, \frac{1}{\epsilon^j}, \frac{1}{\delta}, \frac{1}{\gamma})$.

If one repeatedly runs the algorithm  $N=O\Big(\frac{1}{1-\gamma}\big(\frac{1}{1-\eta_1}(3+\frac{3}{\delta})\big)^{g(j, \epsilon)}\Big)$ times with setting $\eta_2=O(\frac{1}{2^{g(j, \epsilon)}N})$, with constant probability, the algorithm outputs a $(1+\epsilon,1+O(\delta))$-approximation solution.

\end{theorem}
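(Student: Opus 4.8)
\textbf{Proof plan for Theorem~\ref{the-flat2}.} The strategy is to mirror the passage from the linear-time to the sub-linear-time algorithm carried out for MEB with outliers (Section~\ref{sec-oulier-improve}) and for $k$-center with outliers (Section~\ref{sec-ext-kcenter}): take the linear-time algorithm of Theorem~\ref{the-flat} and replace its two $\Omega(nd)$ bottlenecks — computing the set of farthest points of $P$ to the current flat, and evaluating the objective value of every enumerated candidate flat — by the generalized Uniform-Adaptive Sampling of Lemma~\ref{lem-outlier-general1-general} and the generalized Sandwich Lemma of Lemma~\ref{lem-outlier-general2-generalize}, respectively. To do this I would first cast $j$-dimensional flat fitting with outliers as an MEX with outliers instance, exactly as sketched before the statement: $\mathcal{X}$ is the set of $j$-dimensional slabs $SL(\mathcal{F},r)$, with $c(x)=\mathcal{F}$, $s(x)=r$, and $f(\mathcal{F},p)=dist(\mathcal{F},p)$. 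Property~\ref{prop-1} holds because two slabs about the same flat are nested iff ordered by width; Property~\ref{prop-2} because the sublevel set $\{p: dist(\mathcal{F},p)\le dist(\mathcal{F},p_0)\}$ is contained in every slab about $\mathcal{F}$ that contains $p_0$; Property~\ref{prop-4} because a flat $\mathcal{F}$ and a point $p_0$ determine the minimal enclosing width $r_0=dist(\mathcal{F},p_0)$. The iterative algorithm $\Gamma$ demanded by the framework is precisely the Har-Peled–Varadarajan procedure driving Theorem~\ref{the-flat}, and the key structural fact I must transport from the remark after Theorem~\ref{the-newbc} is that a given round only needs to be handed \emph{some} point of $P_{opt}$ lying outside the current slab of width $(1+\epsilon)$ times optimal — not literally the farthest point of $P$.

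Given that, I would graft the two lemmas onto $\Gamma$ round by round. In round $i$, instead of scanning $P$ for its $(1+\delta_0)\gamma n$ farthest points to the current flat $l_{i-1}$, draw $n'=O(\frac{1}{\delta_0\gamma}\log\frac{1}{\eta_1})$ uniform samples and pick a uniformly random point among the farthest $\frac{3}{2}(1+\delta_0)\gamma n'$ of them to $l_{i-1}$; by Lemma~\ref{lem-outlier-general1-general} this point lies in $P_{opt}\cap Q$ with probability at least $(1-\eta_1)\frac{\delta_0}{3(1+\delta_0)}$, and then I would enumerate, as in~\cite{DBLP:journals/dcg/Har-PeledV04} and its $j>1$ extension, the $O(\frac{1}{\epsilon^8}\log^2\frac{1}{\epsilon})$ (resp.\ $e^{O(j^2)}$-many) candidate flats on the spanned low-dimensional flat. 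Over the $\nu=e^{O(j^2)}\epsilon^{-(2j+1)}\log\frac{1}{\epsilon}$ rounds this produces $2^{g'(j,\epsilon)}$ candidate flats per run, and along the ``correct'' enumeration branch all greedy selections land in $P_{opt}$ with probability $(1-\gamma)\big((1-\eta_1)(1-\eta_2)\frac{\delta}{3(1+\delta)}\big)^{g(j,\epsilon)}$ once the $\nu$-dependence has been absorbed into $g(j,\epsilon)=poly(e^{O(j^2)},\epsilon^{-j})$; choosing $\delta_0=\Theta(\delta/\nu)$ keeps the total number of discarded inliers at $(1+O(\delta))\gamma n$. Finally, to select the best of the $O(2^{g'}N)$ candidate flats without a linear scan, apply Lemma~\ref{lem-outlier-general2-generalize} with $n''=O(\frac{1}{\delta^2\gamma}\log\frac{1}{\eta_2})$ to each of them: the estimated slab is sandwiched between $x(l_i,\tilde l'_i)$ and $x(l_i,l_i)$, so a candidate that is a $(1+\epsilon,1+O(\delta))$-approximation is certified as such and returned. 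A union bound over all candidates forces $\eta_2=O(\frac{1}{2^{g(j,\epsilon)}N})$, and running the algorithm $N=O\big(\frac{1}{1-\gamma}(\frac{1}{1-\eta_1}(3+\frac{3}{\delta}))^{g(j,\epsilon)}\big)$ times boosts the overall success probability to a constant. For the running time, replacing $n$ in the bound of Theorem~\ref{the-flat} by $\max\{n',n''\}=\tilde{O}(\frac{1}{\delta^2\gamma})$ (times the $\nu$ and enumeration factors, which are functions of $j,\epsilon,\delta,\gamma$ only) yields $O(2^{g'(j,\epsilon,\delta,\gamma)}d)$, independent of $n$.

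The main obstacle I expect is the interaction between the Har-Peled–Varadarajan ``shrink the working set'' device (the sets $P_i$ in the linear-time proof) and the fact that in the sub-linear regime $P$ is never materialised: one has to argue that the approximation-quality bound only ever refers to distances of points of $P_{opt}$ to the maintained flats, so the shrinking is purely an analysis bookkeeping step and need not be implemented, while the $\delta_0=\Theta(\delta/\nu)$ rescaling correctly controls the multiplicative accumulation of the per-round Uniform-Adaptive Sampling failure events across all $\nu$ rounds so that the final covered-point count is still $(1+O(\delta))\gamma n$. In other words, the genuinely new verification is that the remark following Theorem~\ref{the-newbc} — far-enough inlier suffices — survives intact through the HV recursion rather than only through the MEB recursion; once that is in hand, the rest is a faithful transcription of the MEB and $k$-center arguments with ``ball'' replaced by ``slab''.
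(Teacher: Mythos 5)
Your proposal is correct and follows essentially the same route as the paper: cast $j$-dimensional flat fitting as an MEX-with-outliers instance using slabs $SL(\mathcal{F},r)$ with center $\mathcal{F}$, size $r$, and distance function $dist(\mathcal{F},\cdot)$, verify Properties~\ref{prop-1}--\ref{prop-4}, take the Har-Peled--Varadarajan iterative procedure from Theorem~\ref{the-flat} as the algorithm $\Gamma$, and replace its two linear-time scans by Lemma~\ref{lem-outlier-general1-general} and Lemma~\ref{lem-outlier-general2-generalize} respectively, with $\delta_0=\Theta(\delta/\nu)$ controlling the accumulated inlier loss and $\eta_2=O(1/(2^{g}N))$ handling the union bound. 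The paper's own proof is a one-paragraph pointer to exactly these ingredients; your write-up is a faithful (and more explicit) elaboration of it, including the correct observation that the $P_i$-shrinking in the HV analysis is a bookkeeping device living only in the proof and never materialized by the sub-linear algorithm.
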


\section{Support Vector Machine with Outliers}
\label{sec-svm}

In practice, datasets often contain outliers.  The separating margin of SVM  could be considerably deteriorated by outliers.  As mentioned in~\cite{ding2015random}, most of existing techniques~\cite{conf/aaai/XuCS06,icml2014c2_suzumura14}  for SVM outliers removal are numerical approaches ({\em e.g.,} adding some penalty item to the objective function), and only can guarantee local optimums. Ding and Xu~\cite{ding2015random} modeled SVM with outliers as a combinatorial optimization problem and provided an algorithm called ``Random Gradient Descent Tree''. 
We focus on one-class SVM with outliers first, and explain the extension for two-class SVM with outliers at the end of this section.

\subsection{One-class SVM with Outliers}

Below is the definition of the one-class SVM with outliers problem proposed in~\cite{ding2015random}. 

\begin{definition} [One-class SVM with Outliers]
\label{def-svm}
 Given a set $P$ of $n$ points in $\mathbb{R}^d$ and a small parameter $\gamma\in (0,1)$, the one-class SVM with outliers problem is to find a subset $P'\subset P$ with size $(1-\gamma)n$ and a hyperplane $\mathcal{H}$ separating the origin $o$ and $P'$, such that the distance between $o$ and $\mathcal{H}$ is maximized. 
%
\end{definition}

\textbf{Linear time algorithm.} We briefly overview the algorithm of~\cite{ding2015random}. They also considered the ``bi-criteria approximation'' with two small parameters $\epsilon, \delta\in (0,1)$: a hyperplane  $\mathcal{H}$ separates the origin $o$ and a subset $P'\subset P$ with size $\big(1-(1+\delta)\gamma\big)n$, where the distance between $o$ and $\mathcal{H}$ is at least $(1-\epsilon)$ of the optimum. The idea of~\cite{ding2015random} is based on the fact that the SVM (without outliers) problem is equivalent to the polytope distance problem in computational geometry~\cite{GJ09}. 

{\em Let $o$ be the origin and  $P$ be a given set of points in $\mathbb{R}^d$. The \textbf{polytope distance problem} is to find a point $q$ inside the convex hull of $P$ so that the distance $||q-o||$ is minimized. }

For an instance $P$ of one-class SVM, it can be proved that the vector $q_{opt}-o$, if $q_{opt}$ is the optimal solution for the polytope distance between $o$ and $P$, is the normal vector of the optimal hyperplane. We refer the reader to \cite{ding2015random,GJ09} for more details. The polytope distance problem can be efficiently solved by {\em Gilbert Algorithm}~\cite{frank1956algorithm,Gilbert:1966:IPC}. For completeness, we present it in Algorithm~\ref{alg-gilbert}.

Similar to the core-set construction method of MEB in Section~\ref{sec-newanalysis}, the algorithm also greedily improves the current solution by selecting some point $p_i$ in each iteration. Let $\rho$ be the polytope distance between $o$ and $P$, $D=\max_{p,q\in P}||p-q||$, and $E=\frac{D^2}{\rho^2}$. Given $\epsilon\in (0,1)$, it has been proved that a $(1-\epsilon)$-approximation of one-class SVM ({\em i.e.,} a separating margin with the width at least $(1-\epsilon)$ of the optimum) can be achieved by running Algorithm~\ref{alg-gilbert}  at most $2\lceil 2E/\epsilon\rceil$ steps~\cite{GJ09,C10}. To handle outliers, the algorithm of~\cite{ding2015random} follows the similar intuition of Section~\ref{sec-quality}; it replaces the step of greedily selecting the point $p_i$ by randomly sampling a point from a set $Q$, which contains the $(1+\delta)\gamma n$ points  having the smallest projection distances ({\em i.e.,} the values of the function $\frac{\langle p, v_i\rangle}{||v_i||}$ in Step 2(a) of Algorithm~\ref{alg-gilbert}). To achieve a $(1-\epsilon, 1+\delta)$-approximation with constant success probability,  the algorithm takes $O\big(\frac{1}{1-\gamma}(1+\frac{1}{\delta})^{z}\frac{D^2}{\epsilon\rho^2}nd\big)$ time, where $z=O(\frac{D^2}{\epsilon\rho^2})$. 

\begin{algorithm}[tb]
   \caption{Gilbert Algorithm \cite{Gilbert:1966:IPC,ding2015random}}
   \label{alg-gilbert}
\begin{algorithmic}
   \STATE {\bfseries Input:} A point-set $P$ in $\mathbb{R}^d$, and $N\in \mathbb{Z}^+$.
    \STATE {\bfseries Output:} $v_i$ as an approximate solution of the polytope distance between the origin and $P$.
    \begin{enumerate}
   \item Initialize $i=1$ and $v_1$ to be the closest point in $P$ to the origin $o$.
   \item Iteratively perform the following steps until $i=N$.
   \begin{enumerate}
   \item Find the point $p_i \in P$ whose orthogonal projection on the supporting line of segment $\overline{ov_{i}}$ has the closest distance to $o$ (called the projection distance of $p_{i}$), {\em i.e.,} $p_i=\arg\min_{p\in P}\{\frac{\langle p, v_i\rangle}{||v_i||}\}$, 
     where $\langle p, v_i\rangle$ is the inner product of $p$ and $v_i$ (see Figure~\ref{fig-gilbert}).
   \item Let $v_{i+1}$ be the point on segment $\overline{v_i p_i}$ closest to the origin $o$; update $i=i+1$.
    \end{enumerate}
    \end{enumerate}
\end{algorithmic}
\end{algorithm}

\begin{figure}[]
\vspace{-0.1in}
   \centering
  \includegraphics[height=1.2in]{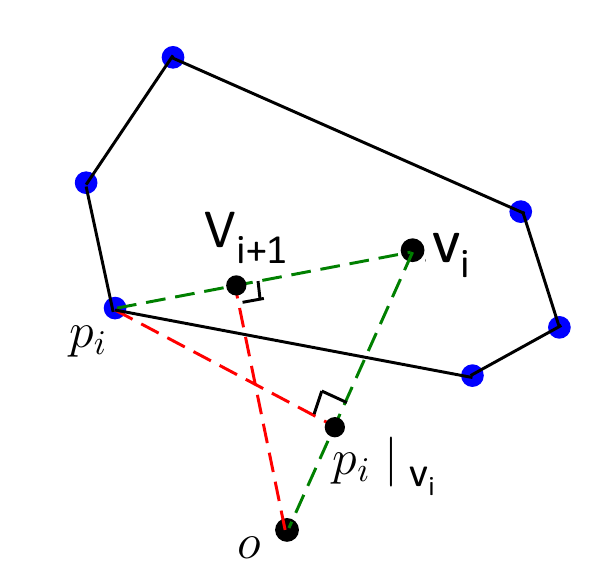}
  \vspace{-0.1in}
      \caption{An illustration of step 2 in Algorithm~\ref{alg-gilbert}; $p_i\mid_{v_i}$ is the projection of $p_i$ on $\overline{o v_i }$.}
  \label{fig-gilbert}
  \vspace{-0.1in}
\end{figure}



\begin{figure}[]
\vspace{-0.1in}
   \centering
  \includegraphics[height=1.2in]{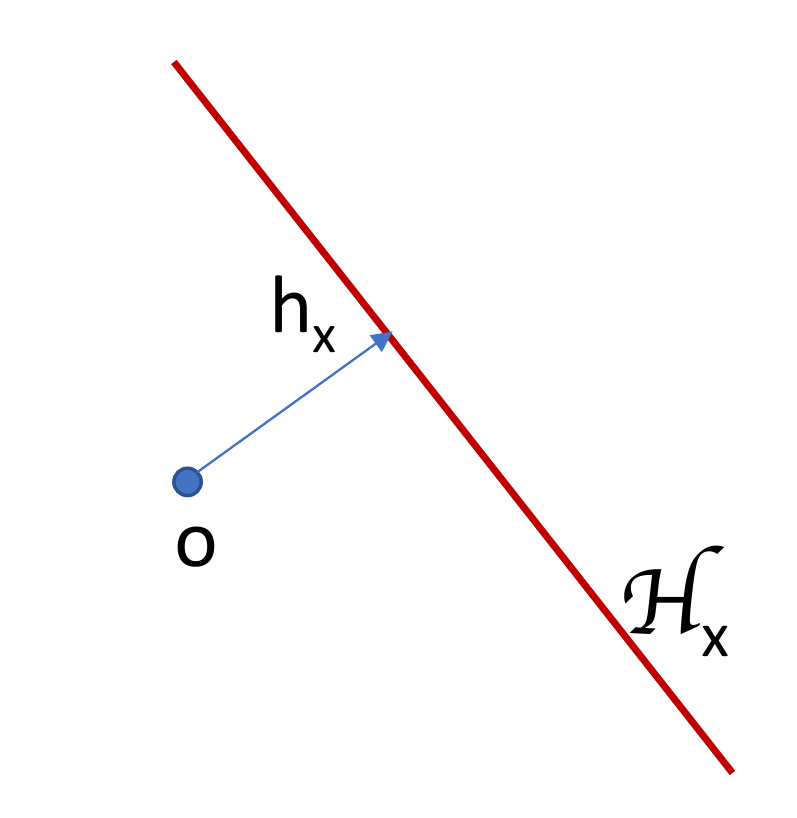}
  \vspace{-0.1in}
      \caption{An illustration for $\mathcal{H}_x$ and $h_x$.}
  \label{fig-nsvm1}
  \vspace{-0.1in}
\end{figure}

\textbf{Sub-linear time algorithm.}
We define $\mathcal{X}$ to be the set of all the closed half-spaces not covering the origin $o$ in $\mathbb{R}^d$; for each $x\in \mathcal{X}$, let $\mathcal{H}_x$ be the hyperplane enclosing $x$ and let $h_x$ be the projection of $o$ on $\mathcal{H}_x$ (see Figure~\ref{fig-nsvm1}). We suppose that the given instance $(P, \gamma)$ has feasible solution. That is, there exists at least one half-space $x\in \mathcal{X}$ that the hyperplane $\mathcal{H}_x$ separates the origin $o$ and a subset $P'$ with size $(1-\gamma)n$. We define the center $c(x)= \frac{h_x}{||h_x||}$; since the MEX with outlier problem in Definition~\ref{def-outlier-general} is a minimization problem, we design the size function  $s(x)= \frac{1}{||h_x||}$. 
%
Obviously, a $(1-\epsilon)$-approximation of the SVM with outliers problem  is equivalent to a $\frac{1}{1-\epsilon}$-approximation with respect to the size function $s(x)$. 
 We also define the distance function $f(c(x), p)=-\langle p, \frac{h_x}{||h_x||}\rangle$. It is easy to verify that the shape set $\mathcal{X}$ satisfies Property~\ref{prop-1}, \ref{prop-2}, and~\ref{prop-4}. 
 
Recall that Algorithm~\ref{alg-gilbert} selects the point $p_i=\arg\min_{p\in P}\{\frac{\langle p, v_i\rangle}{||v_i||}\}$ in each iteration. Actually, the vector $\frac{v_i}{||v_i||}$ can be viewed as a shape center 
and $p_i$  is the farthest point to $\frac{v_i}{||v_i||}$ based on the distance function $f(c(x), p)$. Moreover, the set $Q$ mentioned in the previous linear time algorithm actually is the set of the farthest $(1+\delta)\gamma n$ points from $P$ to $\frac{v_i}{||v_i||}$. 
Consequently, we can apply  Lemma~\ref{lem-outlier-general1-general} to sample a point from $P_{opt}\cap Q$, and apply Lemma~\ref{lem-outlier-general2-generalize} to estimate the value of $l_i$ for each candidate solution $\frac{v_i}{||v_i||}$. Overall, we can improve the running time of the algorithm of~\cite{ding2015random} to be independent of $n$. 

\begin{theorem}
\label{the-svm}
Let $(P, \gamma)$ be an instance of SVM with outliers. Given the parameters $\epsilon, \delta, \eta_1, \eta_2\in (0,1)$, there exists an algorithm that outputs a $(1-\epsilon, 1+O(\delta))$-approximation with probability $(1-\gamma)\big((1-\eta_1)(1-\eta_2)\frac{\delta}{3(1+\delta)}\big)^{z}$ where $z=O(\frac{D^2}{\epsilon\rho^2})$. The running time is $\tilde{O}(\frac{D^2}{\delta^2\gamma\epsilon^2\rho^2} d)$.

If one repeatedly runs the algorithm  $N=O\Big(\frac{1}{1-\gamma}\big(\frac{1}{1-\eta_1}(3+\frac{3}{\delta})\big)^{z}\Big)$ times with setting $\eta_2=O(\frac{1}{zN})$, with constant probability, the algorithm outputs a $(1-\epsilon,1+O(\delta))$-approximation solution.

\end{theorem}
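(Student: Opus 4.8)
The plan is to recognize one-class SVM with outliers as an instance of the MEX-with-outliers framework of Section~\ref{sec-ext}, with \emph{Gilbert Algorithm} (Algorithm~\ref{alg-gilbert}) playing the role of the iterative procedure~$\Gamma$, and then to port the linear-time algorithm of Ding and Xu~\cite{ding2015random} to sub-linear time exactly as Algorithm~\ref{alg-outlier} was ported to Algorithm~\ref{alg-outlier2}. Concretely, take $\mathcal{X}$ to be the closed half-spaces not containing the origin, with $c(x)=h_x/||h_x||$, $s(x)=1/||h_x||$, and $f(c(x),p)=-\langle p, h_x/||h_x||\rangle$ as defined above. The first step is to verify that $\mathcal{X}$ satisfies Properties~\ref{prop-1}, \ref{prop-2}, and~\ref{prop-4}: two half-spaces with the same center are $\{p:\langle p,c(x)\rangle\ge 1/s(x)\}$, and the one with larger $s$ is the bigger set (Property~\ref{prop-1}); a point with smaller $f$ has larger inner product with $c(x)$ and hence lies deeper in $x$ (Property~\ref{prop-2}); and given a unit center $v$ and a point $p_0$ with $\langle p_0,v\rangle>0$, the smallest half-space $x(v,r)$ containing $p_0$ has $r_0=1/\langle p_0,v\rangle$ (Property~\ref{prop-4}). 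Since a $(1-\epsilon)$-approximation of the margin is a $\frac{1}{1-\epsilon}$-approximation with respect to $s(x)$, Definition~\ref{def-app-general} matches the usual SVM notion.

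Next I would observe that Gilbert's selection rule $p_i=\arg\min_{p\in P}\langle p,v_i\rangle/||v_i||$ picks precisely the point of $P$ that is farthest from the shape center $v_i/||v_i||$ under $f$, and that the set $Q$ used in the Ding--Xu linear-time variant is exactly the $(1+\delta)\gamma n$ farthest points of $P$ to $v_i/||v_i||$ under $f$; hence Claim~\ref{claim-prop} supplies the value $l_i$ with $P\setminus x(v_i/||v_i||,l_i)=Q$. Now apply Lemma~\ref{lem-outlier-general1-general}: sampling $n'=O(\frac{1}{\delta\gamma}\log\frac{1}{\eta_1})$ points and taking the $\frac{3}{2}(1+\delta)\gamma n'$ projection-minimizers among them, a uniformly random one lies in $P_{opt}\cap Q$ with probability at least $(1-\eta_1)\frac{\delta}{3(1+\delta)}$; feeding such a point into the Gilbert update keeps the per-step progress guarantee of Gilbert's algorithm \emph{when run on $P_{opt}$} (this is the SVM analogue of the remark after Theorem~\ref{the-newbc} and of Lemma~\ref{lem-outlier-2}), so after $z=O(D^2/(\epsilon\rho^2))$ rounds the current $v_i/||v_i||$ determines a hyperplane whose margin is at least $(1-\epsilon)$ of the optimum on a subset of size at least $\big(1-(1+\delta)\gamma\big)n$. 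To avoid the linear-time computation of a candidate's margin, apply Lemma~\ref{lem-outlier-general2-generalize}: sampling $n''=O(\frac{1}{\delta^2\gamma}\log\frac{1}{\eta_2})$ points and reading off the $\big((1+\delta)^2\gamma n''+1\big)$-th farthest one gives an estimate $\tilde l_i\in[\tilde l_i',l_i]$, so $x(v_i/||v_i||,\tilde l_i)$ still covers $\big(1-(1+O(\delta))\gamma\big)n$ points and is no larger than the exact candidate; picking the candidate minimizing $\tilde l_i$ (equivalently maximizing the estimated margin) then works. Multiplying the initial success probability $1-\gamma$ (the first sampled point lies in $P_{opt}$) by the $z$ per-round probabilities gives $(1-\gamma)\big((1-\eta_1)(1-\eta_2)\frac{\delta}{3(1+\delta)}\big)^{z}$; the running time bound follows by the same accounting as for Algorithm~\ref{alg-outlier2} in Section~\ref{sec-oulier-improve}, replacing the $\frac{2}{\epsilon}+1$ iterations by the $z$ iterations of Gilbert Algorithm and $n$ by $\max\{n',n''\}$, so it is independent of $n$. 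The boosting statement follows exactly as Corollary~\ref{cor-outlier2}: run $N=O\big(\frac{1}{1-\gamma}(\frac{1}{1-\eta_1}(3+\frac3\delta))^{z}\big)$ independent executions with $\eta_2=O(\frac{1}{zN})$ and take the best candidate; finally rescale $\delta$ to turn $(1-\epsilon,1+O(\delta))$ into $(1-\epsilon,1+\delta)$, and the two-class case reduces to this by the standard reduction at the end of the section.

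\textbf{Main obstacle.} I expect two delicate points. First, the verification of Property~\ref{prop-4}: here the size and the distance function are reciprocal/negated versions of the natural geometric quantities, so $x(v,r)$ exists only for points $p_0$ with $\langle p_0,v\rangle>0$; this has to be argued to hold for the inlier points relevant to the iterations, using that the optimal hyperplane strictly separates $P_{opt}$ from the origin and that $v_i$ stays in the convex hull of the points selected so far. Second, one must check that the per-iteration improvement bound of Gilbert's algorithm---originally stated for the exact minimizer $p_i$ and driven by the potential $E=D^2/\rho^2$---survives when $p_i$ is replaced by an arbitrary member of $P_{opt}\cap Q$; this should follow from the same ``as long as the chosen point has large enough deviation'' argument used for MEB, but it must be traced through the polytope-distance potential rather than merely restated. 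The remaining bookkeeping (Chernoff bounds, the union bound over the $O(zN)$ candidates, and the $\delta$-rescaling) is identical to the MEB case and poses no difficulty.
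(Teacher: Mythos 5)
Your proposal takes essentially the same approach as the paper: cast one-class SVM with outliers as an MEX-with-outliers instance (half-spaces avoiding the origin, with $c(x)=h_x/\|h_x\|$, $s(x)=1/\|h_x\|$, $f(c(x),p)=-\langle p,h_x/\|h_x\|\rangle$), use Gilbert's algorithm as the iterative procedure $\Gamma$, verify Properties~\ref{prop-1}, \ref{prop-2}, \ref{prop-4}, then invoke the generalized Lemmas~\ref{lem-outlier-general1-general} and~\ref{lem-outlier-general2-generalize} and boost as in Corollary~\ref{cor-outlier2}. The two obstacles you flag — that Property~\ref{prop-4} requires $\langle p_0,v\rangle>0$, and that Gilbert's per-iteration improvement must be traced through when the exact minimizer $p_i$ is replaced by an arbitrary point of $P_{opt}\cap Q$ — are genuine subtleties that the paper handles only implicitly, via the feasibility assumption and an appeal to~\cite{ding2015random}, so your more careful accounting is a refinement rather than a divergence.
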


\subsection{Two-class SVM with Outliers}
 
 Below is the definition of the two-class SVM with outliers problem proposed in~\cite{ding2015random}. 

\begin{definition} [Two-class SVM with Outliers]
\label{def-svm2}
 Given two point sets $P_1$ and $P_2$ in $\mathbb{R}^d$ and two small parameters $\gamma_1, \gamma_2\in (0,1)$, the two-class SVM with outliers problem is to find two subsets $P'_1\subset P_1$ and $P'_2\subset P_2$ with $|P'_1|=(1-\gamma_1)|P_1|$ and $|P'_2|=(1-\gamma_2)|P_2|$, and a margin separating  $P'_1$ and $P'_2$, such that the width of the margin is maximized.   
 %
\end{definition}
We use $P^{opt}_1$ and $P^{opt}_2$, where $|P^{opt}_1|=(1-\gamma_1)|P_1|$ and $|P^{opt}_2|=(1-\gamma_2)|P_2|$, to denote the subsets of $P_1$ and $P_2$ which are separated by the optimal margin. 
The ordinary two-class SVM (without outliers) problem is equivalent to computing the polytope distance between the origin $o$ and $\mathcal{M}(P_1, P_2)$, where  $\mathcal{M}(P_1, P_2)$ is the Minkowski difference of $P_1$ and $P_2$~\cite{GJ09}. Note that it is not necessary to compute the set  $\mathcal{M}(P_1, P_2)$ explicitly. Instead, Algorithm~\ref{alg-gilbert} only needs to select one point from $\mathcal{M}(P_1, P_2)$ in each iteration, and overall the running time is still linear in the input size. To deal with two-class SVM with outliers, Ding and Xu~\cite{ding2015random} slightly modified their algorithm for the case of one-class. In each iteration, it considers two subsets $Q_1\subset P_1$ and $Q_2\subset P_2$, which respectively consist of points having the $(1+\delta)\gamma_1|P_1|$ smallest  projection distances among all points in $P_{1}$ and the $(1+\delta)\gamma_2|P_2|$ largest  projection distances among all points in $P_2$ on the vector $v_i$; then, the algorithm randomly selects two points $p^1_i\in Q_1$ and $p^2_i\in Q_2$, and their difference vector $p^2_i-p^1_i$ will serve as the role of $p_i$ in Step 2(a) of  Algorithm~\ref{alg-gilbert} to update the current solution $v_i$. This approach yields a $(1-\epsilon, 1+\delta)$-approximation in linear time.

\begin{figure}[]
\vspace{-0.1in}
   \centering
  \includegraphics[height=2in]{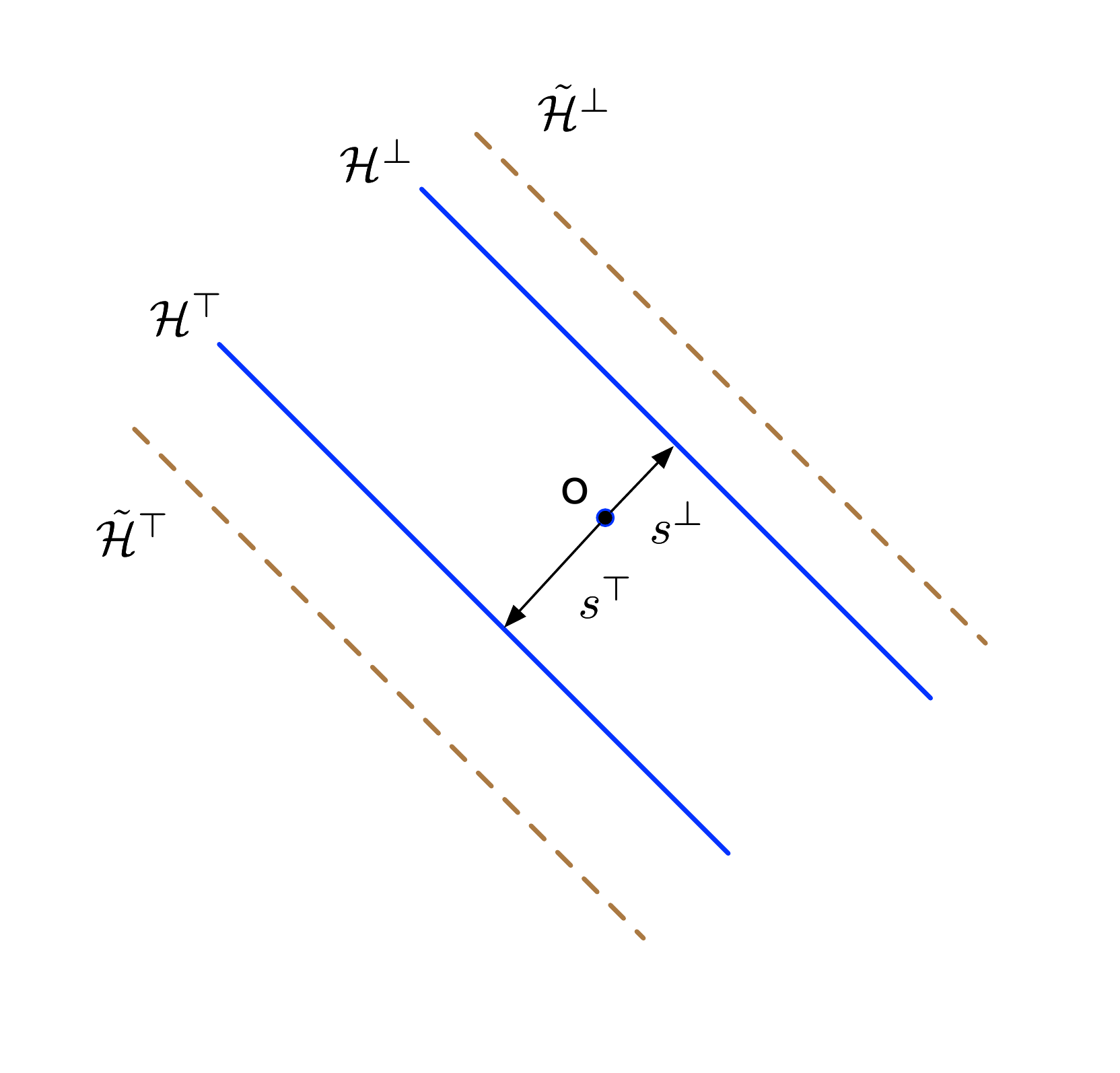}
  \vspace{-0.1in}
      \caption{An illustration for two-class SVM. The distances from $o$ to $\mathcal{H}^\perp$ and $\mathcal{H}^\top$ are $s^\perp$ and $s^\top$, respectively. The hyperplanes  $\tilde{\mathcal{H}}^\perp$ and $\tilde{\mathcal{H}}^\top$ are the estimations of $\mathcal{H}^\perp$ and $\mathcal{H}^\top$, and the distances from $o$ to them are $\tilde{s}^\perp$ and $\tilde{s}^\top$ respectively.}
  \label{fig-twosvm}
  \vspace{-0.1in}
\end{figure}

To improve the algorithm to be sub-linear, we need several modifications on our previous idea for the case of one-class. First, we change the distance function to be:
\[ f(p, c)= \left\{ \begin{array}{ll}
       -\langle p, \frac{h_x}{||h_x||}\rangle & \mbox{if $p\in P_1$};\\
       \langle p, \frac{h_x}{||h_x||}\rangle& \mbox{if $p\in P_2$}.\end{array} \right. \] 
By using this new distance function, we can apply Lemma~\ref{lem-outlier-general1-general} to obtain the points $p^1_i\in Q_1\cap P^{opt}_1$ and $p^2_i\in Q_2\cap P^{opt}_2$ separately in sub-linear time. Given a vector ({\em i.e.,} candidate center) $\frac{v_i}{||v_i||}$, assume $\mathcal{H}^\perp$ and $\mathcal{H}^\top$ are the parallel hyperplanes orthogonal to $\frac{v_i}{||v_i||}$ that the margin formed by them separates $P'_1$ and $P'_2$, where $P'_1\subset P_1$ and $P'_2\subset P_2$ with $|P'_1|=(1-\gamma_1)|P_1|$ and $|P'_2|=(1-\gamma_2)|P_2|$. 
Without loss of generality, we assume that the origin $o$ is inside the margin. Suppose that the distances from $o$ to $\mathcal{H}^\perp$ and $\mathcal{H}^\top$ are $s^\perp$ and $s^\top$, respectively. Then, we obtain two shapes (closed half-spaces) $x^\perp=(-\frac{v_i}{||v_i||}, \frac{1}{s^\perp})$ and $x^\top=(\frac{v_i}{||v_i||}, \frac{1}{s^\top})$ with $P'_1\subset x^\perp$ and $P'_2\subset x^\top$. Consequently, we can apply  Lemma~\ref{lem-outlier-general2-generalize} twice to obtain two values $ \frac{1}{\tilde{s}^\perp}\leq \frac{1}{s^\perp}$ and $ \frac{1}{\tilde{s}^\top}\leq \frac{1}{s^\top}$ with $\Big| P_1\setminus x(-\frac{v_i}{||v_i||}, \frac{1}{\tilde{s}^\perp})\Big|\leq (1+O(\delta))\gamma_1|P_1|$ and $\Big| P_2\setminus x(\frac{v_i}{||v_i||}, \frac{1}{\tilde{s}^\top})\Big|\leq (1+O(\delta))\gamma_2|P_2|$. Therefore, we can use the value $\tilde{s}^\perp+\tilde{s}^\top$ as an estimation of $s^\perp+s^\top$. See Figure~\ref{fig-twosvm} for an illustration. Overall, we can achieve a $(1-\epsilon, 1+O(\delta))$-approximation in sub-linear time.

\end{document}